\newif\ifsicomp
\newif\ifgreyscale
\numberwithin{equation}{section}
\def\@footnotecolor{red}
\def\@footnotemark{%
    \leavevmode
    \ifhmode\edef\@x@sf{\the\spacefactor}\nobreak\fi
    \stepcounter{Hfootnote}%
    \global\let\Hy@saved@currentHref\@currentHref
    \hyper@makecurrent{Hfootnote}%
    \global\let\Hy@footnote@currentHref\@currentHref
    \global\let\@currentHref\Hy@saved@currentHref
    \hyper@linkstart{footnote}{\Hy@footnote@currentHref}%
    \@makefnmark
    \hyper@linkend
    \ifhmode\spacefactor\@x@sf\fi
    \relax
  }%
\renewcommand{\cref}{\Cref}
\newtheorem{theorem}{Theorem}[section]
\newtheorem{lemma}[theorem]{Lemma}
\newtheorem{definition}[theorem]{Definition}
\newtheorem{corollary}[theorem]{Corollary}
\newtheorem{claim}[theorem]{Claim}
\newtheorem{introtheorem}{Theorem}
\theoremstyle{definition}
\newtheorem{remark}[theorem]{Remark}
\newtheorem{openquestion}{Open question}
\newcommand{\eps}{\varepsilon}
\newcommand{\set}[1]{\left\{#1\right\}}
\newcommand{\bitset}{\{0,1\}}
\newcommand{\E}{\mathbb E}
\newcommand{\N}{\mathbb N}
\newcommand{\poly}{\mathrm{poly}}
\newcommand{\supp}{\mathrm{supp}}
\newcommand{\adist}{\bar{\Delta}}
\newcommand{\dist}{\Delta}
\newcommand{\sets}{\mathcal{S}}
\newcommand{\daisy}{\mathcal{D}}
\newcommand{\petals}{\mathcal{P}}
\newcommand{\tuples}{\mathcal{T}}
\newcommand{\ones}{\mathcal{O}}
\title{A Structural Theorem for Local Algorithms with Applications to Coding, Testing, and Verification\footnote{An extended abstract of this paper appeared in SODA 2021 as ``A Structural Theorem for Local Algorithms with Applications to Coding, Testing, and Privacy'' \cite{DGL21}.}}
\author{Marcel Dall'Agnol\thanks{Princeton University
(\email{dallagnol@cs.princeton.edu}).}
\and Tom Gur\thanks{University of Cambridge (\email{tom.gur@cl.cam.ac.uk
}). Supported by the UKRI Future Leaders Fellowship MR/S031545/1 and EPSRC New Horizons Grant EP/X018180/1.}
\and Oded Lachish\thanks{Birkbeck, University of London (\email{o.lachish@bbk.ac.uk}).}}
\title{A Structural Theorem for Local Algorithms \\ with Applications to Coding, Testing, and Verification\footnote{An extended abstract of this paper appeared in SODA 2021 as ``A Structural Theorem for Local Algorithms with Applications to Coding, Testing, and Privacy'' \cite{DGL21}.}}
\author{Marcel Dall'Agnol \\ University of Warwick \\ \texttt{msagnol@pm.me} \and
Tom Gur\thanks{Tom Gur is supported by the UKRI Future Leaders Fellowship MR/S031545/1 and EPSRC New Horizons Grant EP/X018180/1.} \\ University of Cambridge\\ \texttt{tom.gur@cl.cam.ac.uk}  \and
Oded Lachish \\ Birkbeck, University of London \\ \texttt{o.lachish@bbk.ac.uk}}
\date{}
\begin{document}
\maketitle
\thispagestyle{empty} 

\begin{abstract}
We prove a general structural theorem for a wide family of local algorithms, which includes property testers, local decoders, and PCPs of proximity. Namely, we show that the structure of every algorithm that makes $q$ adaptive queries and satisfies a natural robustness condition admits a sample-based algorithm with $n^{1- 1/O(q^2 \log^2 q)}$ sample complexity, following the definition of Goldreich and Ron (TOCT 2016). We prove that this transformation is nearly optimal. Our theorem also admits a scheme for constructing privacy-preserving local algorithms.

Using the unified view that our structural theorem provides, we obtain results regarding various types of local algorithms, including the following.
\begin{itemize}
\item We strengthen the state-of-the-art lower bound for relaxed locally decodable codes, obtaining an \emph{exponential} improvement on the dependency in query complexity; this resolves an open problem raised by Gur and Lachish (SICOMP 2021).

\item We show that any (constant-query) testable property admits a sample-based tester with sublinear sample complexity; this resolves a problem left open in a work of Fischer, Lachish, and Vasudev (FOCS 2015), bypassing an exponential blowup caused by previous techniques in the case of adaptive testers.

\item We prove that the known separation between proofs of proximity and testers is essentially maximal; this resolves a problem left open by Gur and Rothblum (ECCC 2013, Computational Complexity 2018) regarding sublinear-time delegation of computation.
\end{itemize}

Our techniques strongly rely on relaxed sunflower lemmas and the Hajnal–Szemer\'{e}di theorem.

\bigskip

\noindent\textbf{Keywords:} local algorithms, sample-based algorithms, coding theory, property testing,\\ adaptivity, sunflower lemmas.

\end{abstract}

\newpage
\thispagestyle{empty} 
\setcounter{tocdepth}{2}
\tableofcontents

\newpage

\pagenumbering{arabic} 
\section{Introduction}
Sublinear-time algorithms are central to the theory of algorithms and computational complexity. Moreover, with the surge of massive datasets in the last decade, understanding the power of computation in sublinear time rapidly becomes crucial for real-world applications. Indeed, in recent years this notion received a great deal of attention, and algorithms for a plethora of problems were studied extensively.

Since algorithms that run in sublinear time cannot even afford to read the entirety of their input, they are forced to make decisions based on a small local view of the input and are thus often referred to as \emph{local algorithms}.\footnote{This terminology makes explicit that the (sublinear) parameter of interest is the algorithm's query complexity, as opposed to, say, space (as in the streaming model) or time.} Prominent notions of local algorithms include \emph{property testers} \cite{RS96,GGR98}, which are probabilistic algorithms that solve approximate decision problems by only probing a minuscule portion of their input; \emph{locally decodable codes} (LDCs) \cite{KT00} and \emph{locally testable codes} (LTCs) \cite{GS06}, which are codes that admit algorithms that, using a small number of queries to their input, decode individual symbols and test the validity of the encoding, respectively; and \emph{probabilistically checkable proofs} (PCPs) \cite{FGLSS91,AS98,ALMSS98}, which are encodings of NP-proofs that can be verified by examining only (say) $3$ bits of the proof.

While the foregoing notions are often grouped under the umbrella term of local algorithms, they are in fact very distinct. Indeed, local algorithms perform fundamentally different tasks, such as testing, self-correcting, decoding, computing a local function, or verifying the correctness of a proof. Moreover, the tasks are often performed under different promises (e.g., proximity to a valid codeword in the case of LDCs, and deciding whether an object has a property or is far from having it in the case of property testing).

Nevertheless, despite the aforementioned diversity, one of our main \emph{conceptual} contributions is capturing a fundamental structural property that is common to all of algorithms above and beyond, which in turn implies sufficient structure for obtaining our main result. We build on work of Fischer, Lachish and Vasudev \cite{FLV15} as well as Gur and Lachish \cite{GL21}, which imply an essentially equivalent structure for \emph{non-adaptive} testers and local decoders, respectively; our generalisation captures both and extends beyond them to the adaptive setting, as well as to other classes of algorithms.

More specifically, we first formalise the notion of local algorithms in the natural way: we define them simply as probabilistic algorithms that compute some function $f(x)$, with high probability, by making a small number of queries to the input $x$. We then observe that, except for degenerate cases, having a promise on the input is necessary for algorithms that make a sublinear number of queries to it.

Finally, we formalise a natural robustness condition that captures this phenomenon and is shared by most reasonable interpretations of local algorithms.

\subsection{Robust local algorithms}
We say that a local algorithm is \emph{robust} if its output is stable under minor perturbations of the input.\footnote{Note that the notion of robustness is a priori orthogonal to locality; however, as robustness is arguably the main structural property of local algorithms, we restrict the discussion to their intersection.} To make the discussion more precise, we define a $(\rho_0,\rho_1)$\emph{-robust local algorithm} $M$ for computing a partial function $f \colon \mathcal{P} \to \bitset$ (where $\mathcal{P} \subset \bitset^n)$ as a local algorithm that satisfies the following: for every input $w$ that is $\rho_0$-close to $x$ (which may or may not belong to $\mathcal{P}$) such that $f(x)=0$, we have $M^w = 0$ with high probability; and, for every $w$ that is $\rho_1$-close to $x$ such that $f(x)=1$, we have $M^w = 1$ w.h.p. We remark that our results extend to larger alphabets (see \cref{sec:localalg}, where we formally define robust local algorithms).

We illustrate the expressivity of robust local algorithms via two examples: property testing and locally decodable codes. We remark that similarly, locally testable codes, locally correctable codes, relaxed LDCs, PCPs of proximity, and other notions can all be cast as robust local algorithms (see \cref{sec:PT,sec:codes-def,sec:PCP-def}).

\begin{figure}[h]
 \center
 \begin{subfigure}[t]{0.5\textwidth}
   \includegraphics[width=\textwidth]{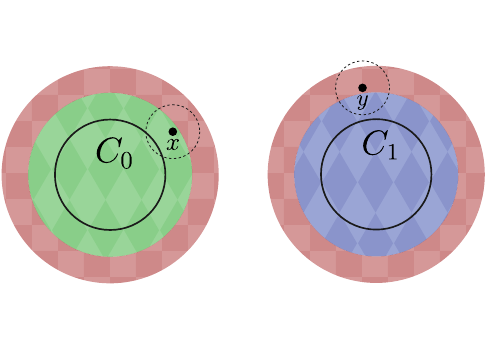}
  \caption{\small A local decoder for the code $C$ with decoding radius $\delta$. Codewords whose $i^\text{th}$ message bit equals 0 (resp. 1) for a fixed $i$ comprise $C_0$ (resp. $C_1$). The decoder is robust in the $\delta/2$-neighbourhood of $C$, with tiled \ifgreyscale\else blue and green \fi patterns. Inputs in the checkerboard \ifgreyscale\else red \fi area are within distance $\delta$ from $C$, but their $\delta/2$-neighbourhoods are not.}
  \label{fig:LDC}
 \end{subfigure}
 \hfill
 \begin{subfigure}[t]{0.4\textwidth}
   \includegraphics[width=\textwidth]{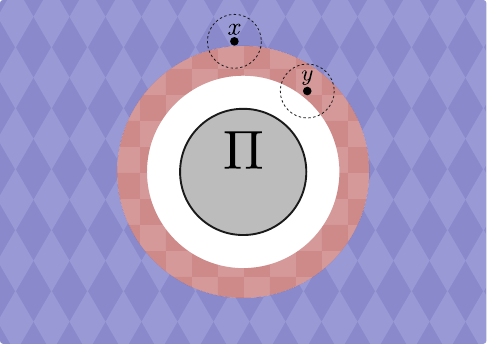}
  \caption{\small An $\eps$-tester for property $\Pi$. Inputs in the tiled \ifgreyscale\else blue \fi area are $2\eps$-far from $\Pi$, where the tester is robust. While inputs in the checkerboard \ifgreyscale\else red \fi area are rejected by the tester, this is not necessarily the case for their $\eps$\nobreakdash-neighbourhoods.}
  \label{fig:PT}
 \end{subfigure}
 \caption{Casting local decoders and property testers as robust local algorithms.}
\end{figure}

\paragraph{Locally decodable codes\ifsicomp\else.\fi} An LDC is a code that admits algorithms for decoding each individual bit of the message of a moderately corrupted codeword; that is, a code $C \colon \bitset^k \to \bitset^n$ with \emph{decoding radius} $\delta$ for which there exists a probabilistic algorithm $D$ that, given an index $i \in[k]$, queries a string $w$ promised to be $\delta$-close to a codeword $C(x)$ and outputs $D^{w}(i) = x_i$ with high probability.

Observe that $D$ can be viewed as a $(\delta/2,\delta/2)$-robust local algorithm for local decoding with respect to decoding radius $\delta/2$, that is, for the function $f \colon [k] \times \mathcal{P} \to \bitset$ where $\mathcal{P} = B_{\delta/2}(\Im C)$ is the $\delta/2$-neighbourhood of $C$.\footnote{Note that $f$ additionally receives a coordinate $i \in [k]$ as explicit input, which is allowed by the formal definition of robustness (see \cref{def:robust}).} This is because the $\delta/2$-neighbourhood of any point that is $\delta/2$-close to a codeword is still within the decoding radius, and thus the algorithm decodes the same value when given either a codeword or a string in its neighbourhood; see \cref{fig:LDC}.

\paragraph{Property testing\ifsicomp\else.\fi} Property testers are algorithms that solve approximate decision problems by only probing a minuscule part of their input, and are one of the most widely studied types of sublinear algorithms (see, e.g., the textbook \cite{G17a}).

An $\eps$-tester $T$ for a property $\Pi$ queries a string $x$ and, with high probability, outputs $1$ if $x \in \Pi$, and outputs $0$ if $x$ is $\eps$-far from $\Pi$. Here, unlike with local decoders, there is no robustness at all with respect to $1$-inputs:\footnote{Unless the tester is \emph{tolerant}: an $(\eps_1, \eps_2)$-tolerant tester is, by definition, $\eps_1$-robust with respect to its $1$-inputs.} we can only cast $T$ as an $(\eps,0)$-robust local algorithm for the function $f \colon \mathcal{P} \to \bitset$ where $\mathcal{P}$ is the union of $\Pi$ and the complement of its $2\eps$-neighbourhood. We refer to such robustness as \emph{one-sided}.

Note that while $T$ does not $\eps$-test $\Pi$ robustly, it is robust when viewed as a $2\eps$-tester: doubling the proximity parameter ensures the $\eps$-neighbourhood of each $0$-input is still rejected (see \cref{fig:PT}).

\ifsicomp\else\paragraph{}\fi By the previous discussion, our scope includes local algorithms that only exhibit \emph{one-sided} robustness. Accordingly, we define \emph{robust local algorithms} (without specifying the robustness parameters) as $(\rho_0,\rho_1)$-robust local algorithms where $\max\{\rho_0,\rho_1\} = \Omega(1)$.\footnote{Although \cref{infthm:main} assumes an algorithm satisfying this condition, we remark that a weaker one suffices. Supposing (without loss of generality) that $\rho_0 \geq \rho_1$, only \emph{a single input $x$} must imply $M^w = 0$ when $w$ is $\Omega(1)$-close to $x$; then the result follows even for $\rho_0 = \Theta(n^{-1/q}) = o(1)$, where $q$ is the query complexity of $M$.} We stress that while dealing with one-sided robustness is significantly more technically involved, our results also hold for this type of local algorithm.

\subsection{Main result}
\label{sec:intro:main}
In this work, we capture structural properties that are common to all robust local algorithms and leverage it to obtain a transformation that converts them into (uniform) \emph{sample-based algorithms}.

Sample-based algorithms are provided with uniformly distributed labeled samples, or alternatively, query each coordinate independently with some fixed probability. Adopting the latter perspective, the \emph{sample complexity} of such an algorithm is the expected number of coordinates that it samples.

 In the following, we use $n$ to denote the input size and assume the alphabet $\Sigma$ over which the input is defined is not too large (e.g., $\abs{\Sigma} \leq n^{1/q^4}$ suffices).

\begin{introtheorem}[\cref{thm:main}, informally stated]
\label{infthm:main}
    Every robust local algorithm with query complexity $q$ can be transformed into a \emph{sample-based} local algorithm with sample complexity $ n^{1- 1/O(q^2 \log^2 q)}$.
   \end{introtheorem}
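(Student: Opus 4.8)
The plan is to split the argument into two parts joined by robustness: a purely combinatorial structural theorem stating that the query distribution of any $q$-query $M$ has a \emph{daisy} shape — a small kernel of coordinates together with a family of pairwise almost-disjoint petals — and a probabilistic simulation lemma showing that such a daisy can be evaluated from random samples. By padding I assume $M$ makes exactly $q$ queries on every computation path, and I view $M$ as inducing, for each input $y$, a distribution over pairs $(Q(y,r),b(y,r))$, where $Q(y,r)\subseteq[n]$ with $|Q(y,r)|=q$ is the adaptively chosen query set and $b(y,r)\in\bitset$ the decision. I target sampling rate $p=n^{-\gamma}$ with $\gamma=1/\Theta(q^2\log^2 q)$, so the expected number of sampled coordinates is $pn=n^{1-1/\Theta(q^2\log^2 q)}$.

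First I would establish the structure. Using a relaxed sunflower lemma, I claim there is a kernel $K\subseteq[n]$ with $|K|=o(n)$ (concretely $|K|=n^{O(q\gamma)}$) such that for \emph{every} input $y$ the petals $P_r:=Q(y,r)\setminus K$ form a spread family: the graph on petals that connects intersecting petals has maximum degree $\Delta=\poly(q)$. The kernel is built by repeatedly moving into $K$ any coordinate queried with probability at least a threshold $\tau=n^{-\Theta(\gamma)}$; the subtlety is that for an adaptive $M$ this extraction must proceed level-by-level along the decision trees — after conditioning on the answers to coordinates already placed in $K$ — so that $|K|$ does not grow exponentially in $q$, and it is this step that costs the extra $\polylog q$ factor over the non-adaptive analysis of \cite{FLV15}. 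The quantitative heart is to choose $\tau$ small enough that the leftover petals are genuinely spread while keeping $|K|=o(n)$.

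Next I would build the sample-based algorithm $M_s$. It samples each coordinate independently with probability $p$, learning $S$ and $x|_S$; it then repeatedly draws a fresh random string $r$, simulates $M$ under randomness $r$ — answering a query to $i\in K$ with a fixed default (say $0$), answering a query to $i\in S$ with $x_i$, and aborting the run as soon as $M$ queries some $i\notin K\cup S$ — and outputs the majority decision over the non-aborting runs. A non-aborting run with randomness $r$ is exactly one with $Q(\hat x,r)\setminus K\subseteq S$, and on it the simulation faithfully computes $b(\hat x,r)$, where $\hat x$ agrees with $x$ off $K$ and is $0$ on $K$. Correctness has two ingredients. Since $\hat x$ is $(|K|/n)$-close to $x$ with $|K|/n=o(1)$ below the $\Omega(1)$ robustness radius, robustness gives $b(\hat x,r)=f(x)$ with high probability over $r$. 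And conditioning on non-abort must not flip the majority: here I invoke the Hajnal–Szemerédi theorem to equitably colour the degree-$\Delta$ petal-intersection graph with $\Delta+1$ colours, each class being a set of pairwise disjoint petals; within a class the survival events $\{P\subseteq S\}$ are mutually independent, each of probability about $p^q$, so a Chernoff bound shows the surviving $1$-petal fraction is within $o(1)$ of the true fraction in that class, and summing the near-equal class contributions shows the overall surviving majority equals that of $M^{\hat x}$, hence $f(x)$, with high probability. (One also checks each colour class is much larger than $p^{-q}=n^{q\gamma}$ so concentration applies; classes failing this carry negligible $r$-mass and are discarded, or were absorbed into $K$.)

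I expect the main obstacle to be the one-sided-robustness regime, where only $\max\{\rho_0,\rho_1\}=\Omega(1)$ holds: then the identity $b(\hat x,r)=f(x)$ used above is available only for one of the two output values, since resetting $K$ to a default can push a promised $1$-input out of the promise. Overcoming this requires refining the structural theorem so that the default assignment on $K$ is taken relative to a fixed robustly-$0$ anchor input (as in the weakening flagged in the footnote to \cref{infthm:main}) and reorganising the simulation and its analysis around that anchor — this is the technically heaviest part. A second, intertwined obstacle is the joint optimisation of $\tau$, $\Delta$, and $p=n^{-\gamma}$: adaptivity inflates both $|K|$ and $\Delta$ as functions of $q$ via the level-by-level extraction, and balancing the constraint $|K|=o(n)$ against the concentration requirement that colour classes exceed $n^{q\gamma}$ is precisely what forces $\gamma=1/\Theta(q^2\log^2 q)$ rather than the $1/\Theta(q^2)$ available non-adaptively.
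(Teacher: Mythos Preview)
Your plan captures the right ingredients (kernel plus spread petals, Hajnal--Szemer\'{e}di for concentration) and would essentially go through for \emph{two-sided} robustness, but the one-sided case---which the theorem must cover---has a genuine gap. The fixed-default simulation relies on $M^{\hat x}$ outputting $f(x)$ with high probability, which robustness guarantees only on the robust side; for a non-robust $1$-input $x$, altering the $o(n)$ kernel coordinates gives no control whatsoever over $M^{\hat x}$, and your anchor fix does not help since $\hat x$ is close to $x$, not to the anchor. The paper's mechanism is fundamentally different: instead of one default, the sample-based algorithm \emph{enumerates over all assignments $\kappa$ to the kernel} and outputs $1$ if \emph{some} $\kappa$ makes the sampled $1$-votes cross a threshold. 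For a $1$-input the assignment $\kappa=x|_K$ yields a faithful simulation (together with a volume lemma showing $\daisy_0$ and a capped part of $\daisy_1$ carry negligible weight); for a robust $0$-input, \emph{every} $\kappa$ keeps $x_\kappa$ within the robustness radius, so one must survive a union bound over $|\Sigma|^{|K|}$ assignments.

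That union bound is precisely why a single kernel cannot work and why your claim $\Delta=\poly(q)$ is off. The concentration error per assignment is $\exp(-\Theta(np^j/t))$ where $j$ is the petal size and $t$ the petal-intersection bound; to beat $|\Sigma|^{|K|}$ one needs $|K|$ tied to $j$. The paper therefore proves a \emph{daisy partition lemma} that stratifies all decision-tree branches (flattened into one multi-collection of $q$-sets, with no level-by-level conditioning) by petal size $j\in\{0,\ldots,q\}$, yielding daisies $\daisy_j$ with kernels $|K_j|=O(n^{1-j/q})$ and intersection bound $t\approx n^{(j-1)/q}$---polynomial in $n$, not in $q$. The enumeration and a $j$-dependent threshold $\tau_j$ are run separately for each $j$, and it is the balancing of $|K_j|$ against $np^j/t$ across all $j$ that forces $p=n^{-1/\Theta(q^2)}$; the extra $\log^2 q$ arises from first amplifying the error rate to $\Theta(1/q)$ (which costs a $\log q$ factor in query complexity), not from the kernel extraction.
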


We stress that the robustness in \cref{infthm:main} is only required on \emph{part of the input space} (i.e., need only be one-sided); indeed, otherwise the structural properties captured become much more restrictive (and are not shared by, e.g., property testers).

Moreover, we prove that the transformation in \cref{infthm:main} is optimal up to a quadratic factor in the dependency on the query complexity; that is, $q$-query robust local algorithms cannot be transformed into sample-based algorithms with sample complexity $n^{1- 1/o(q)}$ (see \cref{sec:map} for a more precise statement).

Our proof of \cref{infthm:main} strongly relies on analysing the query behaviour of robust local algorithms by partitioning their local views into relaxed sunflowers and using volume lemmas that are implied by their robustness. We build on the Hajnal–Szemer\'{e}di theorem to analyse sampling from relaxed sunflowers (see \cref{sec:techniques} for a detailed technical overview).

By the generality of our definition, we can apply \cref{infthm:main} to a wide family of well-studied algorithms such as locally testable codes, locally decodable and correctable codes, relaxed LDCs, universal LTCs, PCPs of proximity, and more (see \cref{sec:localalg} for details on how to cast these algorithms as robust local algorithms).

We note that \cite{FLV15} and \cite{GL21} obtain an essentially equivalent transformation for testers and decoders, respectively, through ``lossy'' versions of our relaxed sunflower lemmas (they extract \emph{one} relaxed sunflower from the local views, rather than partition them) that applies to non-adaptive algorithms; by a trivial transformation from adaptive to non-adaptive algorithms that incurs an exponential increase in the query complexity, these previous works show transformations whose sample-based algorithms have complexity $n^{1-1/\exp(q)}$, which we reduce to $n^{1-1/\poly(q)}$  (indeed, as far down as $n^{1-1/\Tilde{O}(q^2)}$).

\paragraph{Motivation\ifsicomp\else.\fi} The notion of sample-based local algorithms was first defined in \cite{GGR98}, and its systematic study was initiated by Goldreich and Ron \cite{GR16}. This is an intrinsically interesting model of computation with practical potential, as obtaining random samples is much easier than implementing full query access to a large input. Moreover, sample-based local algorithms admit schemes for multi-computation (i.e., simultaneously computing multiple functions of the input using the same queries), as well as schemes for private local computation, on which we elaborate below.

\paragraph{Private local computation\ifsicomp\else.\fi} We wish to highlight an interesting application of \cref{infthm:main} to privacy. Suppose a client wishes to compute a function of data that is stored on a server, e.g., decode a symbol of a code or test whether the data has a certain property. Typically, the query behaviour of a local algorithm may leak information on which function the client attempts to compute. However, since sample-based algorithms probe their input uniformly, they can be used to compute the desired function without revealing any information on which function was computed, e.g., which coordinate was decoded or which property was tested.

Furthermore, since \cref{infthm:main} transforms any robust local algorithm into a sample-based local algorithm that probes its input \emph{obliviously to the function it computes},\footnote{We stress that such obliviousness does not \emph{not} correspond to a differential privacy guarantee.} then (after standard error-reduction) we can apply \cref{infthm:main} to \emph{many algorithms at once} and reuse the samples to obtain a local algorithm that computes multiple functions at the same time (see \cref{sec:apps:pt}).

\subsection{Applications}
\label{sec:apps}
We proceed to our main applications, which range over three fields of study: coding theory, property testing and probabilistic proof systems. 
We remark that our testing application follows as a direct corollary of \cref{infthm:main}, but adapting it to decoders and proof systems require additional arguments.

\subsubsection{Relaxed locally decodable codes}
Locally decodable codes play an important role in contemporary coding theory. Since their systematic study was initiated by Katz and Trevisan \cite{KT00}, they made a profound impact on several areas of theoretical computer science (see, e.g., \cite{T04,Y12,KS17} and references therein), and led to practical applications in distributed storage \cite{HSXOCGLY12}.

Despite the success and attention that LDCs received in the last two decades, the best construction of $O(1)$-query LDCs has \emph{super-polynomial} blocklength (cf.\ \cite{E12}, building on \cite{Y08}). This barrier led to the study of \emph{relaxed} LDCs, which were introduced in the foundational work of Ben-Sasson, Goldreich, Harsha, Sudan, and Vadhan \cite{BGHSV06}. In a recent line of research, relaxed LDCs and variants thereof were applied to PCPs \cite{MR10,DH13,RR20a}, property testing \cite{CG18}, data structures \cite{CGW13} and probabilistic proof systems (e.g., \cite{GR17,DGRT22}).

Loosely speaking, this relaxation allows the local decoder to abort on a small fraction of the indices, yet crucially, still avoid errors. More accurately, a \emph{relaxed} LDC $C \colon \bitset^k \to \bitset^n$ with  \emph{decoding radius} $\delta$ is a code that admits a probabilistic algorithm, a decoder, which on input index $i \in [k]$ makes queries to a string $w \in\bitset^n$ that is $\delta$-close to a codeword $C(x)$ and satisfies the following:
(1) if the input is a valid codeword (i.e., $w = C(x)$), the decoder outputs $x_i$ with high probability; and
(2) otherwise, with high probability, the decoder must either output $x_i$ or a special ``abort'' symbol $\bot$, indicating it detected an error and is unable to decode.\footnote{As observed in \cite{BGHSV06}, these two conditions suffice for obtaining a third condition which guarantees that the decoder only outputs $\bot$ on an arbitrarily small fraction of the coordinates.}

This seemingly modest relaxation allows for obtaining dramatically stronger parameters. Indeed, \cite{BGHSV06} constructed a $q$-query relaxed LDC with blocklength $n = k^{1+1/\Omega(\sqrt{q})}$, and raised the problem of whether it is possible to obtain better rates; the best known construction, obtained in recent work of Asadi and Shinkar \cite{AS21}, improves it to $n = k^{1+1/\Omega(q)}$.  We stress that \emph{proving lower bounds on relaxed LDCs is significantly harder than on standard LDCs}, and indeed, the first non-trivial lower bound was only recently obtained in \cite{GL21}, which shows that, to obtain query complexity $q$, the code must have blocklength
\begin{equation*}
    n \geq k^{1 + \frac{1}{O\left(2^{2q} \cdot \log^2 q\right)}} \:.
\end{equation*}
This shows that $O(1)$-query relaxed LDCs cannot obtain quasilinear length, a question raised in \cite{G11}, but still leaves exponential room for improvement in the dependency on query complexity (note that even for $q=O(1)$ this strongly affects the asymptotic behaviour). Indeed, eliminating this exponential dependency was raised as the main open problem in \cite{GL21}.

Fortunately, our technical framework is general enough to capture \emph{relaxed} LDCs as well, and in turn, our main application for coding theory resolves the aforementioned open problem by obtaining a lower bound with an \emph{exponentially} better dependency on the query complexity. Along the way, we also extend the lower bound to hold for relaxed decoders with \emph{two-sided error}, resolving another problem left open in \cite{GL21}.

\begin{introtheorem}[\cref{cor:rldcsample}, informally stated]
  \label{infthm:rldc}
  Any relaxed LDC $C\colon \bitset^k \rightarrow \bitset^n$ with constant decoding radius $\delta$ and query complexity $q$ must have blocklength at least
    \begin{equation*}
		n \geq k^{1 + \frac{1}{O\left(q^2 \log^2 q\right)}} \:.
	\end{equation*}
\end{introtheorem}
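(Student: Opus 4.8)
The plan is to derive \cref{infthm:rldc} from \cref{infthm:main} in two steps: first cast a relaxed decoder as a family of \emph{one-sided} robust local algorithms and invoke the structural theorem to obtain a sample-based relaxed decoder; then prove a matching lower bound on the sample complexity of any sample-based relaxed decoder by a short information-theoretic argument. The exponential saving over \cite{GL20} is then automatic, as the $2^{\Theta(q)}$ factor there came from reducing the (adaptive) decoder to a non-adaptive one, a step that \cref{infthm:main} makes unnecessary.

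\emph{From a relaxed decoder to a sample-based one.} Fix a relaxed LDC $C\colon\bitset^k\to\bitset^n$ of constant decoding radius $\delta$ with a decoder $D$ of query complexity $q$, and for each $i\in[k]$ let $M_i$ be the local algorithm that runs $D^w(i)$ and outputs $1$ if $D$ outputs $1$ and $0$ otherwise (collapsing both $0$ and $\bot$ to $0$). I claim $M_i$ is a $(\delta,0)$-robust local algorithm of query complexity $q$ for the function $C(x)\mapsto x_i$: by the codeword-decoding condition $M_i^{C(x)}=x_i$ w.h.p., while if $w$ is $\delta$-close to a codeword $C(x)$ with $x_i=0$ then the relaxed-decoding condition forces $D^w(i)\in\{0,\bot\}$ and hence $M_i^w=0$. (There is no robustness on the $1$-side, exactly as with property testers -- precisely the one-sided regime that \cref{infthm:main} handles -- and the two-sided-error variant needs no change, since the robust-local-algorithm definition only asks for correctness with high probability.) This casting uses the elementary fact that a relaxed LDC of decoding radius $\delta$ has relative distance at least $\delta$ (otherwise feeding $D$ a codeword that is $\delta$-close to $C(x)$ but differs from it in coordinate $i$ violates one of the decoding conditions), which is also what guarantees that the transformed algorithm stays correct on codewords. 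Applying \cref{thm:main} to each $M_i$ therefore produces a sample-based local algorithm $M'_i$ of sample complexity $s:=n^{1-1/O(q^2\log^2 q)}$ with ${M'_i}^{C(x)}=x_i$ w.h.p. Since, by the obliviousness noted after \cref{infthm:main}, every $M'_i$ samples each coordinate independently with the same probability, we may feed all of them one random sample set $S\subseteq[n]$ with $\E|S|=s$, so that from $(S,C(x)|_S)$ one recovers each bit $x_i$ with probability at least $2/3$.

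\emph{A lower bound for sample-based relaxed decoders.} Take $x$ uniform in $\bitset^k$, let $S$ and $Y:=(S,C(x)|_S)$ be as above, and write $h$ for the binary entropy function. For each $i$, $M'_i$ predicts $x_i$ from $Y$ with error at most $1/3$, so Fano's inequality (applied to the MAP predictor from $Y$) gives $H(x_i\mid Y)\le h(1/3)$; by subadditivity $H(x\mid Y)\le k\cdot h(1/3)<0.92\,k$, hence $I(x;Y)\ge 0.08\,k$. On the other hand $I(x;Y)\le H(Y)\le H(S)+\E|S| = n\cdot h(s/n)+s = O\!\left(s\log(n/s)\right)$, because given $S$ the sampled symbols carry at most $|S|$ bits of entropy. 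Comparing the two estimates yields $s=\Omega(k/\log n)$.

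\emph{Conclusion.} Substituting $s=n^{1-1/O(q^2\log^2 q)}$ into $s=\Omega(k/\log n)$ and absorbing the logarithmic factor into the exponent (using $\log n=n^{o(1)}$) gives $n\ge k^{1+1/O(q^2\log^2 q)}$, as claimed. I expect the only genuinely delicate point to be the first step: verifying that a three-valued relaxed decoder really casts as a \emph{bona fide} one-sided robust local algorithm, and that the sample-based output of \cref{thm:main} is still correct on codewords -- this is where the distance property enters, and also where the ``additional arguments'' alluded to in the introduction for coding-theoretic applications reside. Granting that, the information-theoretic lower bound for sample-based relaxed decoders is routine.
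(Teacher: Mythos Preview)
Your proposal is correct and in fact takes a slicker route than the paper on the first step. The paper does \emph{not} collapse $\{0,\bot\}\to 0$; instead it extends the framework to ``relaxed local algorithms'' that may output $\bot$, proves a separate variant of the main theorem for such algorithms (\cref{thm:rldcmain}, which runs the construction once for each output symbol $b\in\{0,1\}$ and combines the two), and obtains as an intermediate result a genuine sample-based \emph{relaxed} decoder that works even on corrupted inputs (\cref{cor:rldcsample}). Your reduction sidesteps this entire detour: by folding $\bot$ into $0$ you land directly in the one-sided setting of \cref{thm:main}, and since the lower bound only needs correctness on \emph{codewords}, you lose nothing. What you do give up is the intermediate corollary---your $M'_i$ is not a relaxed decoder on $B_\delta(C)$, only a decoder on $C$---so the paper's approach buys a stronger structural statement, while yours buys brevity.

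On the second step the two arguments are again equivalent but packaged differently: the paper first amplifies the error of the sample-based decoder to $1/(3k)$ (at a $\log k$ cost in sample complexity, which is absorbed), then union-bounds over all $k$ indices to obtain a global decoder, and finishes with the one-line entropy count $k\le n^{1-1/t}$. Your Fano-based argument avoids the amplification step and is cleaner; note, incidentally, that since $S$ is independent of $x$ you even have $I(x;Y)\le H(C(x)|_S\mid S)\le \E|S|=s$ directly, so the $H(S)$ term you carry is unnecessary and the $\log n$ slack can be dropped.
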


This also makes significant progress towards resolving the problem due to \cite{BGHSV06}, by narrowing the gap between lower and upper bounds to merely a quadratic factor.

\subsubsection{Property testing}
\label{sec:apps:pt}

Recall that a standard $\eps$-tester for a property $\Pi$ is endowed with the ability to make queries, accepting inputs in $\Pi$ and rejecting inputs that are $\eps$-far from $\Pi$. An alternative definition, first given in \cite{GGR98}, only provides the tester with uniformly distributed labeled samples (or, equivalently, with uniform and independent queries to each coordinate). Such algorithms are called \emph{sample-based testers} and have received attention recently \cite{GR16,FGL14,BGS15,FLV15,CFSS17,BMR19a,BMR19}.\footnote{More accurately, these are \emph{uniform} sample-based testers (in contrast to the \cite{BGS15} tester, which queries coordinates in a random subspace). We adopt the original terminology of \cite{GR16} for simplicity.}

As an immediate corollary of \cref{infthm:main}, we obtain that any constant-query testable property (up to $\sqrt[5]{\log n}$-query, in fact) admits a sample-based tester with sublinear sample complexity. This resolves a problem left open in a work of Fischer, Lachish, and Vasudev \cite{FLV15}, who showed a similar statement for non-adaptive testers, by extending it to the adaptive setting.\footnote{\cite{FLV15} applies to adaptive testers with an exponential blowup in query complexity, which our result avoids.}

\begin{introtheorem}[\cref{cor:test}, informally stated]
	\label{infthm:PT}
  Any property $\Pi \subseteq \bitset^n$ that is $\eps$-testable with $q$ queries admits a sample-based $2\eps$-tester with sample complexity $n^{1-1/O(q^2 \log^2 q)}$.
\end{introtheorem}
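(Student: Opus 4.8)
The plan is to obtain \cref{infthm:PT} directly from \cref{infthm:main}; all that is needed is to present an $\eps$-tester in the guise of a robust local algorithm with matching parameters. Fix a property $\Pi \subseteq \bitset^n$ and a $q$-query $\eps$-tester $T$ for it. I would define the partial function $f$, on the promise set $\Pi \cup \set{x : \dist(x,\Pi) \ge 2\eps}$, by $f(x)=1$ for $x\in\Pi$ and $f(x)=0$ for $\dist(x,\Pi)\ge 2\eps$, and then check that $T$ is an $(\eps,0)$-robust local algorithm for $f$ in the sense of \cref{sec:PT}. On $1$-inputs no robustness is demanded ($\rho_1=0$), and $T$ accepts every $x\in\Pi$ with high probability because it is an $\eps$-tester. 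On $0$-inputs, if $w$ is $\eps$-close to some $x$ with $\dist(x,\Pi)\ge 2\eps$ then $\dist(w,\Pi)\ge\eps$ by the triangle inequality, so $w$ is $\eps$-far from $\Pi$ and $T^w=0$ with high probability; this is exactly $(\eps,0)$-robustness. Since $\max\{\eps,0\}=\eps=\Omega(1)$, $T$ qualifies as a robust local algorithm (for vanishing $\eps$ one instead appeals to the weaker single-input condition recorded in the footnote to the definition of robust local algorithms).

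Applying \cref{infthm:main} to $T$ then yields a sample-based local algorithm $S$ computing $f$ with sample complexity $n^{1-1/O(q^2\log^2 q)}$. Unpacking the definitions, $S$ accepts every $x\in\Pi$ and rejects every $x$ with $\dist(x,\Pi)\ge 2\eps$, each with high probability, while reading only a uniformly random set of coordinates of expected size $n^{1-1/O(q^2\log^2 q)}$ --- i.e., $S$ is precisely a sample-based $2\eps$-tester for $\Pi$ with the claimed sample complexity, giving \cref{infthm:PT} (equivalently \cref{cor:test}). The restriction to $q = o(\sqrt{\log n/\log\log n})$ quoted in the text is just the range in which the exponent $1-1/O(q^2\log^2 q)$ stays bounded away from $1$, so that the sample complexity is genuinely sublinear; the same reduction handles larger alphabets verbatim.

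I expect essentially no obstacle inside this corollary: every difficulty is pushed into \cref{infthm:main}. The one subtlety worth flagging --- and the reason \cref{infthm:main} is stated so as to allow it --- is that the reduction above supplies only \emph{one-sided} robustness ($\rho_1 = 0$), exactly the regime that makes the underlying structural analysis hard; here we merely need to record that it suffices. The two remaining points of care are both immediate from the definitions: the factor-$2$ loss in the proximity parameter is forced because we must reject the $\eps$-neighbourhood of every $2\eps$-far input (cf. \cref{fig:PT}), and $f$ is a promise problem, so $S$ is required to be correct only on $\Pi$ together with the $2\eps$-far inputs, which is all the sample-based tester guarantee asks for.
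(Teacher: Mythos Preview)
Your proposal is correct and follows essentially the same argument as the paper: cast the $\eps$-tester as an $(\eps,0)$-robust local algorithm for the partial function that outputs $1$ on $\Pi$ and $0$ on inputs $2\eps$-far from $\Pi$ (this is exactly \cref{clm:testrla}, whose proof is the triangle-inequality step you wrote out), then apply \cref{infthm:main} and read off that the resulting sample-based algorithm is a $2\eps$-tester. The paper presents this as an immediate corollary with no additional work, just as you do.
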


This also admits an application to \emph{adaptive multi-testing}, where the goal is to simultaneously test a large number of properties. In \cref{sec:PT} we show that as a corollary of \cref{infthm:PT} we can multi-test, with sublinear query complexity, exponentially many properties, namely $k = \exp\left(n^{1/\omega(q^2 \log^2q)}\right)$, that are each testable with $q$ adaptive queries.

\subsubsection{Proofs of proximity}

Proofs of proximity \cite{RVW13} are probabilistic proof systems that allow for delegation of computation in sublinear time. They were studied extensively in recent years, finding applications in cryptography with both computational \cite{KR15} and information-theoretic security \cite{RRR21,BRV18}.

In the non-interactive setting, we have a verifier that wishes to ascertain the validity of a given statement, using a short (sublinearly long) explicitly given proof, and a sublinear number of queries to its input. Since the verifier cannot even read the entire input, it is only required to reject inputs that are far from being valid. Thus, the verifier is only assured of the proximity of the statement to a correct one. Such proof systems can be viewed as the NP (or, more accurately, MA) analogue of property testing, and are referred to as MA proofs of proximity (MAPs).

As such, one of the most fundamental questions regarding proofs of proximity is their relative strength in comparison to testers; that is, whether verifying a proof for an approximate decision problem can be done significantly more efficiently than solving it. One of the main results in \cite{GR18} is that this can indeed be the case. Namely, there exists a property $\Pi$ which: (1) admits an adaptive MAP with proof length $O(\log n)$ and query complexity $q=O(1)$; and (2) requires at least $n^{1-1/\Omega(q)}$ queries to be tested without access to a proof.\footnote{We remark that the bound in \cite{GR18} is stated in a slightly weaker form. However, it is straightforward to see that the proof achieves the bound stated above. See \cref{sec:map}.}

In \cref{sec:map} we use \cref{infthm:main} to show that the foregoing separation is nearly tight.
\begin{introtheorem}[\cref{thm:mappt}, informally stated]
\label{infthm:mappt}
	Any property $\Pi \subseteq \bitset^n$ that admits an adaptive MAP with query complexity $q$ and proof length $p$ also admits a tester with query complexity $p \cdot n^{1-1/O(q^2 \log^2q)}$.
\end{introtheorem}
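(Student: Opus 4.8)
The plan is to derive the tester from \cref{infthm:main} by fixing the proof string and then enumerating over all proof strings while reusing samples. Recall that an $\eps$-MAP for a property $\Pi$ is a verifier $V$ that, given a proof $\pi \in \bitset^p$ and query access to an input $x \in \bitset^n$, makes $q$ adaptive queries and satisfies: if $x \in \Pi$ then $\Pr[V^x(\pi)=1] \ge 2/3$ for some $\pi$; and if $x$ is $\eps$-far from $\Pi$ then $\Pr[V^x(\pi)=1] \le 1/3$ for every $\pi$. Fix a proof string $\pi$ and consider $V_\pi \coloneqq V(\cdot,\pi)$. This is an ordinary $q$-query $\eps$-tester for $\Pi$, and---exactly as in the discussion preceding \cref{infthm:PT}, where $\eps$-testers are cast as $(\eps,0)$-robust local algorithms for $2\eps$-testing---if $w$ is $\eps$-close to an input $x$ that is $2\eps$-far from $\Pi$, then $w$ is itself $\eps$-far from $\Pi$ and hence rejected by $V_\pi$ with high probability. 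Thus each $V_\pi$ is an $(\eps,0)$-robust local algorithm for $2\eps$-testing $\Pi$, and since $\eps = \Omega(1)$, \cref{infthm:main} applies: it yields, for every $\pi$, a sample-based algorithm $\widetilde V_\pi$ with sample complexity $s \coloneqq n^{1-1/O(q^2\log^2 q)}$ that accepts every $x\in\Pi$ and rejects every $x$ that is $2\eps$-far from $\Pi$, each with probability at least $2/3$. We crucially use here that \cref{infthm:main} applies to \emph{adaptive} algorithms, as MAP verifiers are adaptive.

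A naive tester would run $\widetilde V_\pi$ for each of the $2^p$ proof strings $\pi$ using independent samples, costing $2^p \cdot s$ queries---useless already for $p = \Theta(\log n)$. Instead, we exploit the obliviousness property stressed after \cref{infthm:main}: the transformation outputs an algorithm that queries its input uniformly and \emph{independently of the function being computed}, hence independently of $\pi$. Therefore a single sample set suffices for all $2^p$ invocations. The tester $T$ on input $x$ thus: (i) draws one oblivious sample set $S$ of size $O(p)\cdot s$; (ii) for each $\pi \in \bitset^p$, runs on $S$ the error-reduced algorithm $\widetilde V_\pi'$ obtained by taking the majority vote of $O(p)$ independent copies of $\widetilde V_\pi$, each using $s$ fresh samples from $S$, which drives the error below $2^{-2p}$; and (iii) accepts if and only if $\widetilde V_\pi'$ accepts for some $\pi$. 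The query complexity is $O(p)\cdot s = p \cdot n^{1-1/O(q^2\log^2 q)}$.

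Correctness is a union bound. If $x \in \Pi$, let $\pi^\ast$ be a proof with $\Pr[V^x(\pi^\ast)=1]\ge 2/3$; then $\widetilde V_{\pi^\ast}'$ accepts with probability at least $1-2^{-2p}\ge 2/3$, so $T$ accepts. If $x$ is $2\eps$-far from $\Pi$, then for every $\pi$ the algorithm $\widetilde V_\pi'$ rejects with probability at least $1-2^{-2p}$, so by a union bound over the $2^p$ proof strings all of them reject simultaneously with probability at least $1 - 2^p\cdot 2^{-2p} = 1 - 2^{-p} \ge 2/3$, whence $T$ rejects. Hence $T$ is a sample-based $2\eps$-tester for $\Pi$ with the stated query complexity.

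The only delicate point I anticipate is the bookkeeping around sample reuse: one must verify that the algorithm produced by \cref{infthm:main} genuinely samples its coordinates according to a distribution that does not depend on the computed function (so that a single $S$ is a valid sample set for every $\widetilde V_\pi$ at once), and that the standard majority-vote amplification can be carried out while preserving this obliviousness. Both follow from the construction underlying \cref{infthm:main}, but should be made explicit; everything else is the union bound above.
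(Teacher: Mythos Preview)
Your approach is essentially the paper's: view the MAP as a family of testers indexed by proof strings, apply \cref{infthm:main} to each, reuse the (oblivious) samples across all of them, amplify to error $2^{-\Theta(p)}$, and union-bound. The paper formalises the first step via \cref{clm:mapequiv} (MAPs as coverings by \emph{partial} testers) and then proceeds exactly as you do.

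There is one genuine slip you should fix. For a fixed proof string $\pi$, the verifier $V_\pi$ is \emph{not} ``an ordinary $q$-query $\eps$-tester for $\Pi$'': completeness of a MAP only guarantees that \emph{some} $\pi$ works for each $x\in\Pi$, so $V_\pi$ may reject many $x\in\Pi$. Consequently $V_\pi$ is not a local algorithm for the function ``$f(x)=1$ iff $x\in\Pi$, $f(x)=0$ iff $x$ is $2\eps$-far from $\Pi$'', and \cref{infthm:main} does not apply with that $f$. The correct move (which the paper makes explicit) is to let $\Pi_\pi$ be the set of inputs that $V_\pi$ accepts with high probability; then $V_\pi$ is an $(\eps,0)$-robust local algorithm for the partial function $f_\pi$ with $f_\pi(x)=1$ on $\Pi_\pi$ and $f_\pi(x)=0$ on $\overline{B_{2\eps}(\Pi)}$, and \cref{infthm:main} yields a sample-based $\widetilde V_\pi$ that accepts $\Pi_\pi$ (not all of $\Pi$) and rejects inputs $2\eps$-far from $\Pi$. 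Your correctness argument already uses exactly this: you pick $\pi^\ast$ such that $x\in\Pi_{\pi^\ast}$. So only the two sentences claiming $V_\pi$ tests $\Pi$ and $\widetilde V_\pi$ ``accepts every $x\in\Pi$'' need to be rewritten; the rest is fine and matches the paper.
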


Interestingly, we remark that we rely on \cref{infthm:mappt} to prove the (near) optimality of \cref{infthm:main} (see \cref{sec:map} for details).

\subsection{Discussion and open questions}
\label{sec:open}
Our work leaves several interesting directions and open problems that we wish to highlight. Firstly, we stress that our structural theorem is extremely general, and indeed the robustness condition that induces the structure required by \cref{infthm:main} appears to hold for most reasonable interpretations of robust local algorithms. While we gave applications to coding theory, property testing, and probabilistic proof systems, it would be interesting to see whether our framework (or a further generalisation of it) could imply applications to other families of local algorithms, such as PAC learners, local computation algorithms (LCAs), and beyond.

\begin{openquestion}
	Can \cref{infthm:main} and the framework of robust local algorithms be used to obtain query-to-sample transformations for PAC learners and LCAs?
\end{openquestion}

One promising direction that we did not explore is on rate lower bounds on PCPs of proximity (PCPPs). Such bounds are notoriously hard to get, and indeed the only such bounds we are aware of are those in \cite{BHLM09}, which are restricted to special setting of $3$-query PCPPs. We remark that our framework captures PCPPs, and that in light of the rate lower bounds it allowed us to obtain for relaxed LDCs, it seems feasible to obtain rate lower bounds on PCPPs as well.

\begin{openquestion}
	Can we obtain rate lower bounds on $q$-query PCPs of proximity for $q>3$?
\end{openquestion}

Another interesting question involves the optimality of our transformation. Recall that \cref{infthm:main} transforms $q$-query robust local algorithms into sample-based local algorithms with sample complexity $n^{1- 1/O(q^2 \log^2 q)}$, whereas in \cref{sec:map} we show that any such transformation must yield an algorithm with sample complexity $n^{1- 1/\Omega(q)}$. This still leaves a quadratic gap in the dependency on query complexity. We remark that closing this gap could lead to fully resolving an open question raised in \cite{BGHSV06} regarding the power of relaxed LDCs.

\begin{openquestion}
	What is the optimal sample complexity obtained by a transformation from robust local algorithms to sample-based local algorithms?
\end{openquestion}

Note, moreover, that we focus on query (or sample) complexities and provide a computationally inefficient transformation, iterating over exponentially many input strings; the \emph{computational} cost of such transformations is an interesting problem in its own regard.

\begin{openquestion}
	Are there efficient transformations from robust to sample-based algorithms?
\end{openquestion}

Finally, in \cref{sec:intro:main} we discuss an application of our main result to privacy-preserving local computation, where one can compute one out of a large collection of functions without revealing information regarding which function was computed.\footnote{Borrowing terminology from zero-knowledge, we can call this notion of privacy  \emph{perfect} -- as opposed to statistical or computational -- as the query pattern is completely independent from the choice of function.}
While \cref{infthm:main} implies such a scheme that only requires probing the input in sublinear locations, the number of probes is quite high. Moreover, by the near-tightness of our result, we cannot expect a significant improvement of this scheme.

Nevertheless, we find it very interesting to explore whether for structured families of functions (as, for example, admitted by the canonical tester for dense graphs in \cite{GT03}), or for statistical or computational notions of privacy (i.e., where the distributions of queries obtained from each function are statistically close, or indistinguishable to a computationally bounded adversary), the query complexity of this scheme can be significantly reduced.

\begin{openquestion}[Private testing of small families]
	Do there exist schemes for private local computation with small query complexity?
\end{openquestion}

\subsection*{Organisation}
The rest of the paper is organised as follows.
In \cref{sec:techniques}, we provide a high-level technical overview of the proof of our main result and its applications.
In \cref{sec:preliminaries}, we briefly discuss the preliminaries for the technical sections.
In \cref{sec:localalg}, we present our definition of robust local algorithms and show how to cast various types of algorithms in this framework.
In \cref{sec:lemmas}, we provide an arsenal of technical tools, including relaxed sunflower lemmas and a sampling lemma that builds on the Hajnal–Szemer\'{e}di theorem.
In \cref{sec:proof}, we use the foregoing tools to prove \cref{infthm:main}.
Finally, in \cref{sec:applications}, we derive our applications to coding theory, property testing, and proofs of proximity.

\section{Technical overview}
\label{sec:techniques}

In this section, we outline the techniques used and developed in the course of proving \cref{infthm:main} and its applications. Our techniques build on and simplify ideas from \cite{FLV15,GL21}, but are significantly more general and technically involved, and in particular, offer novel insight regarding adaptivity in local algorithms.

Our starting point, which we outline in \cref{sec:techniques:nonadaptive}, generalises the techniques of \cite{GL21} (which are, in turn, inspired by \cite{FLV15}) to the setting of robust local algorithms. Then, in \cref{sec:techniques:challenge}, we identify a key technical bottleneck in previous works: \emph{adaptivity}. We discuss the fundamental challenges that adaptivity imposes, and in \cref{sec:techniques:adaptive-approach} we present our strategy for meeting these challenges and the tools that we develop for dealing with them, as well as describe our construction. Subsequently, in \cref{sec:techniques:analysis}, we provide an outline of the analysis of our construction, which relies on the Hajnal–Szemer\'{e}di theorem to sample from structured set systems we call \emph{daisies}. 

\paragraph{The setting\ifsicomp\else.\fi}
Recall that our goal is to transform a robust (query-based) local algorithm into a sample-based algorithm with sublinear sample complexity. Towards this end, let $M$ be a $(\rho_{0},\rho_{1})$\nobreakdash-robust local algorithm for computing a function $f \colon \bitset^{n} \to \bitset$.\footnote{In general, the function $f$ may depend on an explicitly given parameter (e.g., an index for decoding in the case of relaxed LDCs), but for simplicity of notation, we omit this parameter in the technical overview.
} Since we also need to deal with \emph{one-sided robustness}, assume without loss of generality that $\rho_{1}=0$ and $\rho \coloneqq \rho_0 = \Omega(1)$. Recall that the algorithm $M$ receives query access to a string $x \in \bitset^n$, flips at most $r$ random coins, makes at most $q$ queries to this string and outputs $f(x) \in \bitset$ with probability at least $1 - \sigma$.

For simplicity of exposition, we assume that the error rate is $\sigma = \Theta(1/q)$, the query complexity is constant ($q=O(1)$), and the randomness complexity $r$ is bounded by $\log(n) + O(1)$. We remark that the analysis trivially extends to non-constant values of $q$, and that we can achieve the other assumptions via simple transformations, which we provide in \cref{sec:preprocessing}, at the cost of logarithmic factors in $q$. In the following, our aim is to construct a \emph{sample-based} local algorithm $N$ for computing the function $f$, with sample complexity $O\big(n^{1 - 1/2q^2}\big) = n^{1 - 1/O(q^2)}$.

\subsection{The relaxed sunflowers method}
\label{sec:techniques:nonadaptive}
As a warm-up, we first suppose that the algorithm $M$ is \emph{non-adaptive}. (This section gives an overview of the techniques of \cite{GL21}, which suffice in the non-adaptive case.) Then we can simply represent $M$ as a distribution $\mu$ over a collection of query sets $\mathcal{S}$, where each $S\in\mathcal{S}$ is a subset of $[n]$ of size $q$, and predicates $\set{f_S\colon \bitset^q \to \bitset}_{S \in \mathcal{S}}$, as follows. The algorithm $M$ draws a set $S \in \mathcal{S}$ according to $\mu$, queries $S$, obtains the \emph{local view} $x_{|S}$ (i.e., $x$ restricted to the coordinates in $S$), and outputs $f_{S}(x_{|S})$.

Consider an algorithm $N$ that samples each coordinate of the string $x$ independently with probability $p=1/n^{1/2q^2}$ (and aborts in the rare event that this exceeds the desired sample complexity).\footnote{This choice of $p$ will be made clear in \cref{sec:techniques:analysis}; see \cref{foot:sampleprob}.} Naively, we would have liked $N^x$ to emulate an invocation of the algorithm $M$ by sampling the restriction of $x$ to a query set $S \sim \mu$.

Indeed, if the distribution $\mu$ is ``well spread'', the probability of obtaining such a local view of $M$ is high. Suppose, for instance, that all of the query sets are pairwise disjoint. In this case, the probability of $N$ sampling any particular local view is $p^q$, and we expect $N$ to obtain $\Omega(p^q n) = \Omega(n^{1-1/2q})$ local views (recall that the support size of $\mu$, i.e., the number of query sets, is $O(n)$ by our assumption of $\log n + O(1)$ randomness complexity). However, if $\mu$ is concentrated on a small number of coordinates, it is highly unlikely that $N$ will obtain a local view of $M$. For example, if $M$ queries the first coordinate of $x$ with probability $1$, then we can obtain a local view of $M$ with probability at most $p$, which is negligible.

Fortunately, we can capitalise on the robustness condition to deal with this problem. We first illustrate how to do so for an easy special case, and then deal with the general setting.

\paragraph{Special case: sunflower query set\ifsicomp\else.\fi} Suppose that $\mu$ is concentrated on a small coordinate set $K$ and is otherwise disjoint, i.e., the support of $\mu$ is a \emph{sunflower} with kernel $K$ of size at most $\rho n$; see \cref{fig:sunflower}. Since the query sets are disjoint outside of $K$, by the discussion above we will sample many sets except for the coordinates in $K$ (i.e., sample the petals of the sunflower). Recall that if $x$ is such that $f(x)=0$, then the $(\rho,0)$-robust algorithm $M$ outputs $0$, with high probability, on any input $y$ that is $\rho$-close to $x$. Thus, even if we arbitrarily assign values to $K$ and use them to complete sampled petals into full local views, we can emulate an invocation of $M$ that will output as it would on $x$.

If \emph{all inputs} in the promise of $M$ were robust (as is the case for LDCs, but \emph{not} for testers, relaxed LDCs,\footnote{The type of robustness that relaxed LDCs admit is slightly more subtle, since it deals with a larger alphabet that allows for outputting $\bot$. See discussion in \cref{sec:rldc}.} and PCPPs), then the above would suffice. However, recall that we are not ensured robustness when $x$ is such that $f(x)=1$.
To deal with that, we can enumerate over all possible assignments to the kernel $K$, considering the local views obtained by completing sampled petals into full local views by using each kernel assignment to fill in the values that were not sampled. Observe that: (1) when the input $x$ is a $1$-input and $N$ considers the kernel assignment that coincides with $x$, a majority of local views (a fraction of at least $1 - \sigma$) will lead $M^x$ to output $1$; and (2), when $x$ is a $0$-input, a minority of local views (a fraction of at most $\sigma$) will lead $M^x$ to output $1$ \emph{under any kernel assignment}.

The sample-based algorithm $N$ thus outputs $1$ if and only if it sees, for some kernel assignment, a majority of local views that lead $M$ to output 1. Recall that there is asymmetry in the robustness of $M$ (while $0$-inputs are robust, $1$-inputs are not), which translates into asymmetric output conditions for $N$. Note, also, that correctness of this procedure for $0$-inputs requires that not even a single kernel assignment would lead $N$ to output incorrectly; but our assumption on the error rate ensures that the probability of sampling a majority of petals whose local views will lead to an error is sufficiently small to tolerate a union bound over all kernel assignments, as long as $\abs{K}$ is small enough.

\paragraph{General case: extracting a heavy daisy from the query sets\ifsicomp\else.\fi}

Of course, the combinatorial structure of the query sets of a local algorithm is not necessarily a sunflower and may involve many complex intersections. While we could use the \emph{sunflower lemma} to extract a sunflower from the collection of query sets, we stress that the size of such a sunflower is \emph{sublinear}, which is not enough in our setting (as we deal with constant error rate).

Nevertheless, we can exploit the robustness of $M$ even if its query sets only have the structure of a relaxed sunflower, referred to as a \emph{daisy}, with a small kernel. Loosely speaking, a $t$-daisy is a sunflower in which the kernel is not necessarily the intersection of all petals, but is rather a small subset such that every element outside the kernel is contained in at most $t$ petals;\footnote{In \cref{def:relaxed-sunflower}, a $t$-daisy has $t$ as a function from $[q]$ to $\N$ and allows for a tighter bound on the number of intersecting petals. We use the simplified definition of \cite{GL21} in this technical overview.} see \cref{fig:daisy} (and see \cref{sec:daisy} for a precise definition).

\begin{figure}
 \center
 \begin{subfigure}[t]{0.4\textwidth}
   \center
   \includegraphics[width=0.5\textwidth]{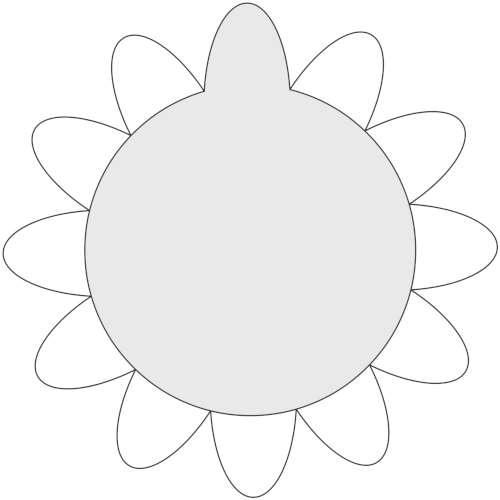}
  \caption{\small A \emph{sunflower} with one of its sets shaded. The intersection of any two sets results in the same set, the kernel.}
  \label{fig:sunflower}
 \end{subfigure}
 \hfill
 \begin{subfigure}[t]{0.5\textwidth}
   \center
   \includegraphics[width=0.4\textwidth]{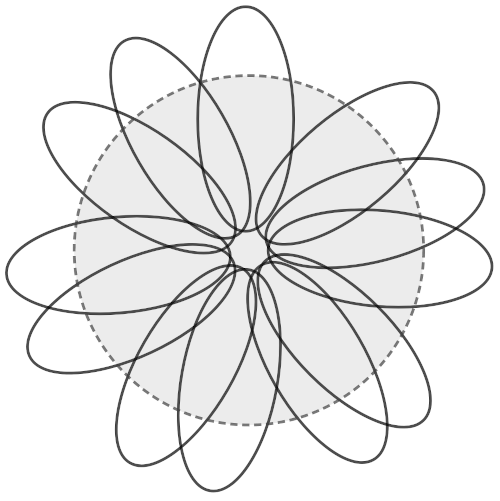}
  \caption{\small A \emph{daisy} with its kernel shaded, whose boundary is the dashed line. Outside the kernel each point is covered by a bounded number of petals.}
  \label{fig:daisy}
 \end{subfigure}
 \caption{Sunflowers and daisies.}
\end{figure}

Using a \emph{daisy lemma} \cite{FLV15,GL21}, we can extract from the query sets (the support of $\mu$) of the robust local algorithm $M$ a $t$-daisy $\daisy$  with $t$ roughly equal to $n^{i/q}$ and a kernel $K$ of size roughly $n^{1-i/q}$, where $i \in [q]$ bounds the size of the petals of $\daisy$. Moreover, the \emph{weight} $\mu(\daisy) = \sum_{S \in \daisy} \mu(S)$ is significantly larger than the error rate $\sigma$ of $M$ (recall that we assumed a sufficiently small $\sigma = \Theta(1/q)$). Thus, even if the daisy contains all local views that lead to an error, their total weight would still be small with respect to that of local views leading to a correct decision; hence, the query sets in the daisy $\daisy$ well-approximate the behaviour of $M$, and we can disregard the sets in the support of $\mu$ that do not belong to $\daisy$ at the cost of a negligible increase to the error rate.

Crucially, the intersection bound $t$ implies that \emph{sampling a daisy is similar to sampling a sunflower}: since petals do not intersect heavily, with high probability many of them are fully queried (as is the case with sunflowers). The bound on $\abs{K}$, on the other hand, allows us to implement the sampling-based algorithm we discussed for the sunflower case, except with respect to a daisy. The kernel is sufficiently small so that the output of $M$ is unchanged under any assignment to $K$, and suffices to tolerate a union bound when considering all possible assignments to $K$.

It follows that the daisy $\daisy$ provides enough ``sunflower-like'' structure for the sample-based algorithm $N$ defined previously to succeed, with high probability, when it only considers the query sets in $\daisy$ and enumerates over all assignments to its kernel.

\subsection{The challenge of adaptivity}
\label{sec:techniques:challenge}

Let us now attempt to apply the transformation laid out in the previous section to a robust local algorithm $M$ that makes $q$ \emph{adaptive} queries. In this case, $M$ may choose to query distinct coordinates depending on the answers to its previous queries, and thus \emph{there is no single distribution $\mu$} that captures its query behaviour.

Observe that now, rather than inducing a distribution on sets, the algorithm $M$ induces a distribution over \emph{decision trees} of depth $q$, as the behaviour of a randomised query-based algorithm $M^x$ can be described by choosing a decision tree according to its random string, then performing the adaptive queries according to the evaluation of that tree on the input $x \in \bitset^n$. By our assumption on the randomness complexity of $M$, this distribution is supported on $\Theta(n)$ decision trees. Note that for any \emph{fixed} input $x$, the decision tree collapses to a path, and hence the distribution over decision trees induces a distribution over query sets, which we denote $\mu_x$ (see \cref{fig:dec-tree}).

A naive way of transitioning from decision trees to sets is by querying all of the branches of each decision tree. Alas, doing so would increase the query complexity of $M$ exponentially from $q$ to (more than) $2^q$, which would in turn lead to a sample-based algorithm with a much larger sample complexity than necessary. Thus, we need to deal with the far more involved structure induced by distributions over decision trees, which imposes significant technical challenges. For starters, since our technical framework inherently relies on a combinatorial characterisation of algorithms, we first need to find a method of transitioning from decision trees to (multi-)sets without increasing the query complexity of the local algorithm $M$.

To this end, a key idea is to enumerate over all random strings and their corresponding decision trees, and extract all $q$-sets (i.e., sets of size $q$) corresponding to each branch of each tree. This leaves us with a combinatorial \emph{multi-set} $\sets$ (as multiple random strings may lead to the same decision tree, and branches of distinct decision trees may query the same set) with $\Theta(2^q \cdot n) = \Theta(n)$ query sets, of size $q$ each, corresponding to all possible query sets induced by all possible input strings.\footnote{We remark that this treatment of multi-sets allows us to significantly simplify the preparation for combinatorial analysis that was used in previous works involving sunflowers and daisies.} Note that $\sets$ contains the elements of the support of $\mu_x$ for all inputs $x \in \bitset^n$ and that, for any fixed input $x$, the vast majority of these query sets may not be relevant to this input: each $S \in \sets \setminus \supp(\mu_x)$ corresponds to a branch of a decision tree that the bits of $x$ would have not led to query.

\begin{figure}
  \center
  \begin{tikzpicture}
    \node at (0,0) (tree) {\includegraphics{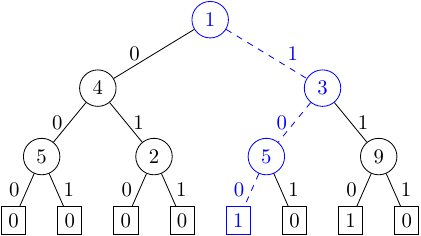}};
    \node at (6,0) (path1) {\includegraphics{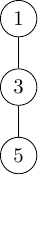}};
    \node at (-6,0) (path2) {\includegraphics{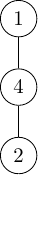}};
    \node (A) [above=0.4 of tree.east] {};
    \node (B) [above=0.4 of path1.west] {};
    \draw[->] (A) -- (B)
    node[above, pos=0.5] {$\begin{array}{c}x_1 = 1\\x_3 = 0\end{array}$};
    \node (C) [above=0.4 of tree.west] {};
    \node (D) [above=0.4 of path2.east] {};
    \draw[->] (C) -- (D)
    node[above, pos=0.5] {$\begin{array}{c}x_1 = 0\\x_4 = 1\end{array}$};
  \end{tikzpicture}
  \caption{\small Decision tree of a 3-local algorithm. When the input $x$ is such that $x_1 = 1$, $x_3 = 0$ and $x_5 = 0$, the branch highlighted in blue (and dashed) queries $\set{1,3,5}$ and outputs 1. When $x_1 = 0$ and $x_4 = 1$, this tree \emph{induces} the query set $\set{1,2,4}$; when $x_1 = 1$ and $x_3 = 0$, it induces the set $\set{1,3,5}$. This ``collapsing'' of the query behaviour is illustrated on either side of the tree.}
  \label{fig:dec-tree}
\end{figure}

This already poses a significant challenge to our approach, as we would have liked to extract a heavy daisy $\daisy$ from the collection $\sets$ which well-approximates the query sets of $M$ \emph{independently of any input}. However, it could be the case that the sets that are relevant to an input $x$ (i.e., $\supp(\mu_x)$) induce a completely different daisy (with potentially different kernels over which we'll need to enumerate) than the relevant sets for a different input $y$ that differs from $x$ on the values in the kernel, and so it is not clear at all that there exists a single daisy that well-approximates the query behaviour of the adaptive algorithm $M$ for all inputs.

Furthermore, the above also causes problems with the kernel enumeration process. For each assignment $\kappa$ to the kernel $K$, denote by $x_\kappa \in \bitset^n$ the word that takes the values of $\kappa$ in $K$ and the values of $x$ outside of $K$. Recall that the crux of our approach is to simulate executions of $M^{x_\kappa}$, for each kernel assignment $\kappa$, using the values of the sampled petals and plugging in the kernel assignment to complete these petals into local views (assignments to full query sets). Hence, since relevant sets corresponding to different kernel assignments may be distinct, it is unclear how to rule according to the local views that each of them induce.

We overcome these challenges in the next section with a more sophisticated extraction of daisies that, crucially, \emph{does not discard any query sets} of the adaptive algorithm $M$. Specifically, we will partition the (multi)-collection of all possible query sets into a collection of daisies and \emph{simultaneously analyse all daisies in the partition} to capture the adaptive behaviour of the algorithm.

\subsection{Capturing adaptivity in daisy partitions}
\label{sec:techniques:adaptive-approach}

Relying on techniques from \cite{FLV15,GL21}, we can not only extract a single heavy daisy, but rather \emph{partition} a (multi-)collection of query sets into a family of daisies, with strong structural properties on which we can capitalise. This allows us to apply our combinatorial machinery without dependency on a particular input, and \emph{analyse all daisies simultaneously}.

\paragraph{Daisy partition lemma\ifsicomp\else.\fi}
A refinement of the daisy lemma in \cite{GL21}, which we call a \emph{daisy partition lemma} (\cref{thm:daisy-partition}), partitions a multi-set $\sets$ of $q$-sets into $q + 1$ daisies $\set{\daisy_i : 0 \leq i \leq q}$ (see \cref{fig:daisy-partition}) with the following structural properties.
\begin{enumerate}
  \item $\daisy_1$ is a $n^{1/q}$-daisy, and for $i > 1$, each $\daisy_i$ is a $t$-daisy with $t = n^{(i-1)/q}$;
  \item The kernel $K_0$ of $\daisy_0$ coincides with that of $\daisy_1$, and, for $i > 0$, the kernel $K_i$ of $\daisy_i$ satisfies $\abs{K_i} \leq q\abs{\sets}\cdot n^{-i/q}$;
  \item The petal $S \setminus K_i$ of every $S \in \daisy_i$ has size exactly $i$.
\end{enumerate}
Moreover, the kernels form an incidence chain $K_q = \varnothing \subseteq K_{q-1} \subseteq \cdots \subseteq K_1 = K_0$. Note that $\daisy_0$ is vacuously a $t$-daisy for any $t$, since its petals are empty; and that our assumption on the randomness complexity of $M$ implies $\abs{K_i} = O(n^{1-i/q})$ when $i > 0$.

\begin{figure}[ht]
 \center
 \begin{subfigure}[t]{0.45\textwidth}
   \center
   \includegraphics[width=0.7\textwidth]{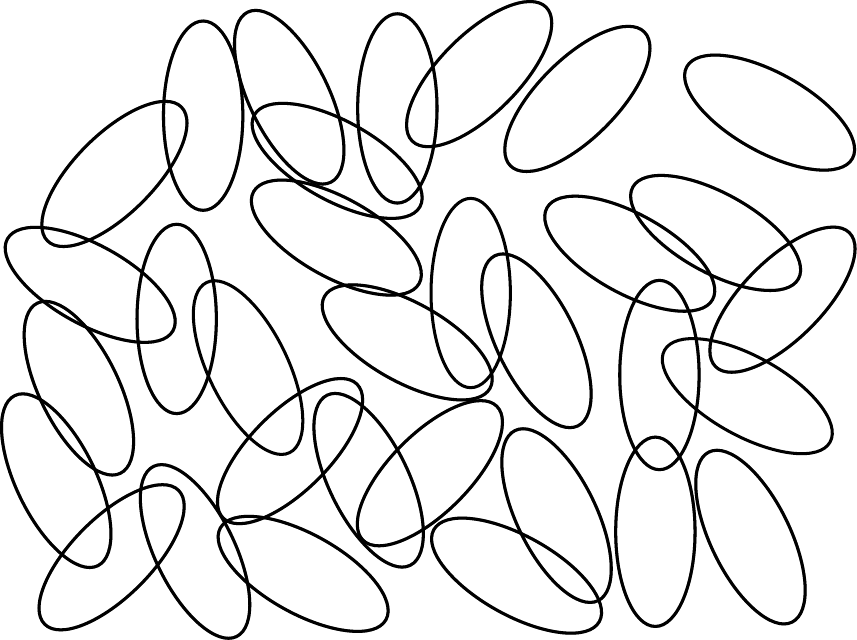}
  \caption{\small A collection $\sets$ of $3$-sets before being partitioned.}
  \label{fig:collection}
 \end{subfigure}
 \hfill
 \begin{subfigure}[t]{0.45\textwidth}
   \center
   \includegraphics[width=0.7\textwidth]{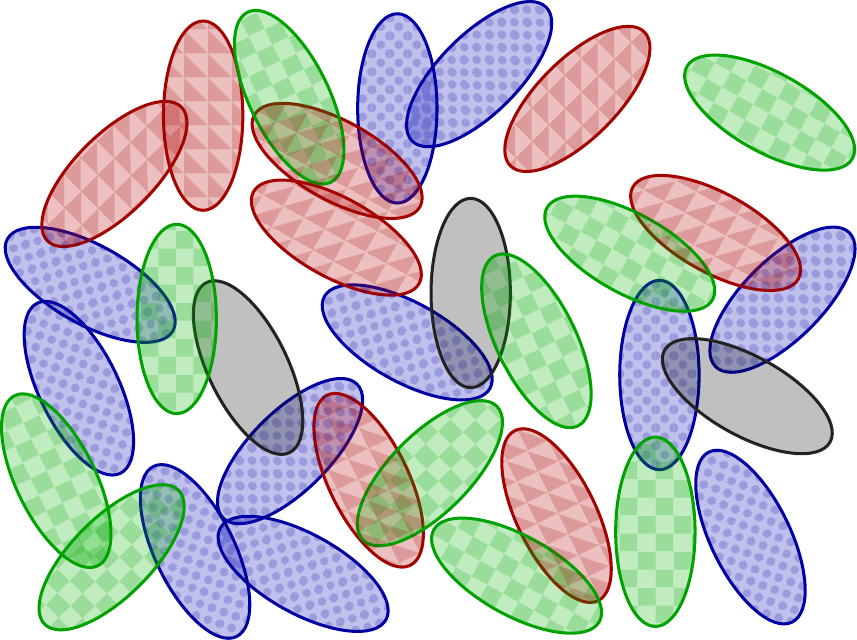}
  \caption{\small The collection $\sets$ partitioned into 4 daisies: $\daisy_0$ (shaded flat\ifgreyscale\else, grey\fi), $\daisy_1$ (checkerboard\ifgreyscale\else, green\fi), $\daisy_2$ (dotted\ifgreyscale\else, blue\fi) and $\daisy_3$ (triangle pattern\ifgreyscale\else, red\fi).}
  \label{fig:collection-partitioned}
 \end{subfigure}\\
 \center
 \begin{subfigure}[t]{0.22\textwidth}
   \center
   \includegraphics[width=\textwidth]{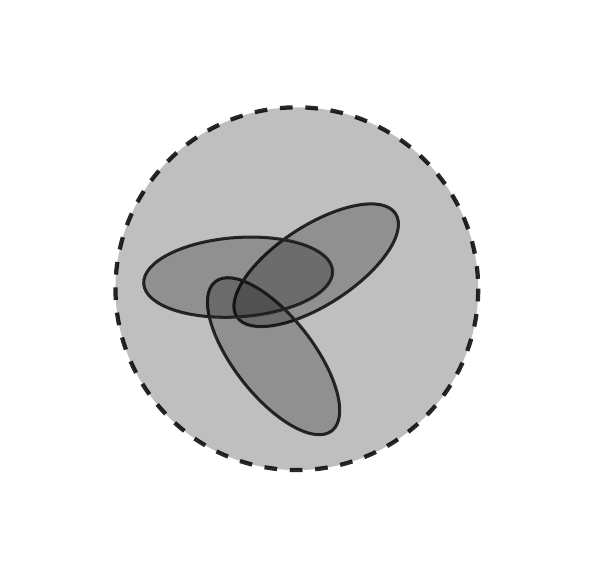}
  \caption{\small $\daisy_0$, whose sets are entirely contained in the kernel $K_0$.}
  \label{fig:daisy0}
 \end{subfigure}
 \hfill
 \begin{subfigure}[t]{0.22\textwidth}
   \center
   \includegraphics[width=\textwidth]{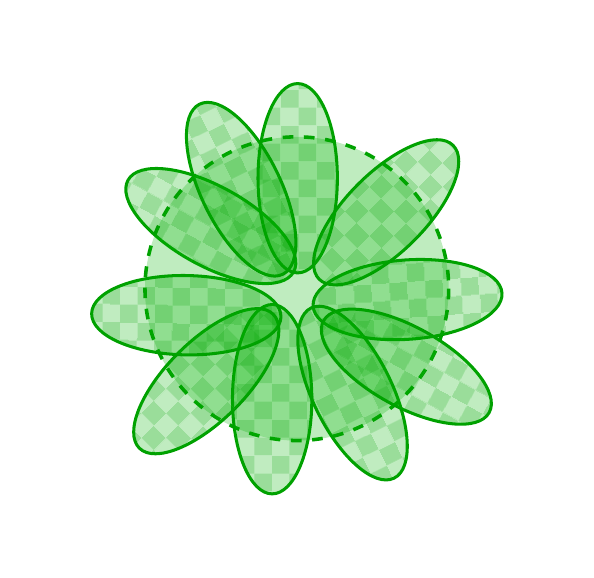}
  \caption{\small $\daisy_1$ with $K_1 = K_0$, where each $S \in \daisy_1$ has a petal $S \setminus K_1$ of size 1.}
  \label{fig:daisy1}
 \end{subfigure}
 \hfill
 \begin{subfigure}[t]{0.22\textwidth}
   \center
   \includegraphics[width=\textwidth]{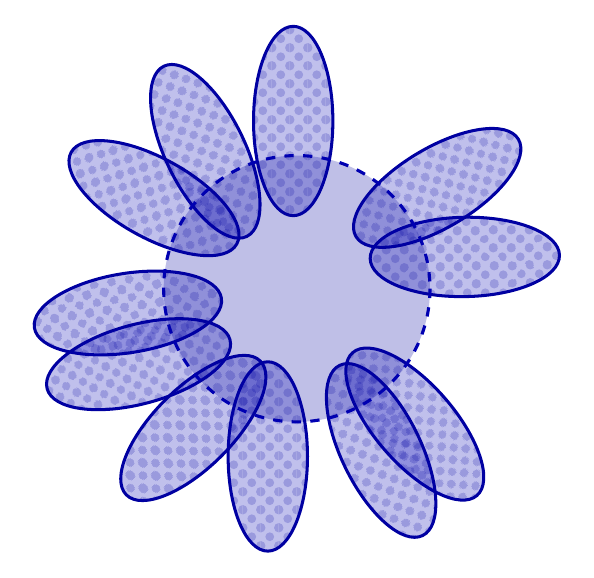}
  \caption{\small $\daisy_2$ with $K_2 \subseteq K_1$, where each $S \in \daisy_2$ has a petal $S \setminus K_2$ of size 2.}
  \label{fig:daisy2}
 \end{subfigure}
 \hfill
 \begin{subfigure}[t]{0.22\textwidth}
   \center
   \includegraphics[width=\textwidth]{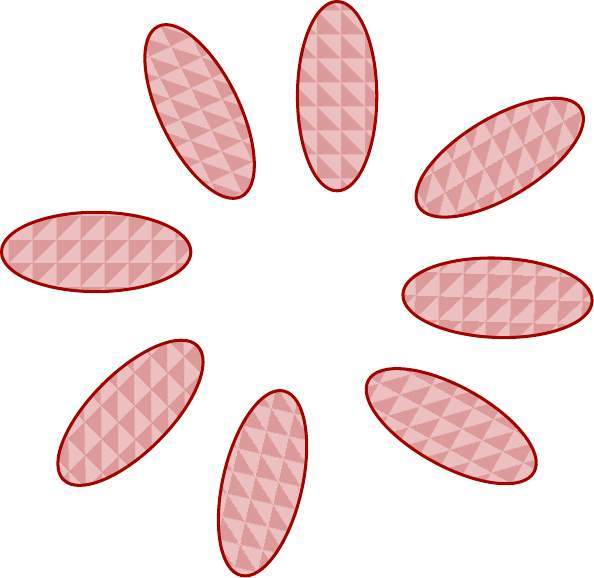}
  \caption{\small $\daisy_3$, with (empty) kernel $K_3 = \varnothing$. Each query set $S \in \daisy_3$ has a petal $S \setminus K_3 = S$ of size 3.}
  \label{fig:daisy3}
 \end{subfigure}
 \caption{Daisy partition.}
 \label{fig:daisy-partition}
\end{figure}

We may thus apply the daisy partition lemma to $\sets$ and assert that, \emph{for any input $x$}, there exists \emph{some} $i \in \set{0,\ldots,q}$ such that $\mu_x(\daisy_i)$ is larger than $1/q$ (recall that, for all $x$, the support of $\mu_x$ is contained in $\sets$); that is, each input may lead to a different heavy daisy, but there will always be at least one daisy that well-approximates the behaviour of the algorithm on input $x$. Alas, with only a local view of the input word, we are not able to tell which daisies are heavy and which are not.

It is clear, then, that a sample-based algorithm that makes use of the daisy partition has to rule not only according to a single daisy, but rather according to all of them. But \emph{how} exactly it should do so is a nontrivial question to answer, given that there are multiple daisies (and kernels) potentially interfering with one another.

\paragraph{Adaptivity in daisy partitions\ifsicomp\else.\fi}
A natural approach for dealing with multiple daisies simultaneously is by enumerating over every assignment to \emph{all} kernels (i.e., to $\cup_i K_i$) and, for each such assignment, obtaining local views from all daisies and ruling according to the aggregated local views. Note that the incidence chain structure implies that enumerating over assignments to $K_0$ suffices, since  each assignment to $K_0$ induces assignments to $K_i$ for all $i$.

However, this approach leads to fundamental difficulties. Recall that correctness of the sample-based algorithm on $0$-inputs depends on \emph{no kernel assignment} causing an output of $1$. Although for any assignment to $K_i$ this happens with sufficiently small probability to ensure it is unlikely to happen on all $2^{\abs{K_i}}$ assignments simultaneously, this does not hold true for assignments to larger kernels. More precisely, since $\abs{K_{i-1}}$ may be larger than $\abs{K_i}$ by a factor of $n^{1/q}$, an error rate that is preserved by $2^{\abs{K_i}}$ assignments becomes unbounded if the number of assignments increases to $2^{\abs{K_{i-1}}}$. This leads us to only consider, for query sets in $\daisy_i$, assignments to $K_i$ rather than to the union of all kernels.

Put differently, we construct an algorithm that deals with each daisy independently, and whose correctness follows from a delicate analysis that aggregates local views taken from all daisies, which we outline in \cref{sec:techniques:analysis}. We begin by considering a sample-based local algorithm $N$ that extends the strategy we used for a single daisy as follows.
On input $x \in \bitset^n$, it:
\begin{enumerate}[label=(\arabic*),noitemsep]
  \item samples each coordinate of the string $x$ independently with probability $p=1/n^{1/2q^2}$;
  \item for each $i \in \set{0,\ldots,q}$ and each assignment $\kappa$ to the kernel $K_i$ of the daisy $\daisy_i$, outputs $1$ if a majority of local views leads $M$ to output $1$; and
  \item outputs $0$ if no such majority is ever found.
\end{enumerate}

First, note that since the algorithm $N$ is constructed in a \emph{white-box} manner, it has access to the description of all decision trees induced by the query-based algorithm $M$. Hence $N^x$ is able to determine which local views correspond to a valid execution of $M$. Denoting by $Q$ the set of coordinates that were sampled, an assignment $\kappa$ to $K_i$ induces, for each query set $S \subset Q \cup K_i$, the assignment $x_{\kappa|S}$; the sample-based algorithm $N$ can check whether each such $S$ is a relevant query set (i.e., belongs to the support of $\mu_{x_\kappa}$) by verifying it arises from some branch of a decision tree of $M$ that $x_\kappa$ would have led to query. This allows $N$ to ignore the non-relevant query sets and overcome the difficulty pointed out in the previous section.\footnote{We remark that in the accurate description of our construction (see \cref{sec:construction}), we capture all the information contained in the decision trees via \emph{tuples} that contain, besides the query set, the assignment that led to it being queried as well as the output of the algorithm when it does so. The daisy partition lemma then allows to partition these tuples based on the structure of the sets they contain.}

However, we remain with the issue that motivated searching for heavy daisies in the first place: there is no guarantee that every $\daisy_i$ well-approximates the algorithm $M$ on all inputs. This is due to the use of relative estimates: if $x$ is a $0$-input and $\mu_{x}(\daisy_i)$ is smaller than the error rate $\sigma$, even when $N$ considers the correct kernel assignment $x_{|K_i}$ with respect to $x$, it may find a majority of local views that leads $M^x$ to output $1$; indeed, nothing prevents all the ``bad'' query sets, which lead $M^x$ to erroneously output $1$, from being placed in the same daisy $\daisy_i$.

The solution is to use a simpler decision rule: absolute rather than relative. We count the number of local views leading to an output of $1$, outputting $1$ if and only if it crosses a threshold. The upper bound $\sigma$ on the weight of ``bad'' query sets limits their number, and a large enough threshold prevents them from causing an incorrect output even if \emph{no local view} leads to the correct one. Note that a different threshold $\tau_i$ is needed for each daisy $\daisy_i$, since the probability of sampling petals decreases as $i$ increases. The thresholds $\tau_i$ must thus be carefully set to take this into account.

Finally, note that whenever the daisy $\daisy_0$ leads to an output of $1$, this happens (almost) \emph{independently of the input}: the assignment to every $S \in \daisy_0$ is determined solely by the assignment to $K_0$, because $S \subset K_0$. Therefore, the sample-based algorithm $N$ disregards $\daisy_0$ in its execution.

\paragraph{The algorithm\ifsicomp\else.\fi}
By the discussion above, we obtain the following description for the sample-based algorithm $N^x$ (with some parameters that we will set later).
\begin{enumerate}
  \item Sample each coordinate of $x$ independently with probability $p=1/n^{1/2q^2}$. If the number of samples exceeds the desired sample complexity, abort.
  \item For every $i \in [q]$ and every assignment $\kappa$ to $K_i$, perform the following steps.
  \begin{enumerate}
    \item Count the number of sets in $\daisy_i$ with local views that lead $M$ to output 1, which are \emph{relevant for the assignment $\kappa$} and the queried values. If $i = 1$, discard the sets whose petals are shared by at least $\alpha$ local views.\footnote{The extra condition for $i = 1$ is necessary to deal with the looser intersection bound $t = n^{1/q} > n^{(i-1)/q}$ on $\daisy_1$. We discuss this in the next section.}
    \item If the number is larger than the threshold $\tau_i$, output $1$.
  \end{enumerate}
  \item If every assignment to every kernel failed to trigger an output of 1, then output $0$.
\end{enumerate}

In the next section we will present key technical tools that we develop and apply to analyse this algorithm, as well as discuss the parameters $\tau_i = \gamma_i \cdot n p^i$ (where $\gamma_i = \Theta(1)$) and $\alpha = \Theta(1)$, and show it indeed suffices for the problem we set out to solve.

\subsection{Analysis using a volume lemma and the Hajnal–Szemer\'{e}di theorem}
\label{sec:techniques:analysis}
To establish the correctness of the aforementioned sample-based algorithm, we shall first need two technical lemmas about sampling daisies. We will then proceed to provide an outline of the analysis of our algorithm.

\subsubsection{Two technical lemmas}
We sketch the proofs of two simple, yet important technical lemmas that will be paramount to our analysis: (1) a lemma that allows us to transition from arguing about probability mass to arguing about combinatorial volume; and (2) a lemma that allows us to efficiently analyse sampling petals of daisies with complex intersection patterns.

\paragraph{The volume lemma\ifsicomp\else.\fi}
We start by showing how to derive from the probability mass of query sets (i.e., the probability under $\mu_x$ when the input is $x$) a bound on the volume that the union of these query sets cover. This is provided by the following \emph{volume lemma}, which captures what is arguably \emph{the defining structural property of robust local algorithms}.

Recall that the sample-based algorithm $N$ uses the query sets of a $(\rho,0)$-robust algorithm $M$ with error rate $\sigma$, which comprise the support of the distributions $\mu_x$ for all inputs $x$. Intuitively, these sets cannot be too concentrated (i.e., cover little volume), as otherwise slightly corrupting a word (in less than $\rho n$ coordinates) could require $M$ to output differently, a behaviour that is prevented by the robustness of $M$. This intuition is captured by the following \emph{volume lemma}.
\begin{lemma}[\cref{lem:volume}, informally stated]
  Let $x \in \bitset^n$ be a non-robust input (a $1$-input in our case) and $\sets$ be a subcollection of query sets in the support of $\mu_x$. If $\sets$ covers little volume (i.e., $\cup \sets < \rho n$), then it has small weight (i.e., $\mu_x(\sets) < 2 \sigma$).
\end{lemma}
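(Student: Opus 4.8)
The plan is to argue by contraposition: I will show that if the subcollection $\sets$ has weight at least $2\sigma$, then its union $\cup\sets$ must cover at least $\rho n$ coordinates. So suppose towards a contradiction that $\mu_x(\sets) \geq 2\sigma$ while $|\cup\sets| < \rho n$. Let $K \coloneqq \cup\sets \subseteq [n]$ be the (small) set of coordinates touched by these query sets. The idea is to exploit the fact that the answers to queries outside $K$ are irrelevant to whether $M$'s execution follows a branch landing in $\sets$, so we can freely modify $x$ outside $K$ without affecting the probability that $M^{(\cdot)}$'s run corresponds to a query set in $\sets$.

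First I would recall that $x$ is a $1$-input, i.e. $f(x) = 1$, and that $M$ is $(\rho,0)$-robust, which (after the reduction in the technical overview) means there is \emph{no} robustness promised at $x$ — but there \emph{is} robustness at $0$-inputs. So the contradiction should be driven through a $0$-input. Here is the key step: since $|K| < \rho n$, consider flipping the bits of $x$ inside $K$ so as to realise the ``worst case'' for the query sets in $\sets$ — or, more to the point, observe that whether $M$'s run lands in some $S \in \sets$ depends only on $x|_K$. The weight $\mu_x(\sets) \geq 2\sigma$ says that with probability at least $2\sigma$ over $M$'s coins, the induced query set (the branch taken on input $x$) lies in $\sets$; on every such run $M$ sees only coordinates in $K$. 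Now pick $y$ to be the string agreeing with $x$ on $K$ and equal to (say) a fixed $0$-input $x_0$ outside $K$; then $y$ is $\rho$-close to $x_0$, since $y$ and $x_0$ differ only inside $K$ and $|K| < \rho n$. By robustness at the $0$-input $x_0$, $M^y = 0$ with probability $> 1-\sigma$. On the other hand, for every branch/coin-string whose induced query set $S$ lies in $\sets \subseteq \supp(\mu_x)$, the queried coordinates lie in $K$, where $y$ agrees with $x$, so $M^y$ follows exactly the same branch as $M^x$ and outputs whatever $M^x$ does on that branch. Because $f(x)=1$ and $M$ computes $f$ with error $\le \sigma$, the branches on which $M^x$ outputs $0$ have total $\mu_x$-weight $\le \sigma$; hence the $\mu_x$-weight of branches in $\sets$ on which $M^x$ outputs $1$ is at least $2\sigma - \sigma = \sigma$. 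Each such branch forces $M^y = 1$, contradicting $M^y = 0$ with probability $> 1-\sigma$.

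The technical care needed — and what I expect to be the main obstacle — is making precise the claim that ``the induced query set determines the run on the coordinates it contains,'' i.e. the bookkeeping translating between the distribution over decision trees, the induced distribution $\mu_x$ over query sets, and the consistency of a branch across inputs that agree on $K$. One must be careful that $\sets \subseteq \supp(\mu_x)$ genuinely means every $S \in \sets$ arises as the \emph{realised} query path of $M$ on $x$ for a positive-measure set of coins, and that along that path the answers $M$ receives when run on $y$ are identical (this is exactly where $\cup\sets \subseteq K$ and $y|_K = x|_K$ enter). A secondary subtlety is the exact constant: I used $2\sigma$ against a $(1-\sigma)$ correctness bound to extract a surviving $\sigma$ fraction, which is why the lemma is stated with the factor $2$; if the paper's constants differ I would adjust the split accordingly. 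Once this consistency claim is nailed down, the volume lemma follows immediately from the robustness at the chosen $0$-input together with the union bound over the two ``bad'' events (error on $x$, and the $\sets$-mass), so the bulk of the work is purely in setting up the decision-tree formalism correctly rather than in any estimation.
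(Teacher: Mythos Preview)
Your proposal is correct and follows essentially the same argument as the paper: assume $\mu_x(\sets)$ is large, build a hybrid string that agrees with $x$ on $\cup\sets$ and with a fixed $0$-input elsewhere, observe that on any run landing in $\sets$ the algorithm sees only $x$'s values and hence behaves as on $x$, and derive a contradiction with robustness at the $0$-input. The only cosmetic difference is naming (the paper calls the hybrid $w$ and the robust point $y$), and you should use the strict inequality $\mu_x(\sets) > 2\sigma$ in the contradiction hypothesis so that the surviving mass is strictly greater than $\sigma$ rather than $\geq \sigma$.
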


We stress that the \emph{robustness of the $0$-inputs yields the volume lemma for $1$-inputs}.\footnote{This is a rather subtle consequence of adaptivity; in the nonadaptive setting a symmetric volume lemma for $b$-inputs can be shown using robustness on $b$-inputs, for $b \in \bitset$.} Note that the contrapositive of the volume lemma yields a desirable property for our sample-based algorithm: for any (non-robust) $1$-input $x$, the query sets in $\supp(\mu_x)$ must cover a large amount of volume, so that we can expect to sample many such sets.

\paragraph{The Hajnal–Szemer\'{e}di theorem\ifsicomp\else.\fi} Once we establish that a daisy covers a large volume, it remains to argue how this affects the probability of sampling petals from this large daisy, which is a key component of our algorithm. Recall that sampling the petals of a sunflower is trivial to do. However, with the complex intersection patterns that the petals of a daisy could have, we need a tool to argue about sampling petals of daisies.

First, recall that the daisy partition lemma ensures that each $\daisy_i$ is a $t$-daisy where $t = n^{\max\{1,i-1\}/q}$, for all $i$. Observe that if $\daisy_i$ is a $1$-daisy (which we call a \emph{simple} daisy), that is, each point outside the kernel $K_i$ is contained in at most one set $S \in \daisy_i$, then the sets in $\daisy_i$ have pairwise disjoint petals, so sampling them is \emph{exactly} like sampling petals of a sunflower: these petals are sampled independently from one another, and we expect their number to be concentrated around the expectation of $p^i\abs{\daisy_i}$ (recall that all petals have size $i$).

Of course, there is no guarantee that $\daisy_i$ is a simple daisy, though we expect it to \emph{contain} a simple daisy if it is large enough. Indeed, greedily removing intersecting sets yields a simple daisy of size $\Theta(\abs{\daisy_i}/t)$, but this does not suffice for our purposes because most of the sets in $\daisy_i$ are discarded.

Instead, we rely on the \emph{Hajnal–Szemer\'{e}di theorem} to obtain a ``lossless'' transition from a $t$-daisy to a collection of simple daisies, from which sampling petals is easy. The Hajnal–Szemer\'{e}di theorem shows that for every graph $G$ with $m$ vertices and maximum degree $\Delta$, and for any $k \geq \Delta + 1$, there exists a $k$-colouring of the vertices of $G$ such that every colour class has size either $\lfloor m/k\rfloor$ or $\lceil m/k\rceil$. By applying this theorem to the incidence graph of the \emph{petals} of query sets (i.e., the graph with vertex set $\daisy_i$ where we place an edge between $S$ and $S'$ when $(S \cap S') \setminus K_i \neq \varnothing$), which satisfies $\Delta(G) \leq 2t$ (see \cref{clm:hbound}) we obtain a partition of $\daisy_i$ into $t$ simple daisies of the same size (up to an additive error of 1), and hence obtain stronger sampling bounds.

\subsubsection{Analysis}

Note that the probability that $N$ samples too many coordinates (thus aborts) is exponentially small, hence we assume hereafter that this event did not occur.

We proceed to sketch the high-level argument of the correctness of the sampled-based algorithm $N$, described in the previous section, making use of tools above. This follows from two claims that hold with high probability: (1) \emph{correctness on non-robust inputs}, which ensures that when $x$ is a $1$-input (i.e., is non-robust), there exists $i \in [q]$ such that when $N$ considers the kernel assignment $x_{|K_i}$ (which coincides with the input), the number of local views that lead to output $1$ crosses the threshold $\tau_i$; and (2) \emph{correctness on robust inputs}, which, on the other hand, ensures that when $x$ is a $0$-input (i.e., is robust), for \emph{every kernel} $K_i$ and \emph{every kernel assignment}, the number of local views that lead to output $1$ \emph{does not cross} the threshold $\tau_i$.

In the following, we remind that when the sample-based algorithm $N$ considers a particular assignment $\kappa$ to a kernel and counts the number of local views that lead to output $1$, the algorithm only considers views that are \emph{relevant} to $x_\kappa$ (the input $x$ where the values of its kernel are replaced by $\kappa$); that is, local views that arise from some branch of a decision tree of the adaptive algorithm $M$ that would have led it to query these local views. While $N$ does not know all of $x$, after collecting samples from $x$ and considering the kernel assignment $\kappa$, it can check which local views are relevant to $x_\kappa$ (see discussion in \cref{sec:techniques:adaptive-approach}).

\paragraph{Correctness on non-robust inputs\ifsicomp\else.\fi}
We start with the easier case, where $x$ is a non-robust input (in our case, $f(x)=1$). We show that there exists $i \in [q]$ such that when $N$ considers the kernel assignment $x_{|K_i}$, the number of local views that lead to output $1$ crosses the threshold $\tau_i = \gamma_i \cdot n p^i$. We begin by recalling that $N$ disregards the daisy $\daisy_0$, whose query sets are entirely contained in the kernel $K_0$, and arguing that this leaves sufficiently many query sets that lead to output $1$.

Indeed, while we could not afford this if $\daisy_0$ was heavily queried by $M$ given the $1$-input $x$ (i.e., if $\mu_x(\daisy_0)$ is close to $1 - \sigma$), an application of the volume lemma shows this is not the case: since $\abs{K_0} = o(n)$, this volume is smaller than $\rho n$, implying $\mu_x(\daisy_0) < 2\sigma$ \emph{for all $1$-inputs $x$}.

Apart from $\daisy_0$, the query sets in the daisy $\daisy_1$ whose petals are shared by at least $\alpha$ local views (for a parameter $\alpha$ to be discussed shortly) are also discarded, and we need to show that the loss incurred by doing so is negligible as well.

This is accomplished with a slightly more involved application of the volume lemma:

since the sets of $\daisy_1$ have petals of size $1$, the subcollection $\mathcal{C} \subseteq \daisy_1$ of sets that are discarded covers a volume of at most $\abs{K_1} + \abs{\mathcal{C}}/\alpha$. For a sufficiently large choice of a constant $\alpha>0$, we have $\abs{\mathcal{C}}/\alpha \leq \rho n/2$ (recall that $\mathcal{C} \subseteq \supp(\mu_x)$ and $\abs{\supp(\mu_x)} = \Theta(n)$ by the assumption on the randomness complexity of $M$). Since $\abs{K_1} = o(n)$ and in particular $\abs{K_1} < \rho n/2$, applying the volume lemma to $\mathcal{C}$ shows that $\mu_x(\mathcal{C}) < 2\sigma$.

Finally, the total weight of all query sets in $\supp(\mu_x)$ that lead to output $0$ is at most $\sigma$ (by definition of the error rate $\sigma$). This implies that the subcollection of $\supp(\mu_x)$ that leads to output $1$ \emph{and is not disregarded} has weight at least $1 - 2\sigma - 2\sigma - \sigma = 1 - 5\sigma$, and, for a sufficiently small value of $\sigma$ (recall that $\sigma = \Theta(1/q)$), we have $1 - 5\sigma \geq 1/2$.

We now shift perspectives, and in effect use the volume lemma in the contrapositive direction: large weights imply large volumes. By a simple averaging argument, it follows that at least one daisy $\daisy_i$ has weight at least $(1-5\sigma)/q \geq 2\sigma$, and thus, by the volume lemma, \emph{covers at least $\rho n$ coordinates}.

Therefore, since $\abs{\supp(\mu_x)} = \Theta(n)$ and \emph{$\mu_x$ is uniform over a multi-collection of query sets}, this daisy contains $\Theta(n)$ ``good'' sets (that lead to output $1$ and were not discarded). For the analysis, using the Hajnal–Szemer\'{e}di theorem, we partition the $t$-daisy $\daisy_i$ into $t$ \emph{simple} daisies of size $\Theta(n/t)$. Each such simple daisy has \emph{disjoint} petals of size $i$, so that $\Omega(n p^i/t)$ petals will be sampled except with probability $\exp(-\Omega(n p^i/t))$.
Finally, this implies that, by setting $\gamma_i = \Theta(1)$ small enough, \emph{when $N$ considers the kernel assignment $x_{|K_i}$ to $K_i$}, at least $\tau_i = \gamma_i \cdot n p^i$ petals are sampled except with probability
\begin{equation*}
  O(t) \cdot \exp\left(-\Omega\left(\frac{n p^i}{t}\right)\right) = \exp\left(-\Omega\left(n^{1-\frac{\max\set{1,i-1}}{q}-\frac{i}{2q^2}}\right)\right) = o(1).
\end{equation*}
(Recall that $t = n^{\max\set{1,i-1}/q}$ and the sampling probability is $p = 1/n^{1/2q^2}$.)

\paragraph{Correctness on robust inputs\ifsicomp\else.\fi} It remains to show the harder case: when $x$ is a robust input (in our case, $f(x)=0$), then \emph{all kernel assignments to all daisies} will make the local views that lead to output $1$ fail to cross the threshold. This case is harder since, by the asymmetry of $N$ with respect to robust and non-robust inputs, here we need to prove a claim for all kernel assignments to all daisies, whereas in the non-robust case above we only had to argue about the existence of a single assignment to a single kernel.

We begin with a simple observation regarding $\daisy_0$, then analyse the daisies $\daisy_i$ for $i > 1$, and deal with the more delicate case of $\daisy_1$ last. Recall that $\daisy_0$ is disregarded by the algorithm $N$, and that by the asymmetry of $N$ with respect to 0- and 1-inputs, this only makes the analysis on robust inputs \emph{easier}. Indeed, $N^x$ is correct when no kernel assignment to any of the $\daisy_i$'s leads to crossing the threshold $\tau_i$ of local views on which the query-based algorithm $M$ outputs $1$. Thus, by discarding the query sets in $\daisy_0$, we only increase the probability of not crossing these thresholds.

Fix $i > 0$ and an arbitrary kernel assignment $\kappa$ to $K_i$.

Then, the relevant sets that $N$ may sample are those in the support of $\mu_{x_\kappa}$ (recall that $x_\kappa$ is the word obtained by replacing the bits of $x$ whose coordinates lie in $K_i$ by $\kappa$). Since $\abs{K_i} = o(n)$, it follows by the robustness of $x$ that $x_\kappa$ is $\rho$-close to $x$, and thus the weight of the collection $\ones \subseteq \supp(\mu_{x_\kappa})$ of query sets that lead to output $1$ is at most $\sigma$. For the sake of this technical overview, we focus on the worst-case scenario, where all of these ``bad'' sets are in the daisy $\daisy_i$ (i.e., $\ones \subseteq \daisy_i$) and $\abs{\ones}$ is as large as possible (i.e., $\abs{\ones} = \Theta(n)$), and show that even that will not suffice to cross the threshold $\tau_i$.

By the randomness complexity of the algorithm $M$, the size of $\ones$ is $\Theta(n)$. We apply the Hajnal–Szemer\'{e}di theorem and partition $\ones$ into $\Theta(t)$ simple daisies of size $\Theta(n/t)$. Recall that the petals of query sets in $\daisy_i$ have size $i$ and are disjoint; therefore, each of these simple daisies has $\gamma_i \cdot n p^i/t$ sampled petals with probability only $\exp(-\Omega(n p^i/t))$.\footnote{We stress that \emph{the expected number of sampled petals is smaller} in the robust case than in the non-robust one. This is what allows us to show the total number of queried petals is at least $\tau_i =  \gamma_i n p^i$ with probability $\exp(-\Omega(n p^i/t))$ in the robust case but $1 - \exp(-\Omega(n p^i/t))$ in the non-robust, for the same constant $\gamma_i$.} By an averaging argument, the total number of sampled petals crosses $\tau_i = \gamma_i \cdot np^i$ with probability at most
\begin{equation*}
  O(t) \cdot \exp\left(-\Omega\left(\frac{n p^i}{t}\right)\right)= \exp\left(-\Omega\left(n^{1-\frac{i-1}{q}-\frac{i}{2q^2}}\right)\right) = \exp\left(-\Omega\left(n^{1-\frac{i}{q} + \frac{1}{2q}}\right)\right) \:;
\end{equation*}
recall that $t = n^{(i-1)/q}$ and the sampling probability is $p = 1/n^{1/2q^2}$, so $p^i \geq 1/n^{1/2q}$.
Since the daisy partition lemma yields a bound of $O(n^{1-i/q})$ for the size of the kernel $K_i$, a union bound over all $2^{\abs{K_i}}$ kernel assignments ensures the threshold is crossed with probability $o(1)$.\footnote{\label{foot:sampleprob}Recall that we set the sampling probability to be $p \coloneqq 1/n^{1/\beta}$ with $\beta = 2q^2$. This choice is justified as follows: the union bound requirement that $2^{\abs{K_i}}$ multiplied by the probability of crossing the threshold be small translates into $1/q - i/\beta > 0$ for all $i \in [q-1]$. Then $i = q - 1$ requires $\beta = \Omega(q^2)$.}

We now analyse $\daisy_1$ and stress that the need for a separate analysis arises from the looser intersection bound on this daisy: $\daisy_1$ is a $t$-daisy with $t = n^{1/q}$, whereas for all other $i$ the bound is $t = n^{(i-1)/q}$. This implies that there is no ``gap'' between the expected number of queried petals in each simple daisy $\Theta(np/t) = \Theta(n^{1-1/q-1/(2q^2)}) = o(n^{1-1/q})$ and the size of the kernel $\abs{K_1} = O(n^{1-1/q})$, so a union bound as in the case $i > 1$ does not suffice.

This is precisely what the ``capping'' performed by $N$ on $\daisy_1$ is designed to address:
 
the query sets $\ones \subseteq \supp(\mu_{x_\kappa})$ that lead to output $1$ will only be counted by $N$ \emph{if their petals are shared by at most $\alpha$ query sets}. Then, by the Hajnal–Szemer\'{e}di theorem, we partition $\ones$ into $\alpha = \Theta(1)$ simple daisies of size $\Theta(n)$. Each simple daisy will have more than $\tau_i/\alpha = \Theta(np)$ queried petals with probability at most $\exp(-\Omega(np))$, so that the total number of such petals across all simple daisies exceeds $\tau_j$ with probability at most $\exp(-\Omega(np))$. This provides the necessary gap: as $\Theta(n p) = \Omega(n^{1-1/2q^2})$ and $\abs{K_1} = O(n^{1-1/q})$, a union bound over all $2^{\abs{K_1}}$ assignments to $K_1$ shows the threshold is crossed with probability $o(1)$.

\ifsicomp\else\paragraph{}\fi This concludes our high-level proof of correctness, and thus of \cref{infthm:main} (see \cref{sec:analysis} for the full proof). For an overview of how to derive our applications from \cref{infthm:main}, see \cref{sec:applications}.

\section{Preliminaries}
\label{sec:preliminaries}

Throughout this paper, constants are denoted by Greek lowercase letters, such as $\alpha$, $\beta$ and $\gamma$; capital letters of the Latin alphabet generally denote sets (e.g. $P$ and $S$) or algorithms (e.g., $M$ and $N$). For each $n \in \N$, we denote by $[n]$ the set $\set{1,\ldots,n}$. Sets $S$ such that $\abs{S} = q$ are called $q$-sets. The complement of $S$ is denoted $\overline{S}$. We will use $\Sigma$ to denote an alphabet.

As integrality issues do not substantially change any of our results, equality between an integer and an expression (that may not necessarily evaluate to one) is assumed to be rounded to the nearest integer.

\paragraph{Multi-sets of sets\ifsicomp\else.\fi} To prevent ambiguity, we call (multi-)sets comprised of objects other than points (such as sets, trees or tuples) (multi-)\emph{collections} in this work, and denote them by the calligraphic capitals $\daisy, \sets, \tuples$.

\paragraph{Distance and proximity\ifsicomp\else.\fi} We denote the \emph{absolute distance} between two strings $x, y \in \Sigma^n$ (over the alphabet $\Sigma$) by $\adist(x,y) \coloneqq \abs{\left\{ x_i \neq y_i \;\colon\; i \in [n] \right\}}$ and their \emph{relative distance} (or simply \emph{distance}) by $\dist(x,y) \coloneqq \frac{\adist(x,y)}{n}$. If $\dist(x,y) \leq \eps$, we say that $x$ is \emph{$\eps$-close} to $y$, and otherwise we say that $x$ is \emph{$\eps$-far} from $y$. The (Hamming) ball of radius $\eps$ around $x$ is $B_\eps(x) = \set{y \in \Sigma^n : y \text{ is } \eps\text{-close to } x}$, and the ball $B_\eps(S)$ around a set $S \subseteq \Sigma^n$ is the union over $B_\eps(x)$ for each $x \in S$.

\paragraph{Probability\ifsicomp\else.\fi} The uniform distribution over a set $S$ is denoted $U_S$. We write $X \sim \mu$ to denote a random variable $X$ with distribution $\mu$; the probability of event $[X = s]$ is interchangeably referred to as \emph{weight}, and is denoted $\Pr[X = s]$ (the underlying distribution will be clear from context), and the expected value of $X$ is $\E[X]$. We also write $\abs{\mu}$ as shorthand for $\abs{\supp(\mu)}$, the support size of $\mu$. Below is the version of the Chernoff bound that will be used in this work.

\begin{lemma}[Chernoff bound]
  Let $X_1, \ldots, X_k$ be independent Bernoulli random variables distributed as $X$. Then, for every $\delta \in [0,1]$,
  \begin{align*}
    \Pr\left[\frac{1}{k}\sum_{i = 1}^k X_i \geq (1 + \delta) \E[X]\right] &\leq e^{-\frac{\delta^2 k \E[X]}{3}} \text{ and}\\
    \Pr\left[\frac{1}{k}\sum_{i = 1}^k X_i \leq (1 - \delta) \E[X]\right] &\leq e^{-\frac{\delta^2 k \E[X]}{2}}.
  \end{align*}
\end{lemma}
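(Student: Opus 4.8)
The plan is to prove both inequalities via the standard exponential-moment (Bernstein--Chernoff) method, so I only sketch the structure. Write $p = \E[X] \in [0,1]$ and $S = \sum_{i=1}^{k} X_i$, so $\E[S] = kp$. For the upper tail, fix a parameter $t > 0$ and apply Markov's inequality to the nonnegative random variable $e^{tS}$:
\begin{equation*}
  \Pr\left[S \geq (1+\delta)kp\right] = \Pr\left[e^{tS} \geq e^{t(1+\delta)kp}\right] \leq e^{-t(1+\delta)kp}\cdot\E\!\left[e^{tS}\right].
\end{equation*}
By independence, $\E[e^{tS}] = \prod_{i=1}^{k}\E[e^{tX_i}] = \bigl(1 - p + pe^{t}\bigr)^{k} = \bigl(1 + p(e^{t}-1)\bigr)^{k} \leq e^{kp(e^{t}-1)}$, using $1 + z \leq e^{z}$. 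Substituting and minimising the exponent $kp(e^{t}-1) - t(1+\delta)kp$ over $t > 0$ gives the optimal choice $t = \ln(1+\delta)$, which yields the ``clean'' Chernoff bound $\Pr[S \geq (1+\delta)kp] \leq \bigl(e^{\delta}/(1+\delta)^{1+\delta}\bigr)^{kp}$.

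It then remains to carry out a purely analytic step: show that $(1+\delta)\ln(1+\delta) - \delta \geq \delta^{2}/3$ for $\delta \in [0,1]$. This follows by setting $g(\delta) = (1+\delta)\ln(1+\delta) - \delta - \delta^{2}/3$, checking $g(0) = 0$, and verifying $g'(\delta) = \ln(1+\delta) - 2\delta/3 \geq 0$ on $[0,1]$ (e.g.\ $g''(\delta) = 1/(1+\delta) - 2/3 \geq 0$ for $\delta \leq 1/2$ and $g'(1) = \ln 2 - 2/3 > 0$, with $g'$ decreasing thereafter). This upgrades the clean bound to $\Pr[S \geq (1+\delta)kp] \leq e^{-\delta^{2}kp/3}$; dividing the event through by $k$ gives the first claimed inequality. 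For the lower tail the argument is symmetric with $t < 0$: apply Markov to $e^{tS}$ with $t = \ln(1-\delta) < 0$ for $\delta \in [0,1)$ (the boundary $\delta = 1$ is obtained as a limit, since the event becomes $S = 0$), obtaining $\Pr[S \leq (1-\delta)kp] \leq \bigl(e^{-\delta}/(1-\delta)^{1-\delta}\bigr)^{kp}$, and then invoke the analytic inequality $(1-\delta)\ln(1-\delta) + \delta \geq \delta^{2}/2$ on $[0,1)$ (again by differentiating twice) to conclude $\Pr[S \leq (1-\delta)kp] \leq e^{-\delta^{2}kp/2}$.

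There is no conceptual obstacle here: the entire content is the two elementary inequalities bounding $(1\pm\delta)\ln(1\pm\delta)$, together with care at the endpoint $\delta = 1$ in the lower-tail case and correct handling of signs once $t < 0$. I expect the verification of those analytic inequalities (most cleanly via a second derivative) to be the only step requiring any attention; everything else is the textbook computation above.
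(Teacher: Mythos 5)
Your argument is correct and is the standard textbook derivation of the multiplicative Chernoff bound: Markov's inequality on $e^{tS}$, the MGF bound $\E[e^{tS}] \le e^{kp(e^t-1)}$, optimisation at $t = \ln(1\pm\delta)$, and the two elementary inequalities $(1+\delta)\ln(1+\delta) - \delta \ge \delta^2/3$ and $(1-\delta)\ln(1-\delta) + \delta \ge \delta^2/2$ on $[0,1]$, each verified by differentiating twice. The paper states this lemma in its preliminaries as a known fact without proof, so there is no in-paper argument to compare against; your derivation fills that in correctly, and the only place requiring care --- the sign change of $g''$ at $\delta = 1/2$ in the upper-tail inequality --- is handled properly by checking $g'(1) = \ln 2 - 2/3 > 0$.
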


\paragraph{Algorithms\ifsicomp\else.\fi} We denote by $M^x(z)$ the output of algorithm $M$ given direct access to input $z$ and query access to string $x$. Probabilistic expressions that involve a randomised algorithm $M$ are taken over the inner randomness of $M$ (e.g., when we write $\Pr[M^x(z) = s]$, the probability is taken over the coin tosses of $M$). The number of coin tosses $M$ makes is its \emph{randomness complexity}. The maximal number of queries $M$ performs over all strings $x$ and outcomes of its coin tosses is interchangeably referred to as its \emph{query complexity} or \emph{locality} $q$. When $q = o(n)$, where $n$ is the length of the string $x$, we say $M$ is a ($q$-)\emph{local} algorithm. If the queries performed by $M$ are determined in advance (so that no query depends on the result of any other query), $M$ is \emph{non-adaptive}; otherwise, it is \emph{adaptive}. Finally, if $M$ queries each coordinate independently with some probability $p$, we say it is a \emph{sample-based} algorithm. Since we will want to have an absolute bound on the sample complexity (i.e., the number of coordinates sampled) of sample-based algorithms, we allow them to cap the number of coordinates they sample.

\paragraph{Notation\ifsicomp\else.\fi} We will use the following, less standard, notation. An \emph{assignment} to a set $S$ (over alphabet $\Sigma$) is a function $a\colon S \rightarrow \Sigma$, which may be equivalently seen as a vector in $\Sigma^{\abs{S}}$ whose coordinates correspond to elements of $S$ in increasing order. Its restriction to $P \subseteq S$ is denoted $a_{|P}$. If $x$ is an assignment to $S$ and $\kappa$ is an assignment to $P \subseteq S$, the \emph{partially replaced assignment} $x_{\kappa} \in \Sigma^n$ is that which coincides with $\kappa$ in $P$ and with $x$ in $S \setminus P$ (i.e., $x_{\kappa|P} = \kappa$ and $x_{\kappa|S\setminus P} = x_{S \setminus P}$).

\paragraph{Adaptivity\ifsicomp\else.\fi} Adaptive local algorithms are characterised in two equivalent manners: a standard description via decision trees and an alternative that makes more direct use of set systems. Let $M$ be a $q$-local algorithm for a decision problem (i.e., which outputs an element of $\bitset$) with oracle access to a string over alphabet $\Sigma$ and no access to additional input (what follows immediately generalises by enumerating over explicit inputs).

The behaviour of $M$ is completely characterised by a collection $\set{(T_s, s) : s \in \bitset^r}$ of \emph{decision trees}, where $r$ is the randomness complexity of $M$; the trees are $\abs{\Sigma}$-ary, have depth $q$, edges labeled by elements of $\Sigma$, inner nodes labeled by elements of $[n]$ and leaves labeled by elements of $\bitset$. The execution of $M^x$ proceeds as follows. It (1) flips $r$ random coins, obtains a string $s \in \bitset^r$ and chooses the decision tree $T_s$ to execute; (2) beginning at the root, for $q$ steps, queries the coordinate given by the label at the current vertex and follows the edge whose label is the queried value; and (3) outputs the bit given by the label of the leaf thus reached.

Although this offers a complete description of an adaptive algorithm, we choose to use an alternative that is amenable to \emph{daisy lemmas} (see \cref{sec:daisy}). This is obtained by describing each decision tree with the collection of its branches. More precisely, from $\set{(T_s, s) : s \in \bitset^r}$ we construct $\set{(S_{st}, a_{st}, b_{st}, s, t) : s \in \bitset^r, t \in [\abs{\Sigma}^q]}$, where $t$ identifies which branch the tuple is obtained from. $S_{st}$ is the $q$-set queried by the $t^\text{th}$ branch of $T_s$, while $a_{st}$ is the assignment to $S_{st}$ defined by the edges of this branch and $b_{st} \in \bitset$ is the output at its leaf. We remark that the decision trees may be reconstructed from their branches, so that this is description is indeed equivalent (though we will not need this fact).

We now define the distribution of an algorithm, as well as its distribution under a fixed input.
\begin{definition}[Induced distribution]
  \label{def:distb}
  Let $M$ be a $q$-local algorithm with randomness complexity $r$ described by the collection of decision trees $\set{T_s : s \in \bitset^r}$. The \emph{distribution} $\Tilde{\mu}^M$ of $M$ is given by sampling $s \in \bitset^r$ uniformly at random and taking $T_s$.

  Fix an arbitrary input $x$ to $M$ and, for all $s \in \bitset^r$, let $(S_{st}, x_{|S_{st}}, b_{st}, s, t)$ be the unique tuple defined by the branch of $T_s$ followed on input $x$. We may thus discard $t$ and the tuple $(S_s, x_{|S_s}, b_s, s)$ is well defined. The distribution $\mu_x^M$ is given by sampling $s \in \bitset^r$ uniformly at random and taking the set $S_s$ (the first element of the tuple $(S_s, x_{|S_s}, b_s, s)$).
\end{definition}
We note that the contents of the tuple $(S_s, x_{|S_s}, b_s, s)$ describe exactly how $M$ will behave on input $x$ and random string $s$.

\section{Robust local algorithms}
\label{sec:localalg}

We now formally introduce \emph{robust local algorithms}, which capture a wide class of sublinear algorithms, ranging from \emph{property testing} to \emph{locally decodable codes}. Our main result (\cref{infthm:main}) holds for any robust local algorithm, and indeed, we obtain our results for coding theory, testing, and proofs of proximity as direct corollaries.

While local algorithms are very well studied, their definition is typically context-dependent, where they are required to perform different tasks (e.g., test, self-correct, decode, perform a local computation) under different promises (e.g., proximity to encoded inputs, being either ``close'' to or ``far'' from sets). However, structured promises on the input are (with the exception of degenerate cases) necessary for algorithms that only make a sublinear number of queries. This feature leads naturally to the notion of robustness, which, loosely speaking, a local algorithm satisfies if its output is stable under small perturbations.

In the next subsection, we provide a precise definition of robust local algorithms. Then, in the subsequent subsections, we show how this notion captures property testing, locally testable codes, locally decodable and correctable codes, PCPs of proximity, and other local algorithms.

\subsection{Definition}
We begin by defining \emph{local algorithms}, which are probabilistic algorithms that receive query access to an input $x$ and explicit parameter $z$ and are required to compute a partial function $f(z,x)$ (which represents a promise problem) by only making a small number of queries to the input $x$.

\begin{definition}[Local algorithms]
  \label{def:localalg}
  Let $\Sigma$ be a finite alphabet, $Z$ a finite set and $\set{\mathcal{P}_z : z \in Z}$ a family of sets $\mathcal{P}_z \subseteq \Sigma^n$ indexed by $Z$. With $\mathcal{P} \coloneqq \set{(z, x) : z \in Z, x \in \mathcal{P}_z}$, let $f\colon \mathcal{P} \to \bitset$ be a partial function.\footnote{We remark that allowing only rectangles $\mathcal{P} = Z \times \mathcal{Q}$ as the domain of $f$ suffices for most of our applications (e.g., testers and local decoders), but not all. For example, in a MAP for a property $\Pi$, there may be inputs $x \in \Pi$ that are only contained in $\mathcal{P}_z$ for a single $z \in Z$. (See \cref{sec:map}.)} A \emph{$q$-local algorithm $M$ for computing $f$} with \emph{error rate} $\sigma$ receives explicit access to $z \in Z$, query access to $x \in \mathcal{P}_z$, makes at most $q$ queries to $x$ and satisfies
  \begin{equation*}
		\Pr[M^x(z) = f(z,x)] \geq 1 - \sigma.
  \end{equation*}
\end{definition}

The parameter $q$ is called the \emph{query complexity} of $M$, to which we also refer as \emph{locality}. Throughout, when we refer to a \emph{local algorithm}, we mean a $q$-local algorithm with $q=o(n)$. Another important parameter is the \emph{randomness complexity} of $M$, defined as the maximal number of coin tosses it makes over all $(z, x) \in Z \times \Sigma^n$ (note that an execution $M^x(z)$ is well-defined even if $x \notin \mathcal{P}_z$).

The following definition formalises the aforementioned natural notion of \emph{robustness}, which is the structural property that underlies local computation.

\begin{definition}[Robustness]
  \label{def:robust}
  Let $\rho > 0$. A local algorithm $M$ for computing $f \colon \mathcal{P} \to \bitset$ is \emph{$\rho$-robust} at the point $(z, x) \in \mathcal{P}$ if $\Pr[M^w(z) = f(z,x)] \geq 1 -\sigma$ for all $w \in B_\rho(x)$. We say that $M$ is \emph{$(\rho_0, \rho_1)$-robust} if, for all $z \in Z$ and $b \in \bitset$, $M$ is $\rho_b$-robust at every $x$ such that $f(z,x) = b$.
 \end{definition}
  If a local algorithm $M$ is $(\rho_0, \rho_1)$-robust and $\max\set{\rho_0, \rho_1} = \Omega(1)$ (a constant independent of $n$), we simply call $M$ \emph{robust}.
Note that non-trivial robustness is only possible because $f$ is a partial function; that is, the local algorithm $M$ solves a \emph{promise problem} where, for every parameter $z$, the algorithm is promised to receive an input from $\mathcal{P}_{z,0} \coloneqq f^{-1}(z,0)$ on which it should output $0$, or an input from $\mathcal{P}_{z,1} \coloneqq f^{-1}(z,1)$ on which it should output $1$.

\begin{remark}[One-sided robustness]
For our main result (\cref{infthm:main}), it suffices to have \emph{one-sided robustness}, i.e., $(\rho_0, \rho_1)$-robustness where only one of $\rho_0, \rho_1$ is non-zero. For example, in the setting of property testing with proximity parameter $\eps$ we only have $(\eps, 0)$-robustness (see \cref{sec:PT} for details). To simplify notation, we refer to $(\rho, 0)$-robust local algorithms as $\rho$-robust.
\end{remark}

\begin{remark}[Larger alphabets]
  The definition of local algorithms can be further generalised to a constant-size output alphabet $\Gamma$, in which case the partial function is $f\colon \Sigma^n \rightarrow \Gamma$; we assume $\Gamma = \bitset$ for simplicity of exposition, but note that our results extend to larger output alphabets in a straightforward manner.
\end{remark}

We proceed to show how to capture various well-studied families of sublinear algorithms (such as testers, local decoders, and PCPs) using the notion of robust local algorithms.

\subsection{Property testing}
\label{sec:PT}

Property testing \cite{RS96,GGR98} deals with probabilistic algorithms that solve approximate decision problems by making a small number of queries to their input. More specifically, a tester is required to decide whether its input is in a set $\Pi$ (i.e., has the property $\Pi$) or whether it is $\eps$-far from any input in $\Pi$.

\begin{definition}[Testers]
  \label{def:tester}
  An $\eps$-\emph{tester} with \emph{error rate} $\sigma$ for a \emph{property} $\Pi \subseteq \Sigma^n$ is a probabilistic algorithm $T$ that receives query access to a string $x \in \Sigma^n$. The tester $T$ performs at most $q = q(\eps, n)$ queries to $x$ and satisfies the following two conditions.

  \begin{enumerate}
  \item If $x \in \Pi$, then
      $\Pr \left[T^x = 1 \right] \ge 1 - \sigma$.
  \item For every $x$ that is $\eps$-far from $\Pi$ (i.e., $x \in \overline{B_\eps(\Pi)}$), then $\Pr \left[T^x = 0 \right] \ge 1 - \sigma$.
  \end{enumerate}
\end{definition}

We are interested in the regime where $\eps = \Omega(1)$ (i.e., $\eps$ is a fixed constant independent of $n$), and assume it to be the case in the remainder of this discussion.

Note that testers are \emph{not} robust with respect to inputs in the property $\Pi$, as changing a single coordinate of an input $x \in \Pi$ could potentially lead to an input outside $\Pi$. Moreover, an $\eps$-tester does not immediately satisfy one-sided robustness, as inputs that are on the boundary of the $\eps$-neighbourhood of $\Pi$ are not robust (see figure \cref{fig:PT}).

However, by increasing the value of the proximity parameter by a factor of $2$, we can guarantee that every point that is $2\eps$-far from $\Pi$ satisfies the robustness condition. The following claim formalises this statement and shows that testers can be cast as robust local algorithms.

\begin{claim}
	\label{clm:testrla}
	An $\eps$-tester $T$ for property $\Pi \subseteq \Sigma^n$ is an $(\eps,0)$-robust local algorithm, with the same parameters, for computing the function $f$ defined as follows.
	\begin{equation*}
		f(x) = \left\{\begin{array}{ll}1,&\text{if }x \in \Pi\\ 0,&\text{if }x \text{ is }2\eps\text{-far from } \Pi.\end{array}\right.
\end{equation*}
\end{claim}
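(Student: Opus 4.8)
The plan is to verify the two conditions in Definition~\ref{def:robust} for $\rho_0 = \eps$ and $\rho_1 = 0$, using the tester guarantees from Definition~\ref{def:tester}. Since $f$ is a partial function whose domain is $\Pi \cup \overline{B_{2\eps}(\Pi)}$, it suffices to consider these two cases. The $\rho_1 = 0$ requirement is vacuous: $0$-robustness at a point $x$ with $f(x) = 1$ only asks that $\Pr[T^w = 1] \geq 1 - \sigma$ for $w \in B_0(x) = \set{x}$, which is exactly condition~(1) of the tester definition applied to $x \in \Pi$. So the whole content is the $0$-inputs.

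First I would fix $x$ with $f(x) = 0$, i.e.\ $x$ is $2\eps$-far from $\Pi$, and take an arbitrary $w \in B_\eps(x)$. The key step is a triangle-inequality argument on relative Hamming distance: for every $y \in \Pi$ we have $\dist(w,y) \geq \dist(x,y) - \dist(x,w) > 2\eps - \eps = \eps$, using $\dist(x,w) \leq \eps$ (since $w \in B_\eps(x)$) and $\dist(x,y) > 2\eps$ (since $x$ is $2\eps$-far from $\Pi$). Hence $w$ is $\eps$-far from $\Pi$, i.e.\ $w \in \overline{B_\eps(\Pi)}$, so condition~(2) of the tester definition gives $\Pr[T^w = 0] \geq 1 - \sigma$. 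Since $f(x) = 0$, this is precisely $\Pr[T^w = f(x)] \geq 1 - \sigma$, establishing $\eps$-robustness at $x$.

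Combining the two cases shows $T$ is $(\eps, 0)$-robust for computing $f$; the query complexity and error rate are unchanged because we invoked $T$ as a black box without modification. I would note the only mild subtlety is being careful that the ``far'' parameter is genuinely $2\eps$ and not merely ``more than $\eps$'': the strict inequality $\dist(x,y) > 2\eps$ is what yields the needed slack $\dist(w,y) > \eps$ after subtracting $\eps$. (One should also recall from the preliminaries that $B_\eps(\Pi) = \bigcup_{y \in \Pi} B_\eps(y)$ and that ``$\eps$-far from $\Pi$'' means distance exceeding $\eps$ from every element of $\Pi$, so that $\overline{B_\eps(\Pi)}$ is exactly the set of such $w$.) There is no real obstacle here; this is a routine unpacking of definitions, and the doubling of the proximity parameter is exactly the standard trick (illustrated in \cref{fig:PT}) for converting the one-sided ``boundary'' non-robustness of a tester into genuine $(\eps,0)$-robustness.
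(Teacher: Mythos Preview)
Your proposal is correct and follows essentially the same approach as the paper's proof: both fix a $0$-input $x$ that is $2\eps$-far from $\Pi$, take a point in its $\eps$-ball, and use the triangle inequality to conclude that point is $\eps$-far from $\Pi$, whence the tester's soundness condition yields robustness. You are simply more explicit than the paper in spelling out the vacuous $\rho_1=0$ case and the arithmetic of the triangle inequality.
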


\begin{proof}
By definition, the tester $T$ is a local algorithm for computing $f$; denote its error rate by $\sigma$. We show it satisfies (one-sided) robustness with respect to $f$. Let $x \in \Sigma^n$ be an input that is $2\eps$-far from $\Pi$, and consider $y \in B_\eps(x)$. By the triangle inequality, we have that $y$ is $\eps$-far from $\Pi$. Thus, $\Pr \left[T^y = 0 \right] \ge 1 - \sigma$, and so $T$ is an $(\eps,0)$-robust local algorithm for $f$.
\end{proof}

\begin{remark}[Robustness vs proximity tradeoff]
	The notion of a tester with proximity parameter $\eps$ and that of an $\eps$-robust tester with proximity parameter $2\eps$ coincide. Moreover, there is a tradeoff between the size of the promise captured by the partial function $f$ and the robustness parameter $\rho$: taking any $\eps' > \eps$, the tester $T$ is a $\rho$-robust local algorithm with $\rho = \eps' - \eps$ for computing the function
	\begin{equation*}
		f(x) = \left\{\begin{array}{ll}1,&\text{if }x \in \Pi\\ 0,&\text{if }x \text{ is }\eps'\text{-far from } \Pi.\end{array}\right.
  \end{equation*}
	As $\eps'$ increases, the robustness parameter $\rho$ increases and the size of the domain of definition of $f$ decreases. In particular, taking $\eps' = \eps$ makes $T$ a $(0,0)$-robust algorithm (i.e., an algorithm that is not robust).
\end{remark}

\subsection{Local codes}
\label{sec:codes-def}
We consider error-correcting codes that admit local algorithms for various tasks, such as testing, decoding, correcting, and computing functions of the message. Recall that a \emph{code} ${C \colon \bitset^k \to \bitset^n}$ is an injective mapping from \emph{messages} of length $k$ to codewords of \emph{blocklength} $n$. The \emph{rate} of the code $C$ is defined as $k/n$. The \emph{relative distance} of the code is the minimum, over all distinct messages $x,y \in\Sigma^k$, of $\dist(C(x),C(y))$. We shall sometimes slightly abuse notation and use $C$ to denote the set of all of its codewords $\set{C(x) : x\in\Sigma^k} \subset \Sigma^n$. Note that we focus on \emph{binary} codes, but remind that the extension to larger alphabets is straightforward.
In the following, we show how to cast the prominent notions of local codes as robust local algorithms.

\subsubsection{Locally testable codes}
Locally testable codes (LTCs) \cite{GS06} are codes that admit algorithms that distinguish codewords from strings that are far from being valid codewords, using a small number of queries.

\begin{definition}[Locally Testable Codes (LTCs)]
	A code $C \colon \bitset^k \to \bitset^n$ is locally testable, with respect to proximity parameter $\eps$ and error rate $\sigma$, if there exists a probabilistic algorithm $T$ that makes $q$ queries to a purported codeword $w$ such that:
	  \begin{enumerate}
  \item If $w = C(x)$ for some $x \in \bitset^k$, then $\Pr \left[T^w = 1 \right] \ge 1 - \sigma$.
  \item For every $w$ that is $\eps$-far from $C$, we have $\Pr \left[T^x = 0 \right] \ge 1 - \sigma$.
  \end{enumerate}
\end{definition}

Note that the algorithm $T$ that an LTC admits is simply an $\eps$-tester for the property of being a valid codeword of $C$. Thus, by \cref{clm:testrla}, we can directly cast $T$ as a robust local algorithm.

\subsubsection{Locally decodable codes}
Locally decodable codes (LDCs) \cite{KT00} are codes that admit algorithms for decoding each individual bit of the message of a moderately corrupted codeword by only making a small number of queries to it.

\begin{definition}[Locally Decodable Codes (LDCs)]
  \label{def:ldc}
  A code $C \colon \bitset^k \to \bitset^n$ is locally decodable with \emph{decoding radius} $\delta$ and error rate $\sigma$ if there exists a probabilistic algorithm $D$ that, given index $i \in[k]$, makes $q$ queries to a string $w$ promised to be $\delta$-close to a codeword $C(x)$, and satisfies
    \begin{equation*}
      \Pr[D^{w}(i) = x_i] \ge 1 - \sigma.
    \end{equation*}
\end{definition}

Note that local decoders are significantly different from local testers and testing in general. Firstly, decoders are given a promise that their input is \emph{close} to a valid codeword (whereas testers are promised to either receive a perfectly valid input, or one that is far from being valid). Secondly, a decoder is given an index as an explicit parameter and is required to perform a different task (decode a different bit) for each parameter (see \cref{fig:LDC}).

Nevertheless, local decoders can also be cast as robust local algorithms. In fact, unlike testers, they satisfy \emph{two-sided} robustness (i.e., both 0-inputs and 1-inputs are robust). In the following, note that since inputs near the boundary of the decoding radius are \emph{not} robust, we reduce the decoding radius by a factor of 2.

\begin{claim}
	\label{clm:LDC}
	A local decoder $D$ with decoding radius $\delta$ for the code $C \colon \bitset^k \to \bitset^n$ is a $(\delta/2,\delta/2)$-robust local algorithm for computing the function $f$ defined as follows.
	\begin{equation*}
		f(z,w) = x_z, \text{ if }x \in \bitset^k \text{ is such that }w \text{ is }\delta/2\text{-close to } C(x).
	\end{equation*}
\end{claim}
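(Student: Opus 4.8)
The plan is to follow essentially the same template as \cref{clm:testrla}; the only difference is that a decoder's promise already concerns \emph{proximity} to a codeword rather than exact membership, which is exactly what upgrades the conclusion from one-sided to two-sided robustness. Before anything else I would note that $f$ is well-defined on its domain $B_{\delta/2}(C)$: if a word $w$ is $\delta/2$-close to both $C(x)$ and $C(x')$, the triangle inequality gives $\dist(C(x), C(x')) \leq \delta$, so $x = x'$ (the decoding radius is less than half the relative distance of $C$, as is needed for \cref{def:ldc} to be meaningful), and hence $f(z,w) = x_z$ is unambiguous. Next I would check that $D$ is a local algorithm for $f$ in the sense of \cref{def:localalg}, with the same query complexity $q$ and error rate $\sigma$: given an index $z \in [k]$ and a word $w$ in the domain of $f(z,\cdot)$, i.e.\ with $\dist(w, C(x)) \leq \delta/2$ for the unique message $x$, in particular $w$ is $\delta$-close to $C(x)$, so \cref{def:ldc} gives $\Pr[D^w(z) = x_z] \geq 1-\sigma$, and $x_z = f(z,w)$ by definition.

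Then I would establish $(\delta/2,\delta/2)$-robustness by unwinding \cref{def:robust}. Fix $z \in [k]$, $b \in \bitset$, and a point $w$ with $f(z,w) = b$; by definition of $f$ there is a (unique) message $x \in \bitset^k$ with $x_z = b$ and $\dist(w, C(x)) \leq \delta/2$. For any $w' \in B_{\delta/2}(w)$, the triangle inequality yields $\dist(w', C(x)) \leq \delta/2 + \delta/2 = \delta$, so $w'$ lies within the decoding radius of $C(x)$; applying \cref{def:ldc} gives $\Pr[D^{w'}(z) = x_z] \geq 1-\sigma$, which, since $f(z,w) = x_z = b$, is exactly the robustness condition $\Pr[D^{w'}(z) = f(z,w)] \geq 1-\sigma$. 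Because this holds at every point $w$ with $f(z,w) = b$, for both $b \in \bitset$ and every $z$, the decoder $D$ is $(\delta/2,\delta/2)$-robust for $f$.

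The argument is routine; the only thing that needs care is the factor-of-two slack. Halving the decoding radius is precisely what guarantees that the $\delta/2$-ball around any admissible input still sits inside $B_\delta(C)$, so the decoder's guarantee transfers to the whole neighbourhood — this is the reason the claim is stated with $\delta/2$ rather than $\delta$ (as remarked just before the statement, and illustrated by the red region in \cref{fig:LDC}, inputs near the boundary of the decoding radius are genuinely non-robust). No combinatorial machinery is involved here; sunflowers and daisies enter only in the proof of \cref{infthm:main} itself.
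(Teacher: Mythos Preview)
Your proof is correct and follows essentially the same approach as the paper's own proof: take $w$ in the domain, take $w' \in B_{\delta/2}(w)$, and use the triangle inequality to conclude $w'$ is still within decoding radius $\delta$, so the decoder outputs $x_z$ with probability at least $1-\sigma$ regardless of whether $x_z = 0$ or $x_z = 1$. Your version is more detailed (you explicitly address well-definedness of $f$ and the local-algorithm condition, which the paper leaves implicit), but the argument is the same.
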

\begin{proof}
	Take any $w \in \bitset^n$ that is $\delta/2$-close to $C(x)$. Then, $(w,z)$ is in the domain of definition of $f$ for all explicit inputs $z \in [k]$. Now let $w' \in B_{\delta/2}(w)$ and note that $w'$ is still within the decoding radius of $D$. Hence, the decoder $D$ outputs $x_z$ with probability $1-\sigma$, as required. Moreover, this holds regardless whether $x_z = 0$ or $x_z = 1$, and so $D$ is $(\delta/2,\delta/2)$-robust.
\end{proof}

\begin{remark}[Robustness vs decoding radius tradeoff]
	A local decoder has decoding radius $\delta$ if and only if it is $\delta/2$-robust with decoding radius $\delta/2$, and a tradeoff between promise size and robustness parameter likewise holds in this case: for any $\delta' < \delta$, the decoder $D$ is a $(\delta-\delta', \delta - \delta')$\nobreakdash-robust algorithm for the restriction of $f$ to the $\delta'$-neighbourhood of the code $C$. In particular, $D$ is a $(\delta,\delta)$-robust algorithm with the domain of $f$ defined to be the code $C$.
\end{remark}

\subsubsection{Relaxed locally decodable codes}
Relaxed locally decodable codes (relaxed LDCs) \cite{BGHSV06} are codes that admit a natural relaxation of the notion of local decoding, in which the decoder is allowed to output a special abort symbol $\bot$ on a small fraction of indices, indicating it detected an inconsistency, but never erring with high probability.

\begin{definition}[Relaxed LDCs]\label{def:rldc}
  A code $C\colon\bitset^k \to \bitset^n$ whose distance is $\delta_C$ is a $q$-local relaxed LDC with success rate $\rho$, decoding radius $\delta \in (0,\delta_C/2)$ and error rate $\sigma \in (0, 1/3]$ if there exists a randomised algorithm $D$, known as a \emph{relaxed decoder}, that, on input $i\in [k]$, makes at most $q$ queries to an oracle $w$ and satisfies the following conditions.

  \begin{enumerate}
  \item \textsf{Completeness}: For any $i\in [k]$ and $w = C(x)$, where $x\in\bitset^k$,
  \begin{equation*}
    \Pr[D^{w}(i) = x_i] \ge 1 - \sigma\;.
  \end{equation*}

  \item \textsf{Relaxed Decoding}: For any $i\in [k]$ and $w \in\bitset^{n}$ that is $\delta$-close to a (unique) codeword $C(x)$,
    \begin{equation*}
      \Pr[D^{w}(i) \in \{x_i,\bot\}] \ge 1 - \sigma\;.
    \end{equation*}

    \item \textsf{Success Rate}: There exists a constant $\xi>0$ such that, for any $w\in\bitset^n$ that is $\delta$-close to a codeword $C(x)$, there exists a set $I_w\subseteq [k]$ of size at least $\xi k$ such that for every $i\in I_w$,
    \begin{equation*}
        \Pr[D^w(i)=x_i]\geq 2/3 \;.
    \end{equation*}
  \end{enumerate}
\end{definition}

Note that strictly speaking, the special abort symbol makes it so that relaxed local decoders do not fully fit \cref{def:localalg}, as the input-output mapping $f$ becomes one-to-many. Nevertheless, a simple generalisation of local algorithms, which allows an additional abort symbol, enables us to capture relaxed LDCs as robust local algorithms as well. We show this in \cref{sec:rldc}.

\subsubsection{Locally correctable codes}
The notion of locally correctable codes (LCCs) is closely related to that of LDCs, except that rather than admitting an algorithm that can decode any individual \emph{message} bit, LCCs admit an algorithm that can correct any corrupted \emph{codeword} bit of a moderately corrupted codeword.

\begin{definition}[Locally Correctable Codes (LCCs)]
  \label{def:lcc}
  A code $C \colon \bitset^k \to \bitset^n$ is locally correctable with \emph{correcting radius} $\delta$ and error rate $\sigma$ if there exists a probabilistic algorithm $D$ that, given index $j \in[n]$, makes $q$ queries to a string $w$ promised to be $\delta$-close to a codeword $C(x)$ and satisfies
    \begin{equation*}
      \Pr[D^{w}(j) = C(x)_j] \ge 1 - \sigma.
    \end{equation*}
\end{definition}

A straightforward adaptation of \cref{clm:LDC} yields the following claim.
\begin{claim}
	A local corrector $D$ with correcting radius $\delta$ for the code $C \colon \bitset^k \to \bitset^n$ is a $(\delta/2,\delta/2)$-robust local algorithm for computing the function $f$ defined as follows.
	\begin{equation*}
		f(z,w) = C(x)_z, \text{ if }x \in \bitset^k \text{ is such that }w \text{ is }\delta/2\text{-close to } C(x).
	\end{equation*}
\end{claim}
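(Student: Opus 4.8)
The plan is to mirror the proof of \cref{clm:LDC} essentially verbatim, since the only difference between a local decoder and a local corrector is the target of the explicit index — a message bit $x_z$ versus a codeword bit $C(x)_z$ — and this distinction plays no role whatsoever in the robustness argument. First I would note that $f$ is well defined: since $D$ has correcting radius $\delta$, the code $C$ has relative distance exceeding $\delta$, so the $\delta/2$-balls around distinct codewords are pairwise disjoint; hence for any $w$ there is at most one $x \in \bitset^k$ with $\dist(w, C(x)) \le \delta/2$, and $C(x)_z$ is unambiguous. In particular, the domain of definition of $f$ is exactly $[n] \times B_{\delta/2}(C)$.

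Next I would verify that $D$ is a $q$-local algorithm for $f$ in the sense of \cref{def:localalg}: it makes at most $q$ queries, and for every $(z,w)$ in the domain of $f$ — i.e., $w$ that is $\delta/2$-close to some $C(x)$ — the string $w$ is in particular $\delta$-close to $C(x)$, so the defining guarantee of a local corrector gives $\Pr[D^w(z) = C(x)_z] = \Pr[D^w(z) = f(z,w)] \ge 1 - \sigma$.

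For the robustness claim, I would fix $(z,w)$ in the domain of $f$, so that $\dist(w, C(x)) \le \delta/2$ for the (unique) codeword $C(x)$, and take any $w' \in B_{\delta/2}(w)$. By the triangle inequality, $\dist(w', C(x)) \le \delta$, so $w'$ is still within the correcting radius of $D$, whence $\Pr[D^{w'}(z) = C(x)_z] \ge 1 - \sigma$. Since $f(z,w') = C(x)_z = f(z,w)$, this is precisely the statement that $D$ is $\delta/2$-robust at $w$ with respect to $f$ (per \cref{def:robust}); and because the argument never invoked the value of the bit $C(x)_z$, it applies uniformly to points where this value is $0$ and to points where it is $1$, yielding $(\delta/2,\delta/2)$-robustness.

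I do not expect any real obstacle: the crux is a single triangle-inequality step, and the only point requiring a modicum of care — the well-definedness of the partial function $f$ via the minimum distance of $C$ — is inherited from the standing assumptions on the code and was already implicit in \cref{clm:LDC,def:ldc}.
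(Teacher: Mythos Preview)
Your proposal is correct and follows exactly the approach the paper intends: the paper itself does not spell out a proof but merely states that ``a straightforward adaptation of \cref{clm:LDC} yields the following claim,'' and your argument is precisely that adaptation (with the added care of checking well-definedness of $f$). One minor slip: the equality $f(z,w') = C(x)_z$ need not hold, since $w'$ may lie outside $B_{\delta/2}(C)$ and hence outside the domain of $f$; but this is harmless, as robustness (\cref{def:robust}) only requires $\Pr[D^{w'}(z) = f(z,w)] \ge 1-\sigma$, which you have already established.
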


\subsubsection{Universal locally testable codes}
Universal locally testable codes (universal LTCs) \cite{GG18} are codes that admit local tests for membership in numerous possible subcodes, allowing for testing properties of the encoded message.

\begin{definition}[Universal LTCs]
	A universal LTC $C \colon \bitset^k \to \bitset^n$ for a family of functions $\mathcal{F} = \left\{ f_i : \bitset^k \to \bitset \right\}_{i \in [M]}$ is a code such that for every $i \in [M]$ the subcode $\{ C(x) \: : \: f_i(x) = 1 \}$ is locally testable.
\end{definition}

 Note that ULTCs trivially generalise LTCs, as well as generalise relaxed LDCs (see details in \cite[Appendix A]{GG18}). Since universal testers can be viewed as algorithms that receive an explicit parameter $i \in [M]$ and invoke an  $\eps$-tester for the property $\{ C(x) \: : \: f_i(x) = 1 \}$, then by applying \cref{clm:testrla} to each value of the parameter $i$ they can be cast as robust local algorithms.

\subsection{PCPs of proximity}
\label{sec:PCP-def}
PCPs of proximity (PCPPs) \cite{BGHSV06} are probabilistically checkable proofs wherein the verifier is given query access not only to the proof, but also to the input. The PCPP verifier is then required to probabilistically check whether the statement is correct by only making a constant number of queries to both input and proof.

\begin{definition}
\label{def:pcpp}
	A PCP of proximity (PCPP) for a language $L$ with proximity parameter $\eps$, error rate $\sigma$ and query complexity $q$, consists of a probabilistic algorithm $V$, called the verifier, that receives query access to \emph{both} an input $x \in \Sigma^n$ and a proof $\pi \in \bitset^m$. The verifier $V$ is allowed to make $q$ queries to $(x, \pi)$ and satisfies the following:
	\begin{enumerate}
		\item for every $x \in L$ there exists a proof $\pi$ such that $\Pr[V^{(x,\pi)}=1] \geq 1 - \sigma$; and
		\item for every $x$ that is $\eps$-far from $L$ and every proof $\pi$, it holds that $\Pr[V^{(x,\pi)}=0] \geq 1-\sigma$.
	\end{enumerate}
\end{definition}

We observe that PCPs of proximity with canonical proofs \cite{GS06} (i.e., such that the verifier rejects statement-proof pairs that are far from being the concatenation of a valid statement with a valid proof for it) admit verifiers that are robust local algorithms. Using the tools of \cite{DGG19}, who show that PCPPs can be endowed with the canonicity property at the cost of polynomial blowup in proof length, we can obtain robust local algorithms for general PCPPs.

\begin{claim}
	A PCPP for a language $L \subseteq \Sigma^n$ with proximity parameter $\eps>0$, error rate $\sigma$ and query complexity $q$ can be transformed into a PCPP for $L$ with proximity parameter $2\eps$, whose verifier is an $(\eps,0)$-robust local algorithm with the same query complexity and error rate.
\end{claim}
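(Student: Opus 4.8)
The plan is to combine the canonicalization result of \cite{DGG19} with the same ``double the proximity parameter'' trick used for testers in \cref{clm:testrla}. First I would invoke \cite{DGG19} to transform the given PCPP for $L$ (with proximity parameter $\eps$, query complexity $q$, error rate $\sigma$) into a PCPP for $L$ \emph{with canonical proofs}, at the cost of a polynomial blowup in proof length but preserving $\eps$, $q$, and $\sigma$ up to constants. Canonicity means that for each $x \in L$ there is a designated proof $\pi_x$, and the verifier rejects (with probability $\geq 1-\sigma$) any pair $(x,\pi)$ that is $\eps$-far from the set $\mathcal{C}_L \coloneqq \set{(y,\pi_y) : y \in L}$ of canonical statement-proof pairs. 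In other words, the canonical PCPP verifier is exactly an $\eps$-tester (over the alphabet of statement-proof pairs) for the property $\mathcal{C}_L$.

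Having reduced to a tester, I would apply \cref{clm:testrla} verbatim: an $\eps$-tester for a property is an $(\eps,0)$-robust local algorithm, with the same query complexity and error rate, for the function $f$ that outputs $1$ on members of $\mathcal{C}_L$ and $0$ on pairs that are $2\eps$-far from $\mathcal{C}_L$. Unfolding what this says about the original language: the resulting verifier $V'$ accepts $(x,\pi_x)$ for every $x \in L$ with probability $\geq 1-\sigma$, and rejects with probability $\geq 1-\sigma$ every $(x,\pi)$ that is $2\eps$-far from having a canonical completion — in particular, since for $x$ that is $2\eps$-far from $L$ no proof $\pi$ can make $(x,\pi)$ close to $\mathcal{C}_L$ (the first coordinate alone is $2\eps$-far), $V'$ rejects all such pairs. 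Thus $V'$ is a PCPP for $L$ with proximity parameter $2\eps$ whose verifier is $(\eps,0)$-robust, completing the claim. The triangle-inequality argument establishing robustness is identical to the proof of \cref{clm:testrla}: if $(x,\pi)$ is $2\eps$-far from $\mathcal{C}_L$ and $(x',\pi') \in B_\eps((x,\pi))$, then $(x',\pi')$ is $\eps$-far from $\mathcal{C}_L$, so $V'$ rejects it with high probability.

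The main obstacle, such as it is, is purely bookkeeping: one must check that the canonicalization of \cite{DGG19} indeed yields the clean ``tester for $\mathcal{C}_L$'' formulation (i.e., that the rejection guarantee on $\eps$-far pairs is with respect to Hamming distance on the \emph{concatenated} input-proof string, with a possibly rescaled but still constant proximity parameter), and that the polynomial blowup in proof length is acceptable for our applications — it is, since \cref{infthm:mappt}-style results only lose a multiplicative $p$ factor in the proof length. No genuine difficulty arises beyond invoking the two cited black boxes in sequence.
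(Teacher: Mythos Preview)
Your proposal is correct and follows essentially the same route as the paper's own (sketch of) proof: invoke the canonicalization of \cite{DGG19} to obtain a verifier whose soundness is with respect to canonical statement--proof pairs, then apply the triangle-inequality/doubling argument of \cref{clm:testrla} to conclude $(\eps,0)$-robustness. The only cosmetic differences are that the paper allows a \emph{set} of canonical proofs $\Pi(x)$ per $x$ rather than a single $\pi_x$, and that it writes out the function $f$ and the triangle-inequality step directly rather than citing \cref{clm:testrla}; your acknowledgment of the distance-rescaling bookkeeping between $x$ and $(x,\pi)$ is, if anything, slightly more careful than the paper's sketch.
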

\begin{proof}[Sketch of proof.]
	Let $V$ be a PCPP verifier with proximity parameter $\eps$ and error rate $\sigma$ for $L \subseteq \Sigma^n$, that makes at most $q$ queries to its input-proof pair $(x, \pi) \in \Sigma^n \times \bitset^m$. By \cite[Section 3]{DGG19}, there exists a PCPP verifier $V'$ for $L$ with $\poly(m)$ proof length (as well as proximity parameter $\eps$, error rate $\sigma$ and query complexity $q$) that satisfies the following strengthening of the conditions in \cref{def:pcpp}: there is a set of \emph{canonical proofs} $\set{\pi_x}_{x \in L}$ such that
	\begin{enumerate}[noitemsep]
		\item for every $x \in L$, it holds that $\Pr[V^{(x,\pi_x)}=1] \geq 1 - \sigma$; and
	  \item for every $(x,\pi)$ that is $\eps$-far from $(y, \pi_y)$ for all $y \in L$, it holds that $\Pr[V^{(x,\pi)}=0] \geq 1-\sigma$.
	\end{enumerate}
	In other words, $V'$ is an $\eps$-tester for the property $\Pi \coloneqq \set{(x, \pi_x) : x \in L}$, and we invoke \cref{clm:testrla}.
\end{proof}

\paragraph{Non-interactive proofs of proximity\ifsicomp\else.\fi} MA proofs of proximity (MAPs) \cite{GR18,FGL14} are proof systems that can be viewed as a property testing analogue of NP proofs. The setting of MAPs is very similar to that of PCPPs, with the distinction that the purported proof is of sublinear size and is given explicitly, i.e., the MAP verifier can read the entire proof. We remark that an equivalent description of a MAP as a covering by partial testers \cite{FGL14} is used in this work, so that \emph{every fixed proof string} defines a tester, and \cref{clm:testrla} applies. We cover this in \cref{sec:map}.

\section{Technical lemmas}
\label{sec:lemmas}
In the section we provide an arsenal of technical tools for analysing robust local algorithms, which we will then use to prove our main result in \cref{sec:proof}. The order in which we present the tools is according to their importance, starting with the most central lemmas.

Specifically, in \cref{sec:daisy} we discuss the notion of relaxed sunflowers that we shall need, called daisies, then state and prove a daisy partition lemma for multi-collections of sets.
In \cref{sec:HS}, we apply the Hajnal-Szemer\'{e}di theorem to derive a sampling lemma for daisies.
In \cref{sec:volume}, we prove a simple yet vital volume lemma for robust local algorithms, which will be used throughout our analysis.
Finally, in \cref{sec:preprocessing} we adapt generic transformations (amplification and randomness reduction) to our setting of robust local algorithms.

\subsection{Relaxed sunflowers}
\label{sec:daisy}

We discuss the central technical tool used in the transformation to sample-based algorithms, which is a relaxation of combinatorial sunflowers, referred to as \emph{daisies} \cite{FLV15,GL21}. We extend the definition of daisies to multi-sets, then state and prove the particular variant of a daisy lemma that we shall need.

\begin{definition}[Daisy]
	\label{def:relaxed-sunflower}
	Suppose $\sets$ is a multi-collection of subsets of $[n]$ (i.e., subsets may repeat).
	$\sets$ is an $h$-daisy (where $h\colon\N\rightarrow \N$) with petals of size $j$ and \emph{kernel} $K \subseteq [n]$ if the following holds: every $S\in \sets$ has a \emph{petal} $S\setminus K$ with $\abs{S\setminus K}=j$ and, for every $k \in [j]$, there exists a subset $P_k \subseteq S \setminus K$ with $\abs{P_k} \geq k$ whose elements are contained in at most $h(k)$ sets from $\sets$.
	
	\noindent A daisy with pairwise disjoint petals ($1$-daisy) is referred to as a \emph{simple daisy}.
\end{definition}

We remark that the notion of a daisy relaxes the standard definition of a sunflower in two ways: (1) the kernel is not required to equal the pairwise intersection of all sets in the collection, its structure is unconstrained; and (2) the \emph{petals} $\petals = \set{S \setminus K : S \in \daisy}$ need not be pairwise disjoint, but rather, each point outside of the kernel can be contained in at most $h(j)$ sets of $\daisy$; see \cref{fig:daisy}. Note that \cref{def:relaxed-sunflower}, in contrast to sunflowers (for which pairwise disjointness disallows multiple copies of a same set), applies to \emph{multi-sets}.

These relaxations, unlike in the case of sunflowers, allow us to arbitrarily partition any collection of subsets into a collection of daisies with strong structural properties, as \cref{thm:daisy-partition} shows.
\begin{lemma}[Daisy partition lemma for multi-collections]
\label{thm:daisy-partition}
	Let $\sets$ be a multi-collection of $q$-sets of $[n]$, and define the function $h\colon\N\rightarrow\N$ as follows:
	\begin{equation*}
		h(k) =  n^{\frac{\max\{1,k-1\}}{q}} \:.
	\end{equation*}
	Then, there exists a collection $\set{\daisy_j : 0 \leq j \leq q}$ such that
	\begin{enumerate}
		\item $\set{\daisy_j}$ is a partition of $\sets$, i.e., $\bigcup_{j=0}^{q}\daisy_j = \sets$ and $\daisy_j \cap \daisy_k = \emptyset$ when $j \neq k$.
		\item For every $0 \leq j \leq q$, there exists a set $K_j \subseteq [n]$ of size $\abs{K_j} \leq q\abs{\sets}\cdot n^{-\max\{1,j\}/q}$ such that $\daisy_j$ is an $h$-daisy with kernel $K_j$ and petals of size $j$. Moreover, the kernels form an incidence chain $\varnothing = K_q \subseteq K_{q-1} \subseteq \cdots \subseteq K_1 = K_0$.
	\end{enumerate}
\end{lemma}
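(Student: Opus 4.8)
The plan is to build the kernels directly out of the element degrees of $\sets$ and then send each $q$-set into the unique daisy determined by how its elements distribute among the kernels. Concretely, for $v \in [n]$ let $d(v)$ be the number of members of $\sets$ (with multiplicity) containing $v$, and define $K_j \coloneqq \set{v \in [n] : d(v) > n^{j/q}}$ for $j \in [q-1]$, together with the conventions $K_0 \coloneqq K_1$ and $K_q \coloneqq \varnothing$. Two facts about these kernels come for free. First, since $\sum_{v} d(v) = \sum_{S \in \sets} |S| = q|\sets|$ and every $v \in K_j$ (for $j \in [q-1]$) contributes more than $n^{j/q}$ to this sum, double counting gives $|K_j| \le q|\sets| \cdot n^{-j/q}$; combined with $|K_0| = |K_1|$ and $|K_q| = 0$ this is exactly the bound $|K_j| \le q|\sets| \cdot n^{-\max\set{1,j}/q}$ for every $0 \le j \le q$. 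Second, because the thresholds $n^{j/q}$ increase with $j$, the sets $K_{q-1},\ldots,K_1$ are nested; adjoining $K_0 = K_1$ and $K_q = \varnothing$ produces the required incidence chain $\varnothing = K_q \subseteq K_{q-1} \subseteq \cdots \subseteq K_1 = K_0$.

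Next I would route the sets. Fix $S \in \sets$ and look at the quantity $|S \setminus K_j|$ as $j$ ranges over $\set{0,\ldots,q}$: since the $K_j$ shrink as $j$ grows, $|S\setminus K_j|$ is non-decreasing in $j$, it is nonnegative at $j = 0$, and it equals $q$ at $j = q$. I would then set $j(S) \coloneqq \min\set{j : |S \setminus K_j| \le j}$, which is well defined because $j = q$ always qualifies, and place $S$ into $\daisy_{j(S)}$. Monotonicity forces $|S \setminus K_{j(S)}| = j(S)$ exactly: if $j(S) \ge 1$, then $j(S)-1$ fails the defining inequality, so $|S\setminus K_{j(S)-1}| \ge j(S)$, and hence $|S\setminus K_{j(S)}| \ge |S \setminus K_{j(S)-1}| \ge j(S)$, while the definition gives $|S\setminus K_{j(S)}| \le j(S)$ (the case $j(S)=0$ is immediate). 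The same minimality shows $|S\setminus K_j| > j$ for every $j < j(S)$. As $j(S)$ depends only on $S$, the collection $\set{\daisy_j : 0 \le j \le q}$ partitions $\sets$, and each $S \in \daisy_j$ has petal $S \setminus K_j$ of size exactly $j$.

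It then remains to verify that $\daisy_j$ meets \cref{def:relaxed-sunflower} with kernel $K_j$. Fix $S \in \daisy_j$ and $k \in [j]$, and take the witness $P_k \coloneqq S \setminus K_{k-1}$ (using $K_0 = K_1$ for $k = 1$). Since $k - 1 < j$ and the kernels are nested, $K_{k-1} \supseteq K_j$, so $P_k \subseteq S\setminus K_j$ sits inside the petal of $S$. Since $k - 1 < j(S)$, the previous paragraph gives $|P_k| = |S \setminus K_{k-1}| > k - 1$, i.e. $|P_k| \ge k$. Finally every $v \in P_k$ satisfies $v \notin K_{k-1}$, hence $d(v) \le n^{\max\set{1,k-1}/q} = h(k)$, and since $\daisy_j \subseteq \sets$ such $v$ lies in at most $h(k)$ sets of $\daisy_j$; this is precisely the condition in \cref{def:relaxed-sunflower}.

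The substance of the argument — and the step I expect to need the most care — is the routing: one must argue that the geometric spacing $n^{j/q}$ of the degree thresholds is fine enough that every $q$-set has a scale $j(S)$ at which exactly $j(S)$ of its elements lie outside $K_{j(S)}$, while strictly more than $k$ lie outside $K_k$ at every smaller scale $k$ (so that the sets $P_k$ demanded by the daisy definition can be taken to have the right size). This is exactly what the monotonicity-and-sandwiching observation above supplies, and it is where the boundary conventions $K_0 = K_1$ and $K_q = \varnothing$ are used. Everything else — the degree double count, the nestedness of the kernels, and the verification of the daisy axioms — is routine bookkeeping.
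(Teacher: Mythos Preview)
Your proof is correct and follows essentially the same approach as the paper: the kernels $K_j$ are defined by the same degree thresholds, and your routing rule $j(S)=\min\set{j:|S\setminus K_j|\le j}$ is exactly the non-iterative formulation of the paper's greedy placement. The only cosmetic difference is that you verify the $h$-daisy condition by directly exhibiting the witness $P_k=S\setminus K_{k-1}$, whereas the paper argues the same point by contradiction.
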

\begin{proof}
	We construct the collections $\set{\daisy_j : 0 \leq j \leq q}$ in a greedy iterative manner, as follows.

	\begin{enumerate}
		\item Define $\sets_0 \coloneqq \sets$.
		\item Inductively define, for each $0 \leq j \leq q-1$:
		\begin{enumerate}
		\item \emph{Kernel construction:} Define $K_j$ as the set of points in $[n]$ that are contained in at least $h(j+1)$ sets from $\sets$.
		\item \emph{Daisy construction:} Set $\daisy_j$ to be all the sets $S \in \sets_j$ such that $\abs{S \setminus K_j} = j$.
		\item Set $\sets_{j+1}$ to be $\sets_j \setminus \daisy_j$.
		\end{enumerate}
		\item Finally, set $\daisy_q = \sets_{q}$ and $K_q = \emptyset$.
	\end{enumerate}

	We now prove that this construction yields daisies with the required properties. For ease of notation, let $d_i$ be the number of sets in $\sets$ containing $i$ for each $i \in [n]$.
	By definition, $\sets_q\subseteq \sets_{q-1}\subseteq\dots\subseteq\sets_0$ and $\daisy_j \subseteq \sets_j$ for all $j$. Since $\daisy_{j-1} \cap \sets_j = \varnothing$ for all $j \in [q]$, it follows that $\daisy_j \cap \daisy_k = \varnothing$ when $j \neq k$. Also, since $\sets = \sets_q \cup \bigcup_{0 \leq j < q} \daisy_j$ and $\sets_q = \daisy_q$, we have $\sets =  \bigcup_{0 \leq j \leq q} \daisy_j$.

	Since $\sets$ is comprised of $q$-sets,
	\begin{equation*}
		\sum_{i \in [n]}d_i = q  \abs{\sets}.
	\end{equation*}
	By the kernel construction, for each $j\in \{0,1,\dots,q\}$, $K_j$ is the set of all $i \in [n]$ such that $d_i \geq h(j+1)$, which implies $\sum_{i \in K_j}d_i \geq \abs{K_j}\cdot h(j+1)$. Therefore,
	\begin{equation*}
		q \abs{\sets} = \sum_{i \in [n]} d_i \geq \sum_{i \in K_j}d_i \geq \abs{K_j}\cdot h(j+1)
	\end{equation*}
	and thus $\abs{K_j} \leq q \abs{\sets} / h(j+1) = q \abs{\sets}\cdot n^{-\max\set{1,j}/q}$. Note, also, that the kernel construction ensures not only $K_j \subseteq K_{j-1}$ when $j \in [q]$, but also $K_0 = K_1$ because $h(1) = h(2)$.

	Since the petals of each $S \in \daisy_j$ have size exactly $j$ by construction, all that remains to be proven is the intersection condition on the petals that makes $\daisy_j$ an $h$-daisy; namely, that for every $k \in [j]$, there exists a subset $P_k \subseteq S \setminus K$ with $\abs{P_k} \geq k$ whose elements are contained in at most $h(k)$ sets from $\daisy_j$. Assume for the sake of contradiction that this condition does not hold.

	Let $j \in [q]$ be the smallest value for which $\daisy_j$ is not an $h$-daisy and $S \in \daisy_j$ be a set that violates the intersection condition; then take $k \leq j$ to be the smallest subset size such that every $P_k \subseteq S \setminus K_j$ with size $k$ has an element $i \in P_k$ with $d_i > h(k)$ (equivalently said, $j$ and $k$ are minimal such that the subset $L \subset S \setminus K_j$, comprised of all $i$ with $d_i \leq h(k)$, has size at most $k - 1$).

	Suppose first that $k = 1$. Then, every $i \in S\setminus K_j$ is such that $d_i > h(2) = h(1)$ and thus $S \setminus K_j \subseteq K_0$. But this implies $S \in \daisy_0$ (since $\abs{S \setminus K_0} = 0$), a contradiction because the intersection condition holds by vacuity on empty petals.

	Suppose now that $k > 1$.	The subset
	\begin{equation*}
		L \coloneqq \set{i \in S \setminus K_j : d_i \leq h(k)}
	\end{equation*}
	contains at most $k-1$ points; however, by minimality of $k$, at least $k-1$ distinct points $i \in L$ satisfy $d_i \leq h(k-1) \leq h(k)$. Therefore, $\abs{L} = k - 1$ and
	\begin{equation*}
		L = \set{i \in S \setminus K_j : d_i \leq h(k-1)}\;.
	\end{equation*}
	By the definition of $L$, every $i \in S \setminus (K_j \cup L)$ satisfies $d_i > h(k)$, so that $i \in K_{k-1}$; therefore, $S \subseteq K_{k-1} \cup K_j \cup L$. Since the kernels form an incidence chain, $K_j \subseteq K_{k-1}$ and thus $S \setminus K_{k-1} = L$. But then $\abs{S \setminus K_{k-1}} = \abs{L} = k - 1$, so that $S \in \daisy_{k-1}$, contradicting the fact that $S \in \daisy_j$ (because $k - 1 < j$ and $\set{\daisy_j}$ is a partition).
\end{proof}

The following claim shows an upper bound on the total number of sets in an $h$-daisy that may intersect a given petal. It will be useful in order to partition a daisy into \emph{simple} daisies, as the next section will show.

\begin{claim}
  \label{clm:hbound}
	Let $\sets$ be a multi-collection of $q$-sets and $\set{\daisy_j : 0 \leq j \leq q}$ be a daisy partition obtained by an application of \cref{thm:daisy-partition}. Then, for every $j \in [q]$ and $S \in \daisy_j$, the number of sets in $\daisy_j$ whose petals intersect $S \setminus K_j$ (including $S$ itself) is at most $2 h(j)=2n^{\max\{1,j-1\}/q}$.
\end{claim}
\begin{proof}
	Let $S$ be an arbitrary set in $\daisy_j$.
	We name the elements in $S \setminus K_j$ by $u_1,u_2,\dots,u_j$ (by \cref{thm:daisy-partition}, every $S \in \daisy_j$ satisfies $\abs{S \setminus K_j} =j$).
	For every $k\in [j]$, let $d_k$ be the number of sets of $\daisy_j$ that $u_k$  is a member of.

	Assume without loss of generality that $d_k \leq d_\ell$ for every $k$ and $\ell$ in $[j]$ such that $k<\ell$, as otherwise we can rename $u_1,u_2,\dots,u_j$ so that this holds.

	By the definition of an $h$-daisy, for every $\ell\in [j]$, there exists a set of $\ell$ elements $k \in [j]$ that satisfy $d_k \leq h(\ell)$.
	Thus, $[\ell]$ is such a set and we know that $d_{\ell} \leq h(\ell)$.
	As the number of petals that intersect $S \setminus K_j$ is at most $\sum_{k=1}^{j} d_{k}$, we get that
	\begin{equation*}
		\sum_{k=1}^{j} d_{k}\leq\sum_{k = 1}^j h(k) = \sum_{k = 1}^j n^{\frac{\max\set{1,k-1}}{q}} \leq 2h(j).
	\end{equation*}
	The last equality follows directly from the value of $h(k)$.
\end{proof}

\subsection{Sampling daisies and the Hajnal-Szemer\'{e}di theorem}
\label{sec:HS}

Concentration of measure is an essential ingredient in our proofs, which we first illustrate via a simplified example. Consider a collection of singletons that comprise the petals of a combinatorial sunflower: sets $P_1, \ldots P_k$, all disjoint and of size 1, contained in the ground set $[n]$. If we perform binomial sampling of the ground set (sampling each $i \in [n]$ independently with probability $p$), the Chernoff bound ensures that the number of sampled petals is close to its expectation. Defining $X_i$ as the random variable that indicates whether $P_i$ was sampled, we have lower and upper tail bounds that guarantee the number of queried petals is concentrated around $pn$ except with exponentially small probability. Note, too, that the same holds for larger petals: if $P_i$ is a $j$-set for all $i$, the number of queried petals is concentrated around $p^j n$.

Now consider the case where $P_1, \ldots, P_k$ are petals of a \emph{daisy}. In this case the Chernoff bound does not apply, since the indicator random variables $X_i$ are no longer independent; however, the structure of a daisy ensures there is not too much intersection among these petals, which gives means to control the correlation between these random variables. It is thus reasonable to expect that \emph{sampling a daisy is similar to sampling a sunflower}. This intuition is formalised by making use of the Hajnal-Szemer\'{e}di theorem \cite{HS70}, which we state next.

\begin{theorem}
  Let $G$ be a graph with $m$ vertices and maximum degree $\Delta$. Then, for any $k \geq \Delta + 1$, there exists a $k$-colouring of the vertices of $G$ such that every colour class has size either $\lfloor m/k\rfloor$ or $\lceil m/k\rceil$.
\end{theorem}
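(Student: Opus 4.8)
The plan is to derive the stated theorem from the special case of \emph{equitable colourings}: every graph $H$ with $\Delta(H) \le k-1$ and $|V(H)| = k\ell$ admits a proper $k$-colouring all of whose classes have size exactly $\ell$. To reduce the general statement to this case, given $G$ on $m$ vertices with $\Delta(G) \le k-1$ I would set $r = k\lceil m/k\rceil - m$ (so $0 \le r \le k-1$) and let $H$ be the disjoint union of $G$ with a clique on $r$ fresh vertices; then $\Delta(H) \le \max\{\Delta(G), r-1\} \le k-1$ and $|V(H)| = k\lceil m/k\rceil$, so the special case applies with $\ell = \lceil m/k\rceil$. Since the $r$ added vertices form a clique they receive pairwise distinct colours, so deleting them shrinks exactly $r$ of the $k$ classes from $\lceil m/k\rceil$ to $\lfloor m/k\rfloor$ and leaves the other $k-r$ at $\lceil m/k\rceil$ — precisely the colouring of $G$ that the theorem demands (and when $k \mid m$ one has $r=0$ and all classes of size $m/k$).

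For the equitable case I would induct on $|E(H)|$, holding $k$ and $V(H)$ (hence the divisibility) fixed. If $H$ is edgeless, partition $V(H)$ arbitrarily into $k$ classes of size $\ell$. Otherwise fix an edge $xy$ and apply the inductive hypothesis to $H - xy$ to obtain an equitable proper $k$-colouring $\varphi$; if $\varphi(x) \ne \varphi(y)$ then $\varphi$ already works for $H$. The work is in the case $\varphi(x) = \varphi(y)$: since $\deg_H(x) \le k-1$ and one neighbour of $x$, namely $y$, lies in $x$'s own $\varphi$-class (which is independent in $H - xy$, so $y$ is $x$'s only class-neighbour), $x$ has at most $k-2$ neighbours among the other $k-1$ classes, so some class $V'$ contains no neighbour of $x$; recolouring $x$ to $V'$ yields a proper colouring of $H$ in which exactly one class (the one $x$ left) has size $\ell-1$, exactly one ($V'$) has size $\ell+1$, and every other class has size $\ell$. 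It therefore suffices to prove a \emph{rebalancing lemma}: every graph with maximum degree at most $k-1$ that has a proper $k$-colouring with one class of size $\ell+1$, one of size $\ell-1$, and all others of size $\ell$, also has an equitable proper $k$-colouring.

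To attack the rebalancing lemma I would build an auxiliary digraph $\mathcal{D}$ on the $k$ colour classes with an arc $V \to V'$ whenever some vertex of $V$ has no neighbour in $V'$ (so it can be recoloured from $V$ to $V'$ while keeping the colouring proper). If $\mathcal{D}$ has a directed path from the large class $A$ to the small class $B$, I would shift one suitable vertex along each arc of the path in turn: this decreases $|A|$ by one, increases $|B|$ by one, and leaves every class strictly between them on the path unchanged, producing the equitable colouring. The genuine difficulty — and, I expect, the main obstacle — is the case where no such path exists: writing $\mathcal{A}$ for the set of classes reachable from $A$ in $\mathcal{D}$ (so $A \in \mathcal{A}$ and $B \notin \mathcal{A}$), every vertex of every class in $\mathcal{A}$ is then forced to have a neighbour in every class outside $\mathcal{A}$, and one must either derive a contradiction or engineer a more intricate, simultaneous relocation of several vertices by balancing this abundance of forced edges against the degree bound $\Delta \le k-1$. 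This last subcase is the combinatorial heart of the Hajnal--Szemer\'{e}di theorem; carrying it out fully requires the exchange argument of the original proof or the shorter discharging argument of Kierstead and Kostochka, and since the theorem is classical and used here only as a black box, in the write-up I would simply invoke \cite{HS70}.
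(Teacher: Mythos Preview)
The paper does not prove this theorem at all: it is stated as the Hajnal--Szemer\'{e}di theorem and invoked as a black box with a citation to \cite{HS70}. Your proposal goes well beyond what the paper does by sketching the reduction from the general statement to the equitable case via padding with a clique, setting up the induction on edges, and isolating the rebalancing lemma --- and then, like the paper, citing \cite{HS70} for the genuinely hard subcase. So your write-up is strictly more informative than the paper's own treatment, and ends in the same place.
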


We remind that integrality does not cause issues in our analyses, and we thus assume all colour classes have size $n/k$. By encoding the sets of a daisy as the vertices of an ``intersection graph'', the fact that petals have bounded intersection translates into a graph with bounded maximum degree. Applying the Hajnal-Szemer\'{e}di theorem to this graph, we are able to partition the original daisy into a small number of large \emph{simple} daisies.

\begin{lemma}
	\label{lem:simple-daisy}
  A daisy $\daisy$ with kernel $K$, such that each one of its petals has a non-empty intersection with at most $t - 1$ other petals, can be partitioned into $t$ simple daisies with the same kernel, each of size $\abs{\daisy}/t$.
\end{lemma}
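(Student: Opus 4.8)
The plan is to apply the Hajnal–Szemerédi theorem to the \emph{petal intersection graph} of $\daisy$ and read off the colour classes as the desired simple daisies. First I would define the graph $G$ whose vertex set is $\daisy$ (viewed as a multiset, so each set is its own vertex even if it coincides with another as a subset of $[n]$), placing an edge between $S$ and $S'$ whenever their petals intersect, i.e. $(S \cap S') \setminus K \neq \varnothing$. By hypothesis each petal meets at most $t$ other petals, so the maximum degree of $G$ is at most $t$, hence $\Delta + 1 \leq t+1$.

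Next I would invoke the Hajnal–Szemerédi theorem with $k = t+1 \geq \Delta+1$: it yields a $(t+1)$-colouring of $G$ in which every colour class has size $\lfloor |\daisy|/(t+1)\rfloor$ or $\lceil |\daisy|/(t+1)\rceil$, which (per the integrality convention adopted just before the statement) we take to be exactly $|\daisy|/(t+1)$. Let $\daisy^{(1)}, \ldots, \daisy^{(t+1)}$ be the sub-collections of $\daisy$ induced by the colour classes. These partition $\daisy$ by construction. It remains to check each $\daisy^{(c)}$ is a simple daisy with kernel $K$: since no two vertices of the same colour are adjacent in $G$, for any $S, S' \in \daisy^{(c)}$ with $S \neq S'$ we have $(S \cap S') \setminus K = \varnothing$, i.e. the petals $S \setminus K$ and $S' \setminus K$ are disjoint; and every $S \in \daisy^{(c)} \subseteq \daisy$ still has $S \setminus K$ as its petal with the same size $j$, and the subsets $P_k \subseteq S \setminus K$ witnessing the $h$-daisy condition in $\daisy$ a fortiori witness the (trivial, since petals are pairwise disjoint) simple-daisy condition in $\daisy^{(c)}$. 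Thus each $\daisy^{(c)}$ is a $1$-daisy with kernel $K$.

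I do not expect a genuine obstacle here — the lemma is essentially a restatement of Hajnal–Szemerédi once the right graph is identified. The only point that needs a little care is the multiset bookkeeping (ensuring that repeated copies of the same underlying subset are treated as distinct vertices, so that the colour classes really do partition $\daisy$ as a multi-collection and the size bound $|\daisy|/(t+1)$ refers to the number of sets counted with multiplicity), and the observation that $k = t+1$ is a legal choice because $\Delta \leq t$. Everything else is immediate from the definitions.
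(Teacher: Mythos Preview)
Your proposal is correct and follows essentially the same approach as the paper: build the petal-intersection graph on $\daisy$, observe its maximum degree is at most $t$, and apply the Hajnal--Szemer\'edi theorem with $k=t+1$ to obtain the equitable colouring whose classes are the simple daisies. The extra care you take with the multiset bookkeeping and verifying the simple-daisy condition is sound and only makes the argument more explicit than the paper's version.
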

\begin{proof}
  Construct a graph $G$ with vertex set $\daisy$ by placing an edge between vertices $S$ and $S'$ when ${(S \cap S') \setminus K \neq \varnothing}$. By definition, the maximum degree of $G$ is $\Delta(G) \leq t - 1$. The Hajnal-Szemer\'{e}di theorem implies $G$ is colourable with $t$ colours, where each colour class has size $\abs{G}/t$. This partition of the vertex set corresponds to a partition of the daisy $\daisy$ into simple daisies $\set{\sets_j : j \in [t]}$, each of size $\abs{\daisy}/t$.
\end{proof}

\subsection{The volume lemma}
\label{sec:volume}

This section proves a key lemma that makes use of daisies to prove a certain structure on the sets that a \emph{robust} local algorithm may query. Loosely speaking, the volume lemma ensures that in order for a collection of sets to be queried with high enough probability, it must cover a sufficiently large fraction of the input's coordinates.

Let $M$ be a $q$-local algorithm that computes a partial function $f$ with error rate $\sigma$ (we assume the explicit input to be fixed and omit it). Recall that, for each input $x \in \Sigma^n$, the algorithm $M$ queries according to a distribution $\mu_x$ over a multi-collection of $q$-sets, as defined in \cref{def:distb}.

\begin{lemma}[Volume lemma]
  \label{lem:volume}
  Fix $x \in \Sigma^n$ in the domain of $f$. If there exists a $\rho$-robust $y \in \Sigma^n$ such that $f(y) \neq f(x)$, then every collection $\sets \subseteq \supp(\mu_x)$ such that $\abs{\cup \sets} = \abs{\cup_{S \in \sets} S} < \rho n$ satisfies $\mu_x(\sets) \leq 2 \sigma$.
\end{lemma}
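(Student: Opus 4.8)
The plan is to argue by contraposition: assuming $\sets \subseteq \supp(\mu_x)$ with $\mu_x(\sets) > 2\sigma$, I will construct a hybrid input $w$ obtained by planting the values of $y$ on the coordinates of $\cup\sets$ and keeping $x$ elsewhere, and then show that $M^w$ behaves like $M^x$ on the sets in $\sets$ but must behave like $M^y$ by robustness, yielding a contradiction. Concretely, define $w \in \Sigma^n$ by $w_i = y_i$ for $i \in \cup\sets$ and $w_i = x_i$ for $i \notin \cup\sets$. Since $|\cup\sets| < \rho n$, we have $\dist(w,y) < \rho$, so $w \in B_\rho(y)$, and the $\rho$-robustness of $y$ (together with $f(y)\neq f(x)$) gives $\Pr[M^w = f(y)] \geq 1-\sigma$, hence $\Pr[M^w = f(x)] \leq \sigma$.

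On the other hand, I want to lower-bound $\Pr[M^w = f(x)]$ using the sets in $\sets$. The key observation is that for any set $S \in \sets$, the local views $w_{|S}$ and $x_{|S}$ may differ (since $S \subseteq \cup\sets$ where $w$ agrees with $y$, not $x$) — so a naive argument fails. Instead I should plant the values of $x$, not $y$: reconsider and set $w_i = x_i$ on $\cup\sets$ — but that makes $w = x$, which is useless. The correct move is the reverse hybrid: since $y$ is the robust input, I actually want a string close to $x$ whose behaviour is pinned by robustness of... no — robustness is only guaranteed at $y$. Let me reorganize: the right hybrid is $w$ agreeing with $y$ outside $\cup\sets$ and with $x$ inside $\cup\sets$. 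Then $\dist(w,y) = |\cup\sets|/n < \rho$, so $w \in B_\rho(y)$ and robustness of $y$ forces $\Pr[M^w = f(y)] \geq 1-\sigma$, i.e. $\Pr[M^w = f(x)] \leq \sigma$. Simultaneously, for every $S \in \sets$ we have $w_{|S} = x_{|S}$ (as $S \subseteq \cup\sets$), so the decision tree branch taken by $M$ on randomness $s$ with $S_s \in \sets$ is identical on inputs $w$ and $x$: the queries and the output coincide. Thus on the event (under $\mu_x$, equivalently over the shared randomness) that the queried set lies in $\sets$ — which has probability $\mu_x(\sets) > 2\sigma$ — the algorithm $M^w$ outputs exactly $M^x$'s value, and $M^x = f(x)$ with probability at least $1-\sigma$. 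Combining, $\Pr[M^w = f(x)] \geq \mu_x(\sets) - \sigma > 2\sigma - \sigma = \sigma$, contradicting $\Pr[M^w = f(x)] \leq \sigma$.

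The step I expect to require the most care is the claim that ``$M$ behaves identically on $w$ and $x$ whenever the queried set lies in $\sets$'': this needs the formalism of \cref{def:distb}, where randomness $s$ determines a decision tree $T_s$, and one checks by induction on the depth that if the branch of $T_s$ followed on input $x$ has query set $S_s \in \sets$, then since every queried coordinate lies in $S_s \subseteq \cup\sets$ where $w_i = x_i$, the branch followed on $w$ is the same branch, with the same output bit $b_s$. Hence the event $[S_s \in \sets]$ has the same probability under the coin tosses whether we feed $x$ or $w$, and on that event the outputs agree. One minor subtlety: one must use that $x$ is in the domain of $f$ (so $M^x = f(x)$ w.p.\ $\geq 1-\sigma$) — this is given in the hypothesis — and that $f(y) \neq f(x)$ with $f$ two-valued, so $\Pr[M^w = f(y)] \geq 1-\sigma$ indeed implies $\Pr[M^w = f(x)] \leq \sigma$. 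Everything else is a short union-bound-free combination of two inequalities, so the bound $2\sigma$ drops out cleanly.
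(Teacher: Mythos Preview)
Your proposal is correct and takes essentially the same approach as the paper: both construct the hybrid $w$ that agrees with $x$ on $\cup\sets$ and with $y$ elsewhere, use $|\cup\sets| < \rho n$ to place $w \in B_\rho(y)$ so that robustness forces $\Pr[M^w = f(x)] \leq \sigma$, and then use that $M^w$ coincides with $M^x$ on the randomness leading to a set in $\sets$ to derive $\Pr[M^w = f(x)] > 2\sigma - \sigma = \sigma$, a contradiction. Your additional care in justifying (via the decision-tree branch argument of \cref{def:distb}) that the event $[S_s \in \sets]$ and the output on that event are identical for inputs $x$ and $w$ is exactly the point the paper glosses over with ``behaves exactly as it would on input $x$''.
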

\begin{proof}
  Suppose, by way of contradiction, that there exists $\sets \subseteq \supp(\mu_x)$ such that $\mu_x(\sets) > 2 \sigma$ and $\abs{\cup \sets} < \rho n$.

  For notational simplicity, assume without loss of generality that $f(x) = 1$, and take a $\rho$-robust $y \in \Sigma^n$ such that $f(y) = 0$. Define $w$ to match $x$ in the coordinates covered by $\cup \sets$, and to match $y$ otherwise. Then $w$ is $\rho$-close to $y$, so that $M$ outputs $0$ when its input is $w$ with probability at least $1-\sigma$.

	When the algorithm samples a decision tree that makes it query $S \in \sets$, then $M$ behaves \emph{exactly} as it would on input $x$, which happens with probability at least $\mu_x(\sets) > 2 \sigma$. But the algorithm outputs 1 on input $x$ with probability at most $\sigma$, and thus outputs 0 on input $w$ with probability greater than $\sigma$, in contradiction with the robustness of $y$.
\end{proof}

\begin{remark}
  The volume lemma requires an arbitrary $\rho$-robust $y$ with $f(y) \neq f(x)$. It thus suffices that \emph{a single} such $\rho$-robust point exists for the volume lemma to hold \emph{for every $x'$ such that $f(x') = f(x)$}.
\end{remark}

\subsection{Generic transformations}
\label{sec:preprocessing}

This section provides two standard transformations that improve parameters of an algorithm: error reduction (\cref{clm:err-red}) and randomness reduction (\cref{clm:rand-red}), which, applied in conjunction, imply \cref{lem:prep-alg}. These apply generally to randomised algorithms for decision problems, and, when applied to robust local algorithms, \emph{both transformations compute the same function and preserve robustness}. We defer the proofs in this section to \cref{sec:deferred}.

The following claim is an adaptation of a basic fact regarding randomised algorithms: performing independent runs and selecting the output via a majority rule decreases the error probability exponentially.

\begin{claim}[Error reduction]
\label{clm:err-red}
Let $M$ be a $(\rho_0, \rho_1)$-robust algorithm for $f\colon \mathcal{P} \rightarrow \bitset$ (where $\mathcal{P} \subset Z \times \Sigma^n$) with error rate $\sigma \leq 1/3$, query complexity $q$ and randomness complexity $r$.

For any $\sigma' > 0$, there exists a $(\rho_0, \rho_1)$-robust algorithm $N$ for computing the same function with error rate $\sigma'$, query complexity $O(q \log(1/\sigma')/\sigma)$ and randomness complexity $O(r \log(1/\sigma')/\sigma)$.
\end{claim}

Next, we state a transformation that yields an algorithm with twice the error rate and significantly reduced randomness complexity. This, in turn, provides an upper bound on the number of $q$-sets queried by the algorithm, such that an application of \cref{thm:daisy-partition} to this multi-collection yields daisies with kernels of sublinear size. Such a bound on the size of the kernels is crucial to ensure correctness of the sample-based algorithm we construct in \cref{sec:construction}. Our proof adapts the technique of Goldreich and Sheffet \cite{GS10}, which in turn builds on the work of Newman \cite{N91a}.

\begin{claim}[Randomness reduction]
\label{clm:rand-red}
  Let $M$ be a $(\rho_0, \rho_1)$-robust algorithm for $f\colon \mathcal{P} \rightarrow \bitset$ (where $\mathcal{P} \subset Z \times \Sigma^n$) with error rate $\sigma$, query complexity $q$ and randomness complexity $r$.

  There exists a $(\rho_0, \rho_1)$-robust algorithm $N$ for computing the same function with error rate $2\sigma$ and query complexity $q$, whose distribution $\Tilde{\mu}^N$ has support size $3n \ln \abs{\Sigma}/\sigma$. In particular, the randomness complexity of $N$ is bounded by $\log (n/\sigma) + \log \log \abs{\Sigma} + 2$.
\end{claim}

In the next section, we need a combination of error reduction and randomness reduction, which the following lemma provides.

\newcommand{\errorrateInv}{4q}
\newcommand{\errorrate}{1/(\errorrateInv)}
\newcommand{\errorratefrac}{\frac{1}{\errorrateInv}}
\newcommand{\suppsize}{6n \ln \abs{\Sigma}/\sigma}
\newcommand{\suppsizefrac}{\frac{6n \ln \abs{\Sigma}}{\sigma}}
\newcommand{\suppsizeq}{48q \ln \abs{\Sigma} n}

\begin{lemma}
  \label{lem:prep-alg}
  Assume there exists a $\rho$-robust algorithm $M$ for computing $f$ with query complexity $\ell$, error rate $1/3$ and arbitrary randomness complexity.
  Then there exists a $\rho$-robust $q$-local algorithm $M'$ for $f$ with error rate
	\begin{equation*}
		\sigma = \errorratefrac
	\end{equation*}
	such that $\frac{q}{\log 8q} = O(\ell)$, or, equivalently,
  \begin{equation*}
    q = O(\ell \log \ell).
  \end{equation*}
  Moreover, the distribution of $M'$ is uniform over a multi-collection of decision trees of size $\suppsize$.
\end{lemma}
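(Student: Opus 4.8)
The plan is to obtain $M'$ by chaining the two generic transformations of \cref{sec:preprocessing}: first boost the success probability using the error reduction of \cref{clm:err-red}, and then compress the random tape using the randomness reduction of \cref{clm:rand-red}. Both of these are stated so as to preserve the computed partial function \emph{and} the robustness parameters (in our one-sided case, $\rho$-robustness), so the entire argument reduces to choosing the intermediate target error rate correctly and then bookkeeping how the query complexity, the error rate, and the size of the support of the induced distribution (\cref{def:distb}) evolve through the two steps.

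Concretely, let $c$ be the absolute constant hidden by the $O(\cdot)$ of \cref{clm:err-red}, so that reducing the error of a $\rho$-robust, $\ell$-query algorithm of error rate $1/3$ down to error rate $\sigma''$ yields a $\rho$-robust algorithm of locality at most $c\,\ell\log(1/\sigma'')$ (the original error rate $1/3$ is a constant, absorbed into $c$). Then I would let $q$ be the least positive integer with $q \ge c\,\ell\log(16q)$, and set $\sigma'' \coloneqq 1/(16q)$. Applying \cref{clm:err-red} to $M$ with target error rate $\sigma''$ produces a $\rho$-robust algorithm $M''$ for $f$ of error rate $\sigma''$ and locality at most $c\,\ell\log(16q) \le q$ (hence $q$-local, padding with dummy queries if one insists on locality exactly $q$), possibly at the cost of a much larger randomness complexity, which is harmless here. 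Applying \cref{clm:rand-red} to $M''$ then yields a $\rho$-robust $q$-local algorithm $M'$ for $f$ with error rate $\sigma \coloneqq 2\sigma'' = 1/(8q)$ whose distribution over decision trees is \emph{uniform} over a multi-collection: this is exactly the form produced by the construction underlying \cref{clm:rand-red}, whose only randomness is a uniform choice among a hard-wired list of random strings of $M''$, each fixing a single decision tree. The size of this multi-collection is $3n\ln|\Sigma|/\sigma'' = 48q\ln|\Sigma|n = 6n\ln|\Sigma|/\sigma$, exactly as claimed, and $q/\log 16q = O(\ell)$, equivalently $q = O(\ell\log\ell)$, by the choice of $q$.

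The only point requiring care is the mildly self-referential definition of $q$: the error-reduction target $1/(16q)$ is expressed in terms of the output locality $q$, which is itself governed by that target. This dissolves once one notes that $x \mapsto x/\log(16x)$ is eventually strictly increasing and unbounded, so the set $\{\,q : q/\log(16q) \ge c\ell\,\}$ is a terminal segment $\{q \ge q_0\}$; its minimum $q_0$ satisfies $q_0/\log(16q_0) = O(\ell)$ (the immediately smaller integer fails the defining inequality) and $q_0 = O(\ell\log\ell)$ (substitute $q = \Theta(\ell\log\ell)$ into the inequality and use $\log(16q) = O(\log\ell)$). Everything else is routine chaining of the black-box guarantees of \cref{clm:err-red} and \cref{clm:rand-red}, in particular the fact that both preserve $\rho$-robustness and the partial function being computed.
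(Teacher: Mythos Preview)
Your proposal is correct and follows essentially the same approach as the paper: apply \cref{clm:err-red} to reduce the error rate to $\sigma'' = 1/(16q)$, then apply \cref{clm:rand-red} to obtain error rate $\sigma = 2\sigma'' = 1/(8q)$ and a uniform distribution over $3n\ln|\Sigma|/\sigma'' = 6n\ln|\Sigma|/\sigma$ decision trees. You handle the self-referential choice of $q$ with more care than the paper (which leaves it implicit), but the two arguments are otherwise the same.
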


\section{Proof of \texorpdfstring{\cref{infthm:main}}{Theorem \ref{infthm:main}}}
\label{sec:proof}

This section contains the main technical contribution of our work: a proof that every robust local algorithm with query complexity $q$ can be transformed into a \emph{sample-based} local algorithm with sample complexity $ n^{1- 1/O(q^2 \log^2 q)}$. We begin by providing a precise statement of \cref{infthm:main}. In the following, we remind that when the error rate of an algorithm is not stated, it is assumed to be $1/3$.

\newcommand{\gammaValue}{24 \abs{\Sigma}^q \ln \abs{\Sigma}}
\begin{theorem}[\cref{infthm:main}, restated]
  \label{thm:main}
  Suppose there exists a $(\rho_0, \rho_1)$-robust local algorithm $M$ for the function $f\colon \mathcal{P} \rightarrow \bitset$ (where $\mathcal{P} \subset Z \times \Sigma^n$) with query complexity $\ell$ and $\max\set{\rho_0, \rho_1} = \Omega(1)$.
  Then, there exists a \emph{sample-based} algorithm $N$ for $f$ with sample complexity $\gamma\cdot n^{1-1/2q^2}$,
  where $q = O(\ell \log \ell)$ and $\gamma = O(\abs{\Sigma}^q \ln \abs{\Sigma})$.
\end{theorem}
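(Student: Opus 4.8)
My plan is to first normalise the algorithm and then feed it to the relaxed‑sunflower machinery of \cref{sec:lemmas}. Since $\max\set{\rho_0,\rho_1}=\Omega(1)$, I would assume without loss of generality that $\rho\coloneqq\rho_0=\Omega(1)$ and use only this one‑sided robustness (discarding robustness on $1$‑inputs only weakens the hypothesis), and I fix and suppress the explicit input $z$, as the construction will be uniform in it. Applying \cref{lem:prep-alg} replaces $M$ by a $\rho$‑robust $q$‑local algorithm $M'$ with $q=O(\ell\log\ell)$, error rate $\sigma=1/(8q)$, and a distribution over decision trees that is \emph{uniform} over a multi‑collection of $\Theta(n)$ trees. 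Passing to the combinatorial description of \cref{def:distb}, I enumerate the branches of these trees to obtain a multi‑collection $\tuples$ of tuples $(S,a,b)$ (the query $q$‑set, the assignment read, the output bit) of size $\Theta(|\Sigma|^q n)$, with underlying $q$‑sets $\sets$; for each input $x$ the distribution $\mu_x$ is uniform over the sets \emph{relevant} to $x$, i.e.\ those on the branch that $x$ follows. Finally I apply the daisy partition lemma (\cref{thm:daisy-partition}) to $\sets$ to obtain daisies $\daisy_0,\dots,\daisy_q$ with nested kernels $\varnothing=K_q\subseteq\cdots\subseteq K_1=K_0$, where $\daisy_j$ has petals of size exactly $j$, intersection bound $h(k)=n^{\max\set{1,k-1}/q}$, and $|K_j|=O(n^{1-\max\set{1,j}/q})$, transporting this partition to $\tuples$.

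\textbf{The sample‑based algorithm.} With sampling probability $p\coloneqq n^{-1/(2q^2)}$, thresholds $\tau_j\coloneqq\gamma_j\, np^{j}$, and a constant cap $\alpha$ (all pinned down in the analysis), the algorithm $N$ on input $x$ samples each coordinate independently with probability $p$, aborting — with only $\exp(-\Omega(n^{1-1/(2q^2)}))$ probability, by the Chernoff bound — if more than $\gamma n^{1-1/(2q^2)}$ coordinates land in the sample $Q$. Then, for every $j\in[q]$ and every assignment $\kappa\in\Sigma^{K_j}$, it counts the tuples $(S,a,1)\in\daisy_j$ with $S\setminus K_j\subseteq Q$ that are consistent with the partially replaced word $x_\kappa$ on everything $N$ has seen — that is, $a$ agrees with $\kappa$ on $S\cap K_j$, agrees with $x|_Q$ on the petal, and $(S,x_\kappa|_S,1)\in\tuples$ — discarding, for $j=1$ only, those whose singleton petal is shared by at least $\alpha$ such tuples; if this count ever exceeds $\tau_j$ it outputs $1$, and otherwise $0$. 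This is a genuine sample‑based algorithm: the only queries are the binomial sample, and the kernel enumeration is whitebox computation on the (known) description of $M'$. In particular $\daisy_0$, whose tuples sit inside $K_0$ and depend on neither $x$ nor $Q$, is simply disregarded.

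\textbf{Correctness.} I would prove two one‑sided guarantees, both resting on the volume lemma (\cref{lem:volume}) — the one place robustness is used — and on chopping each $t$‑daisy into $\Theta(t)$ equal‑size \emph{simple} daisies via \cref{lem:simple-daisy} (Hajnal--Szemer\'edi, applicable by the degree bound in \cref{Corollary:hbound}), so that the number of sampled petals becomes a sum of independent indicators amenable to Chernoff. For a \emph{non‑robust} input $x$ ($f(x)=1$): fixing any $\rho$‑robust $0$‑input, the volume lemma says any sub‑collection of $\supp(\mu_x)$ covering fewer than $\rho n$ coordinates has $\mu_x$‑weight at most $2\sigma$; this kills $\daisy_0$ (covered by $K_0$) and the sets $N$ discards from $\daisy_1$ (which, for a large enough constant $\alpha$, cover at most $|K_1|+\Theta(n)/\alpha<\rho n$ coordinates), while the erring sets have weight at most $\sigma$, leaving a ``good'' sub‑collection of weight at least $1-5\sigma\ge1/2$; averaging puts weight at least $1/(2q)\ge2\sigma$ on some $\daisy_j$, which by uniformity is $\Theta(n)$ good sets and by the volume lemma covers $\Theta(n)$ coordinates, so splitting into $\Theta(t)$ simple daisies of size $\Theta(n/t)$ and applying the lower‑tail Chernoff shows that for $\kappa=x|_{K_j}$ the count exceeds $\tau_j$ except with probability $O(t)\exp(-\Omega(np^{j}/t))=o(1)$. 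For a \emph{robust} input $x$ ($f(x)=0$): for every $j$ and $\kappa$, since $|K_j|=o(n)$ the word $x_\kappa$ is $\rho$‑close to $x$, so only $O(\sigma n)$ relevant sets lead to output $1$; for $j>1$ these split into $\Theta(t)$ simple daisies and the upper‑tail Chernoff bounds the sampled count above $\tau_j$ by $\exp(-\Omega(np^{j}/t))=\exp(-\Omega(n^{1-j/q+1/(2q)}))$, which dominates the $|\Sigma|^{|K_j|}=|\Sigma|^{O(n^{1-j/q})}$ assignments to $K_j$, so a union bound over all $(j,\kappa)$ with $j>1$ fails with probability $o(1)$; for $j=1$ the capping is what saves the day — the counted sets form an $\alpha$‑daisy, hence $\alpha=\Theta(1)$ simple daisies of size $O(\sigma n)$, each exceeding $\tau_1/\alpha=\Theta(np)$ only with probability $\exp(-\Omega(np))$, and $np=n^{1-1/(2q^2)}$ comfortably beats $|K_1|=O(n^{1-1/q})$. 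The constants are picked so that the $\Theta(1)$ gap between the good‑set count ($\gtrsim\ln|\Sigma|\cdot np^{j}$) and the bad‑set count ($\lesssim\ln|\Sigma|\cdot np^{j}$, with a smaller constant) separates the two Chernoff bounds for a common $\gamma_j=\Theta(\ln|\Sigma|)$, and $\gamma=O(|\Sigma|^q\ln|\Sigma|)$ is a safe bound under which the sampling cap is a.s.\ not reached.

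\textbf{The hard part.} The real difficulty — and the reason adaptivity is genuinely harder than the non‑adaptive case of \cite{FLV15,GL20} — is the robust‑input direction under \emph{one‑sided} robustness: correctness there demands that \emph{no} assignment to \emph{any} $K_j$ cross its threshold, yet the kernels differ enormously in size (up to $|K_1|=\Theta(n^{1-1/q})$), so one cannot union‑bound over $\bigcup_j K_j$ at once and must handle each daisy with its own kernel and its own threshold. This is exactly what forces (i) the sampling rate $p=n^{-1/(2q^2)}$ — whose exponent is dictated by the need for a polynomial gap $np^{j}/t\gg|K_j|$ simultaneously for every $j\in[q]$, with the binding case $j=q-1$ producing the $\Theta(q^2)$ — and (ii) the separate treatment of $\daisy_1$, where no such gap exists and one must bound the \emph{number} (not fraction) of bad sampled petals using only a constant number of simple daisies. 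Making all the thresholds $\tau_j$ simultaneously reachable for $1$‑inputs and unreachable for $0$‑inputs, with a single set of constants, is the delicate bookkeeping I expect to be the main technical burden.
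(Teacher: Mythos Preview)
Your proposal is correct and follows essentially the same approach as the paper: preprocess via \cref{lem:prep-alg}, extract the output-$1$ tuples, apply the daisy partition lemma, run the threshold-based sample algorithm with per-daisy kernel enumeration and the $j=1$ cap, and prove the two one-sided claims via the volume lemma plus Hajnal--Szemer\'edi plus Chernoff, with the union bound over $|\Sigma|^{|K_j|}$ assignments handled separately for each $j$. The only cosmetic difference is that the paper folds the $\gamma$ factor into the sampling probability $p$ itself (setting $p=\gamma\cdot n^{-1/(2q^2)}$) rather than only into the sample cap, but this does not change the argument.
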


Note that when $q = \Omega(\sqrt{\log n})$ or $\abs{\Sigma}^{q} = \Omega\big(n^{1/2q^2}\big)$, the algorithm we obtain samples linearly many coordinates, and the statement becomes trivial.

Therefore, hereafter we assume that the query complexity of $M$ satisfies $\ell \leq \sqrt[5]{\log n}$ (so $q = \Theta(\sqrt[5]{\log n} \log\log n) = o(\sqrt{\log n})$) and the alphabet size satisfies $\abs{\Sigma} \leq 2^{\sqrt[6]{\log n}}$ (so $\abs{\Sigma}^q \leq n^{1/q^3}$).

We proceed to prove \cref{thm:main}. Specifically, in \cref{sec:construction} we construct a sample-based local algorithm $N$ from the $(\rho_0, \rho_1)$-robust local algorithm $M$ in the hypothesis of \cref{thm:main}; in \cref{sec:analysis}, we analyse our sample-based algorithm $N$; and in \cref{sec:conclude} we conclude the proof by showing the lemmas proved throughout the analysis indeed imply correctness of $N$.

\subsection{Construction}
\label{sec:construction}
\newcommand{\query}{\ell \log \ell}
\newcommand{\querysq}{\ell^2 \log^2 \ell}

\newcommand{\capparam}{12\ln\abs{\Sigma} / (\rho\cdot\sigma)}
\newcommand{\capparamfrac}{\frac{12\ln\abs{\Sigma}}{\rho\cdot\sigma}}
\newcommand{\capparamInv}{\frac{\rho\cdot\sigma}{12\ln\abs{\Sigma}}}

\newcommand{\threshold}[1]{\frac{\abs{\Tilde{\mu}}}{2q} \cdot p^{#1}}
\newcommand{\thresholdx}[1]{\frac{\abs{\mu_x}}{2q} \cdot p^{#1}}
\newcommand{\sizeKj}[1]{12q^2\abs{\Sigma}^q\cdot n^{1-{#1}/q}}
\newcommand{\sizeTau}{12qn\abs{\Sigma}^q}
\newcommand{\sampprob}{\gamma\cdot n^{-1/2q^2}}

Hereafter, let $f\colon \mathcal{P} \to \bitset$ be the function in the hypothesis of \cref{thm:main}.
As the following treatment is the same for all explicit inputs $z \in Z$, we assume it to be fixed and omit it from the notation.
We also assume without loss of generality that $\rho_0$ is a constant strictly greater than $0$ (if this is not the case we simply exchange the 0 and 1 values in the truth table of $f$). We set $\rho = \rho_0$.

Let $M$ be the algorithm in the hypothesis of \cref{thm:main}. We apply \cref{lem:prep-alg} and obtain a $(\rho_0, \rho_1)$-robust local algorithm $M'$ for the same problem, with query complexity $q = O(\query)$ and error rate $\sigma = 1/4q$. The algorithm $N$ we describe below has white box access to the local algorithm $M'$.
We next explain how it extracts information from it.

Upon execution, $M'$ chooses a decision tree uniformly at random according to the outcome of its coin flips; this uniform distribution is denoted $\Tilde{\mu} = \Tilde{\mu}^{M'}$, whose support size is $\abs{\Tilde{\mu}}$.
For every decision tree and every one of its branches, define a \emph{description tuple} $(S, a_S, b, s)$, where $s$ is the random string that will cause the use of this tree, $S$ is the set of all the queries in this branch, $a_S$ is the assignment to these queries that will result in $M'$ using this specific branch and $b$ is the value $M'$ returns when this occurs (as per \cref{def:distb}).

We assume that for every description-tuple $(S, a_S, b, s)$ the size of $S$ is exactly $q$.  This can be assumed without loss of generality since it is possible to convert $M'$ into an algorithm such that every decision tree and every one of its branches makes $q$ distinct queries: if the same query appears more than once on a branch of a tree, all but the first appearance can be removed by choosing the continuation that follows the (already known) value that leads to the algorithm using this branch. In addition, a tree can be expanded by adding queries, so that every branch has exactly $q$ distinct queries. Both of these changes do not change any of the parameters of the algorithm beyond ensuring that it will query exactly $q$ coordinates.

The algorithm $N$ we describe next only makes use of description tuples $(S, a_S, b, s)$ such that $b=1$. To this end we set
\begin{equation*}
\tuples = \{(S, a_S, b, s): (S, a_S, b, s) \text{ is a description tuple such that } b = 1\}.
\end{equation*}
Algorithm $N$ also requires the multi-collection $\sets$ defined as follows:
\begin{equation*}
\sets  = \{S: (S, a_S, b, s) \in \tuples\}.
\end{equation*}
Specifically, it applies \cref{thm:daisy-partition} to get a daisy partition of $\sets$.
When the algorithm extracts $\tuples$ and $\sets$ from $M'$ and computes a daisy partition for $\sets$, it
preserves the information that allows it to associate the set of a tuple of $(S, a_S, b, s)$  to the unique daisy $S$ is contained in.

The construction proceeds in two stages: \emph{preprocessing} and \emph{execution}. Recall that, for any input $x \in \Sigma^n$ and assignment $\kappa$ to a subset $S \subseteq [n]$, we denote by $x_\kappa$ the word that assigns the same values as $\kappa$ on $S$ and the same values as $x$ on $[n] \setminus S$.

\paragraph{Preprocessing\ifsicomp\else.\fi} $N$ has access to $M'$, with which it computes $\tuples$ and $\sets$. Applying \cref{thm:daisy-partition} to $\sets$, the algorithm obtains the \emph{daisy partition}
\begin{equation*}
	\daisy = \set{\daisy_j : 0 \leq j \leq q},
\end{equation*}
so that each tuple in $\tuples$ is associated with $\daisy_j$ for exactly one $j \in \set{0, \ldots, q}$. Set
\begin{equation*}
	p \coloneqq \sampprob,
\end{equation*}
the \emph{sampling probability}, where $\gamma = \gammaValue$; and, for every $j \in [q]$, set
\begin{equation*}
	\tau_j \coloneqq \threshold{j},
\end{equation*}
the \emph{thresholds}, which will be used in the execution stage.

\paragraph{Execution\ifsicomp\else.\fi} When $N$ receives query access to a string $x \in \Sigma^n$, it performs the following sequence of steps.

\begin{enumerate}[ref={Step~\arabic*}]
    \item\label{step:sample}\emph{Sampling:} Select each element in $[n]$ independently with probability $p$. Denote by $Q$ the set of all coordinates thus obtained. If $\abs{Q} \geq 2 pn$, then $N$ outputs arbitrarily. Otherwise, $N$ queries all the coordinates in $Q$.

    \item\label{step:enumeration} \emph{Enumeration:} For every $j \in [q]$ and kernel assignment $\kappa$ to $K_j$,\footnote{Note that the algorithm \emph{does not} iterate over the case $j = 0$. We will show in \cref{sec:analysis} that this has a negligible effect.} perform the following steps. Set a counting variable $v$ to 0 before each iteration.
    \begin{enumerate}
      \item\label{step:votecount} \emph{Local view generation and vote counting:}  For every tuple $(S, a_S, 1, s) \in \tuples$ such that $S \in \daisy_j$, increment $v$ if $S \subset Q \cup K_j$ and $a_S$ assigns on $S$ the same values as $x_{\kappa}$ does.

		    In the case $j = 1$, if $\capparam$ sets have the same point outside $K_1$, disregard them in the count.\footnote{This is required for technical purposes when dealing with $K_1$.}
      \item\label{step:threshold} \emph{Threshold check:} If $v \geq \tau_j$, output $1$.
    \end{enumerate}
    \item\label{step:outzero} If the condition $v \geq \tau_j$ was never satisfied, then output $0$.
\end{enumerate}

We proceed to analyse this construction.

\subsection{Analysis}
\label{sec:analysis}

We remind that the explicit input $z$ is assumed to be fixed and is omitted from the notation. For the analysis we are interested in the behaviour of the algorithm $M'$ on a fixed input $x$.
For this purpose, we use the distribution $\mu_x$ from \cref{def:distb}.

For $x\in \Sigma^n$ we define $\mu_x$ to be the uniform distribution over the multi-collection of sets
\begin{equation}\label{equation:mu_xSet}
 \set{S: (S, a_S, b, s) \text{ is a description tuple such that } a_S = x_{|S}},
\end{equation}
where a description tuple is as appears in \cref{sec:construction}.
We note that this implies that $\supp(\mu_x)$ has exactly one set for each decision tree $M'$ may use, since when both the randomness and the input are fixed exactly one branch of the decision tree is used by $M'$. Therefore,
\begin{equation*}
\abs{\mu_x} = \abs{\Tilde{\mu}} \:.
\end{equation*}

We now list the relevant parameters in the analysis with reference to where they are obtained. By \cref{lem:prep-alg},
\begin{equation}\label{equation:sigmaVal}
	\sigma = \errorratefrac \:,
\end{equation}
and, for every $x \in \Sigma^n$,
\begin{equation}\label{equation:mux}
	\abs{\mu_x} = \abs{\Tilde{\mu}} = \suppsizefrac \:.
\end{equation}
The construction of $N$ in the previous section sets the parameters
\begin{equation}\label{equation:gammavalue}
	\gamma = \gammaValue \:,
\end{equation}
\begin{equation}\label{equation:pvalue}
	p = \sampprob \:,
\end{equation}
and, for all $j \in [q]$,
\begin{equation}\label{equation:taujvalue}
	\tau_j = \threshold{j} = \thresholdx{j} \:.
\end{equation}
(The second equality holds for all $x \in \Sigma^n$ by \cref{equation:mux}.) Finally, the size of the collection of tuples $\tuples$, which by the construction in \cref{sec:preprocessing} is the same as that of $\sets$, is bounded by the total number of branches over all decision trees in $\supp(\Tilde{\mu})$. Thus,
\begin{equation}\label{equation:tuples}
	\abs{\sets} = \abs{\tuples} \leq \abs{\Sigma}^q \cdot \abs{\Tilde{\mu}} = \abs{\Sigma}^q \cdot \abs{\mu_x} \:,
\end{equation}
for every $x \in \Sigma^n$.

For our result we need an upper bound on the sizes of the kernels that algorithm $N$ enumerates over, which we show next.
\newcommand{\kernelSizeUB}[1]{\gamma\cdot q^2\cdot n^{1-\max\{1,{#1}\}/q}}
\newcommand{\kernelNonzeroSizeUB}[1]{\gamma\cdot q^2\cdot n^{1-{#1}/q}}
\newcommand{\kernelZeroSizeUB}{\gamma\cdot q^2 \cdot n^{1-1/q}}
\begin{claim}\label{clm:kernels}
	Let $\set{K_i : 0 \leq i \leq q}$ be the kernels of the daisy partition $\set{\daisy_i}$ of $\sets$ used by the algorithm $N$. For every $i \in \set{0,1,\dots,q}$, the kernel $K_i$ is such that $\abs{K_i} \leq \kernelSizeUB{i}$ and, for $n$ sufficiently large, $\abs{K_i} < \rho n/2$.
\end{claim}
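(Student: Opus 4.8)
The plan is to read the claim off almost directly from the daisy partition lemma (\cref{thm:daisy-partition}), once the explicit bound on $|\sets|$ supplied by the randomness reduction inside \cref{lem:prep-alg} has been substituted in. Recall that $N$ obtains its kernels $\set{K_i : 0 \leq i \leq q}$ by applying \cref{thm:daisy-partition} to the multi-collection $\sets$, and that lemma already guarantees $|K_i| \leq q|\sets|\cdot n^{-\max\{1,i\}/q}$ for every $i$. So the only real work is to control $|\sets|$ and then simplify.

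First I would collect the relevant size bounds. By \cref{equation:tuples}, $|\sets| = |\tuples| \leq |\Sigma|^q\cdot|\Tilde{\mu}|$; by \cref{equation:mux} together with $\sigma = 1/(8q)$ (\cref{equation:sigmaVal}), $|\Tilde{\mu}| = \suppsize = 48qn\ln|\Sigma|$; hence $|\sets| \leq 48q|\Sigma|^q\ln|\Sigma|\cdot n$. Plugging this into the daisy partition bound gives
\[
  |K_i| \;\leq\; q|\sets|\cdot n^{-\max\{1,i\}/q} \;\leq\; 48q^2|\Sigma|^q\ln|\Sigma|\cdot n^{1-\max\{1,i\}/q},
\]
and since $\gamma = \gammaValue$ (\cref{equation:gammavalue}) the right-hand side is exactly $\gamma q^2\cdot n^{1-\max\{1,i\}/q} = \kernelSizeUB{i}$, establishing the first inequality for all $0 \leq i \leq q$.

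For the second inequality I would use that $\max\{1,i\}/q \geq 1/q$ for every $i$, so $|K_i| \leq \gamma q^2\cdot n^{1-1/q}$, and it then suffices to show $\gamma q^2 < \rho n^{1/q}/2$ for all sufficiently large $n$ (recall $\rho = \rho_0 = \Omega(1)$). This is exactly where the standing regime assumptions enter: $q = o(\sqrt{\log n})$ forces $n^{1/q} = n^{\omega(1/\sqrt{\log n})}$ to grow faster than any fixed power of $\log n$, whereas $|\Sigma| = n^{1/\omega(\ell^3\log^3\ell)}$ and $q = O(\ell\log\ell)$ give $|\Sigma|^q = n^{1/\omega(q^2)} = n^{o(1/q)}$ and $\ln|\Sigma| \leq \ln n$, so $\gamma q^2 = 48q^2\ln|\Sigma|\cdot|\Sigma|^q$ is at most a $\polylog(n)$ factor times $n^{o(1/q)}$, hence $o(n^{1/q})$. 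Consequently $|K_i| \leq \gamma q^2 n^{1-1/q} = o(n) < \rho n/2$ once $n$ is large enough.

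The whole argument is bookkeeping, and I do not anticipate a genuine obstacle; the one place that warrants a moment's care is the last step, namely confirming that the $\ln|\Sigma|$ and $|\Sigma|^q$ factors hidden inside $\gamma$ really are swamped by $n^{1/q}$ — i.e. that $q^2\ln|\Sigma| = n^{o(1/q)}$ and $n^{o(1/q)} = o(n^{1/q})$ under the regime $q = o(\sqrt{\log n})$, $|\Sigma|^q = n^{1/\omega(q^2)}$ — which follows once those definitions are unpacked.
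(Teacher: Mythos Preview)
Your proposal is correct and follows essentially the same approach as the paper: apply the kernel bound from \cref{thm:daisy-partition}, substitute the bound on $|\sets|$ via \cref{equation:tuples}, \cref{equation:mux}, and \cref{equation:sigmaVal}, and then recognise the constant as $\gamma$. The second part is likewise identical in spirit; the only cosmetic difference is that the paper bounds $|K_0|$ and then invokes the incidence chain $K_q \subseteq \cdots \subseteq K_0$, whereas you use $\max\{1,i\}/q \geq 1/q$ directly, which amounts to the same thing.
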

\begin{proof}
By~\cref{thm:daisy-partition},	for every $i \in \set{0,1,\dots,q}$,
\begin{flalign*}
&&\abs{K_i} &\leq q\abs{\sets}n^{-\max\{1,i\}/q} &&\\
&&&\leq q\abs{\Sigma}^q \abs{\mu_x} \cdot n^{-\max\{1,i\}/q} &&\left(\text{by \cref{equation:tuples}, } \abs{\sets}\leq \abs{\Sigma}^q \cdot \abs{\mu_x}\right)\\
&&&\leq q\abs{\Sigma}^q \cdot \suppsizefrac \cdot n^{-\max\{1,i\}/q} &&\left(\text{by \cref{equation:mux}, }\abs{\mu_x}\leq \suppsizefrac \right)\\
&&&= 24\abs{\Sigma}^q \cdot \ln\abs{\Sigma} \cdot q^2 \cdot n^{-\max\{1,i\}/q} &&\left(\text{by \cref{equation:sigmaVal}, } \sigma = \errorratefrac\right)\\
&&&=\kernelSizeUB{i} &&\left(\text{by \cref{equation:gammavalue}, }\gamma = \gammaValue \right)
\end{flalign*}
It remains to prove the second part of the claim.
By the calculation above, since $\rho$ is constant and $\abs{\Sigma}^q \ln\abs{\Sigma}\cdot q^2 = o(n^{-1/q})$ (recall that $\abs{\Sigma} \leq n^{1/q^4}$ and $q$ is sub-logarithmic), for sufficiently large $n$,
\begin{equation}
\abs{K_0} \leq \kernelZeroSizeUB = \left(\gammaValue \cdot q^2 \cdot n^{-1/q}\right)n\leq \rho n/2  .
\end{equation}
By \cref{thm:daisy-partition}, $K_q\subseteq K_{q-1},\dots,K_1 = K_0$, and hence the claim follows.
\end{proof}

Next, we provide a number of definitions emanating from algorithm $N$.
We define, for every $x \in \Sigma^n$, the multi-collection
\begin{align*}
	\ones^x \coloneqq \set{S : (S, a_S, 1, s) \in \tuples \text{ and } x_{|S} = a_S},
\end{align*}
where $\tuples$ is defined as in \cref{sec:construction}. Note that the definition of this collection depends only on the algorithm $M$ and not on the function $f$ it computes. Hence, it is well-defined for every $x$, and in particular for points that are $\rho$-close to a $\rho$-robust point of the domain (where $f$ may not be defined).
We note that, since $\mu_x$ is defined over the collection in \cref{equation:mu_xSet} we know that
\begin{equation}\label{equation:onesxMux}
\ones_x \subseteq \supp(\mu_x) \:.
\end{equation}

Since the ``capping parameter'' $\capparam$ is used numerous times, we set
\newcommand{\capa}{\alpha}
\begin{equation}\label{equation:capparam}
\capa = \capparamfrac \:.
\end{equation}

We refer to the act of incrementing $v$ as counting a vote.
For each $j \in [q]$, we define the \emph{vote counting function} $v_j\colon \Sigma^n \rightarrow \N$ to be a random variable (determined by $Q$) as follows. If $j > 1$,
\begin{equation*}
  v_j(x) \coloneqq \abs{\set{S \in \ones^x \cap \daisy_j : S \subseteq Q \cup K_j}} \:,
\end{equation*}
and $v_1(x)$ is defined likewise, with the exception that, when more than $\capa$ sets intersect in a point outside $K_1$, they are discarded.
\begin{claim}
	\label{clm:v}
	Let $x\in \Sigma^n$, $j \in [q]$ and $\kappa$ an assignment to $K_j$.
	Then $v_j(x_\kappa)$ is equal (as a function of $Q$) to the maximum value of the counter $v$ computed by $N$ on input $x$ with kernel $K_j$ and the kernel assignment $\kappa$ to $K_j$.
\end{claim}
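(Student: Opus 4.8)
The plan is to unwind the two definitions — the random variable $v_j(x_\kappa)$ and the counter $v$ that $N$ maintains during \ref{step:votecount} for kernel $K_j$ with assignment $\kappa$ — and observe that they count the same sets. Fix $x \in \Sigma^n$, $j \in [q]$, an assignment $\kappa$ to $K_j$, and a fixed outcome $Q$ of the sampling in \ref{step:sample} (both objects are random variables over $Q$, so it suffices to check equality pointwise in $Q$). The argument is bookkeeping: I will show that a tuple $(S, a_S, 1, s) \in \tuples$ with $S \in \daisy_j$ contributes to $v$ (under assignment $\kappa$) if and only if $S$ is counted in $v_j(x_\kappa)$, and that the ``capping'' exception for $j = 1$ is applied identically in both.

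First I would spell out the increment condition in \ref{step:votecount}: for a tuple $(S, a_S, 1, s) \in \tuples$ with $S \in \daisy_j$, the counter $v$ is incremented exactly when (i) $S \subseteq Q \cup K_j$ and (ii) $a_S$ agrees with $x_\kappa$ on $S$. By definition of $\ones^{x_\kappa}$, condition (ii) says precisely $(x_\kappa)_{|S} = a_S$, i.e. $S \in \ones^{x_\kappa}$. Combined with $S \in \daisy_j$ and condition (i), the set of contributing sets is exactly $\set{S \in \ones^{x_\kappa} \cap \daisy_j : S \subseteq Q \cup K_j}$, which is the multi-collection whose cardinality defines $v_j(x_\kappa)$ for $j > 1$. (Here I am using that tuples are counted with multiplicity on both sides: each branch of each decision tree yields its own tuple, and both $v$ and $v_j$ range over $\tuples$, so the multi-set structure matches.) For $j = 1$, the construction instructs $N$ to disregard sets whose petal-point outside $K_1$ is shared by at least $\alpha = \capparam$ sets; this is verbatim the exception in the definition of $v_1$, so the equality persists.

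The only remaining point is the phrase ``maximum value of the counter $v$'': within the iteration of \ref{step:enumeration} for this particular $(j, \kappa)$, the counter $v$ is reset to $0$ and then only incremented, so its maximum (indeed, its terminal) value over that iteration equals the total number of contributing tuples, which is the quantity computed above. I would note that \ref{step:threshold} may cause $N$ to halt with output $1$ mid-iteration, but this does not affect the maximum value $v$ attains before halting — if the threshold $\tau_j$ is crossed, $v$ has already reached at least $\tau_j$, and the claim concerns the value $v$ would hold having processed all tuples in $\daisy_j$, which is well-defined regardless of early termination. **I do not expect any real obstacle here**; the claim is a definitional identity whose sole subtlety is being careful that multiplicities, the $K_j$-completion convention $S \subseteq Q \cup K_j$, and the $j=1$ capping are transcribed faithfully from the construction into the definition of $v_j$. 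The proof is essentially a line-by-line comparison, and I would present it as such in two or three sentences.
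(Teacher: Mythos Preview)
Your proposal is correct and follows essentially the same approach as the paper's proof: both unfold the increment condition in \ref{step:votecount}, identify it with membership in $\ones^{x_\kappa} \cap \daisy_j$ subject to $S \subseteq Q \cup K_j$, and note that the $j=1$ capping is transcribed verbatim. Your remarks about multiplicities and the ``maximum value'' phrasing add a little extra care the paper leaves implicit, but the argument is the same definitional bookkeeping.
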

\begin{proof}
Fix $x \in \Sigma^n$.
Recall that when algorithm $N$ computes $v$ for a $j \in \{2,3,\dots,q\}$ and a kernel assignment $\kappa$ to $K_j$ in \ref{step:votecount}, it only increases $v$ if it encounters a tuple $(S,a_S,1,s)$ where $S \in \daisy_j$, $S \subset Q \cup K_j$ and $a_S$ assigns on $S$ the same values as $x_{\kappa}$ does.
	Thus, by the definition of $\ones_x$, the algorithm $N$ counts exactly all the tuples $(S,a_S,1,s)$ such that $S \in\ones^x$  and $S \subset Q \cup K_j$.
	These are precisely the sets that comprise the collection whose cardinality is $v_j(x_\kappa)$.
	Note that the same holds when $j=1$ due to the additional condition in  \ref{step:votecount} and the corresponding restriction in the definition of $v_1(x_\kappa)$.
\end{proof}

We now proceed to the main claims. The algorithm $N$ only counts votes for output 1, i.e. tuples with 1 as their third value, and hence it suffices to prove that: (1) \emph{when $f(x) = 1$ and the kernel assignment is  $\kappa = x_{|K_j}$} (the value of $x$ on the indices in $K_j$) \emph{for some daisy $\daisy_j$}, the number of votes is high enough to cross the threshold $\tau_j$; and that (2) \emph{when $f(x) = 0$, then every kernel assignment $\kappa$} is such that the number of votes is smaller than the threshold. These conditions are shown to hold with high probability in \cref{sec:correctness-non-robust,sec:correctness-robust}, respectively, and we show how the theorem follows from them in \cref{sec:conclude}.

\subsection{Correctness on non-robust inputs}
\label{sec:correctness-non-robust}

\begin{claim}
\label{clm:completeness}
Let $Q$ be the coordinates sampled by $N$ and fix $x \in \Sigma^n$ such that $f(x) = 1$. There exists $j \in [q]$ such that, with the kernel assignment $\kappa = x_{|K_j}$, the vote counting function satisfies $v_j(x_\kappa) = v_j(x) \geq \tau_j$ with probability at least $9/10$.
\end{claim}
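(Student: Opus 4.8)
The plan is as follows. Since $\kappa = x_{|K_j}$ means $x_\kappa = x$, by \cref{clm:v} it suffices to exhibit some $j \in [q]$ for which $v_j(x) \geq \tau_j$ holds with probability at least $9/10$ over the sampled set $Q$. Let $\ones^x = \set{S : (S,a_S,1,s)\in\tuples,\ x_{|S}=a_S}$ be the multi-collection of query sets on which $M'$ outputs $1$ given input $x$; since $f(x)=1$ and $M'$ has error rate $\sigma$ we have $\mu_x(\ones^x) \geq 1-\sigma$, and $\ones^x \subseteq \supp(\mu_x)$. We may assume $f^{-1}(0)\neq\varnothing$, since otherwise \cref{thm:main} is trivially satisfied by the constant-$1$ algorithm; then every $0$-input is $\rho$-robust, so the volume lemma (\cref{lem:volume}) applies to $x$.

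The first step is to show that the two sub-collections $N$ discards carry little mass. The sets of $\ones^x \cap \daisy_0$ are contained in $K_0$, and $|K_0| < \rho n/2$ by \cref{clm:kernels}, so \cref{lem:volume} gives $\mu_x(\ones^x\cap\daisy_0) \leq 2\sigma$. Next, let $\mathcal C \subseteq \ones^x\cap\daisy_1$ be the collection capped out in the vote-counting step; since $\daisy_1$ has petals of size $1$ and every point of $(\cup\mathcal C)\setminus K_1$ is shared by at least $\alpha$ sets of $\mathcal C$, we get $|\cup\mathcal C| \leq |K_1| + |\mathcal C|/\alpha$. The choice of the capping parameter $\alpha$, together with $|\mathcal C| \leq |\ones^x| \leq |\mu_x|$, yields $|\mathcal C|/\alpha \leq \rho n/2$; combined with $|K_1| < \rho n/2$ (again by \cref{clm:kernels}) this gives $|\cup\mathcal C| < \rho n$, so \cref{lem:volume} gives $\mu_x(\mathcal C) \leq 2\sigma$. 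Hence $\mathcal G \coloneqq \ones^x \setminus (\daisy_0 \cup \mathcal C)$ satisfies $\mu_x(\mathcal G) \geq 1 - 5\sigma$.

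Since no set of $\mathcal G$ lies in $\daisy_0$, the collection $\mathcal G$ is partitioned among $\daisy_1,\dots,\daisy_q$, so by averaging there is $j\in[q]$ with $\mu_x(\mathcal G\cap\daisy_j) \geq (1-5\sigma)/q$; as $\mu_x$ is uniform on a multi-collection of size $|\mu_x|$, the count $g \coloneqq |\mathcal G\cap\daisy_j|$ satisfies $g \geq \tfrac{1-5\sigma}{q}|\mu_x| = \Theta(n\ln|\Sigma|)$. I then decompose $\mathcal G\cap\daisy_j$ into simple daisies: for $j\geq2$, \cref{Corollary:hbound} bounds the petal-intersection degree inside $\daisy_j$ by $2h(j)-1$, so \cref{lem:simple-daisy} yields $2h(j)$ simple daisies each of size $g/(2h(j))$; for $j=1$ the capping bounds this degree by $\alpha = O(1)$, giving $O(1)$ simple daisies each of size $\Omega(g)$. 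In a simple daisy of size $m$ the petals are disjoint $j$-sets, so the number of them sampled in the first step of $N$ is a sum of $m$ independent $\mathrm{Bernoulli}(p^j)$ random variables; the Chernoff lower tail gives at least $\tfrac34 p^j m$ of them, except with probability $\exp(-\Omega(p^j m))$. Union-bounding over the (at most $2h(j)$, resp.\ $O(1)$) simple daisies, the total number of sampled petals of $\mathcal G\cap\daisy_j$ is at least $\tfrac34 p^j g \geq \tfrac{|\mu_x|}{4q}p^j = \tau_j$ (using $g \geq \tfrac{1-5\sigma}{q}|\mu_x|$ and $\sigma = 1/(8q)$), except with probability $2h(j)\exp(-\Omega(p^j g/h(j)))$, resp.\ $O(1)\exp(-\Omega(p^j g))$; each such petal witnesses a distinct set of $\ones^x\cap\daisy_j$ that is contained in $Q\cup K_j$, and hence contributes to $v_j(x)$, so $v_j(x)\geq\tau_j$. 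Finally I check this failure probability is $o(1)$: since $j \leq q$ and $p \geq n^{-1/(2q^2)}$ we have $p^j \geq n^{-1/(2q)}$, so $p^j g/h(j) = \Omega(n^{1-(j-1)/q-1/(2q)}\ln|\Sigma|) = \Omega(n^{1/(2q)}\ln|\Sigma|)$ (and $pg = \Omega(n^{1-1/(2q^2)}\ln|\Sigma|)$ when $j=1$); as $q = o(\sqrt{\log n})$ these exponents are $\omega(\log n)$, making the failure probability $o(1)$, in particular below $1/10$ for $n$ sufficiently large.

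The step I expect to be the main obstacle is the mass-accounting of the second paragraph: both applications of the volume lemma hinge on verifying the relevant volume bounds are strictly below $\rho n$ with the precise constants fixed in \cref{sec:construction} — in particular on the estimate $|\cup\mathcal C| \leq |K_1| + |\mathcal C|/\alpha$ for the capped collection and its consequence $|\mathcal C|/\alpha \leq \rho n/2$ via the value of $\alpha$. A secondary bookkeeping concern is reconciling the Chernoff deviation and the averaging loss with the definition $\tau_j = \tfrac{|\mu_x|}{4q}p^j$ uniformly in $j$, even though $j=1$ relies on a different, constant-sized simple-daisy decomposition than $j\geq 2$.
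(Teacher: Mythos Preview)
Your proof is correct and follows essentially the same structure as the paper's: the volume lemma bounds the mass of $\daisy_0$ and of the capped collection $\mathcal C$, averaging locates a heavy $\daisy_j$, the Hajnal--Szemer\'edi lemma decomposes it into simple daisies, and Chernoff plus a union bound finishes. The one deviation is that for $j=1$ you use the capping to bound the petal-intersection degree by $\alpha$ rather than by $h(1)=n^{1/q}$ as the paper does --- a harmless tightening the paper reserves for the soundness claim; just note that $\alpha = 12\ln|\Sigma|/(\rho\sigma) = \Theta(q\ln|\Sigma|)$ is not literally $O(1)$ unless $q$ and $|\Sigma|$ are constants, though your estimate still goes through with this correction.
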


\begin{proof}
  For ease of notation, let us fix $x$ as in the statement and denote $\ones \coloneqq \ones^x = \ones^{x_\kappa}$. When $j > 1$, define the subcollection of $\ones$ in $\daisy_j$ by $\ones_j \coloneqq \ones^x \cap \daisy_j$; when $j = 1$, define $\ones_1 \coloneqq (\ones^x \cap \daisy_j) \setminus \mathcal{C}$, where $\mathcal{C} \subseteq \ones^x \cap \daisy_1$ contains every  $S\in \ones^x \cap \daisy_1$ for which there exist at least $\capa-1$ other sets in $S' \in \ones^x \cap \daisy_1$ that have the same petal as $S$, i.e., such that $S\setminus K_1 = S'\setminus K_1$. We also take $n$ to be sufficiently large when necessary for an inequality to hold.

For the claim to hold we require the existence of $j\in [q]$ such that $\ones_j$ is a sufficiently large portion of $\ones$.
Since $\bigcup_{0 < j \leq q} \ones_j = \ones \setminus (\ones_0\cup \mathcal{C})$, in  order to achieve this goal, we only need to bound the sizes of both $\ones_0$ and $\mathcal{C}$. As a first step, we bound $\mu_x(\ones_0)$ and $\mu_x(\mathcal{C})$, which give us a lower bound on $\mu_x\left(\bigcup_{0 < j \leq q} \ones_j\right)$, which we then use in order to lower bound $\abs{\bigcup_{0 < j \leq q} \ones_j}$.

We start with $\mu_x(\ones_0)$. All the sets in $\daisy_0$ are subsets of $K_0$, and $\abs{K_0}  < \rho n/2$ by \cref{clm:kernels}.
This implies that the cardinality of $\cup \ones_0$ (the union of all sets in $\ones_0$) is strictly less than $\rho n$, and consequently, by the volume lemma (\cref{lem:volume}, which applies because $f(x) = 1$), we have $\mu_x(\ones_0) \leq 2 \sigma$.

We now proceed to bound $\mu_x(\mathcal{C})$. As $\mathcal{C} \subseteq \daisy_1$, every set in $\mathcal{C}$ has exactly one element in $[n] \setminus K_1$ and repeats at least $\capa$ times, the cardinality of $\cup \mathcal{C}$ is at most $\abs{K_1} + \frac{\abs{\mathcal{C}}}{\capa}$.
By \cref{clm:kernels}, $\abs{K_1} < \rho n/2$, and it follows that
\begin{flalign*}
&&\abs{K_1} + \frac{\abs{\mathcal{C}}}{\capa} &< \frac{\rho n}{2} + \frac{\abs{\ones}}{\capa} && \\
&&&\leq \frac{\rho n}{2} + \frac{\abs{\mu_x}}{\capa}  &&\left(\text{by \cref{equation:onesxMux}, } \ones =\ones_x \subseteq \supp(\mu_x)\right)\\
&&&\leq \frac{\rho n}{2} + \abs{\mu_x} \cdot \capparamInv  &&\left(\text{by \cref{equation:capparam}, } \capa^{-1} = \capparamInv\right)\\
&&&= \frac{\rho n}{2} + \suppsizefrac\cdot\capparamInv  &&\left(\text{by \cref{equation:mux}, } \abs{\mu_x} = \suppsizefrac\right)\\
&&& \leq \rho n \:.
\end{flalign*}
Consequently, by \cref{lem:volume}, $\mu_x(\mathcal{C}) \leq 2 \sigma$.

By the definition of error rate, $\mu_x(\ones) \geq 1-\sigma$. Since $\set{\ones_j : 0 \leq j \leq q}$ is a partition of $\ones$ (because $\set{\daisy_j}$ is a partition),
	\begin{equation*}
\mu_x\left(\bigcup_{0 < j \leq q} \ones_j\right) = \mu_x(\ones) - \mu_x(\ones_0) - \mu_x(\mathcal{C}) \geq 1 - 5\sigma \:.
	\end{equation*}
As $\mu_x$ is uniform, each element of the multi-collection $\ones$ has weight exactly $1/\abs{\mu_x}$. Therefore,
	\begin{equation*}
		\sum_{j = 1}^q\abs{\ones_j} = \abs{\mu_x} \cdot \mu_x(\cup_{j \in [q]} \ones_j) \geq \abs{\mu_x} (1 - 5 \sigma) \geq \abs{\mu_x}/2 \:,
	\end{equation*}
where the last inequality follows from the assumption that $5\sigma  \leq 1/2$ (which follows, e.g., from $q \geq 3$, which \cref{lem:prep-alg} ensures).
Let $j$ be such that
\begin{equation}\label{equation:onesj}
\abs{\ones_j} \geq \frac{\abs{\mu_x}}{2q} \:;
\end{equation}
by averaging, such a $j$ indeed exists.
Our goal now is to show that with probability at least $9/10$, there are at least $\tau_j$ sets $S\in \ones_j$ whose petal is in $Q$, i.e., such that $S\setminus K_j \subseteq Q$.

Instead of proving this directly on $\ones_j$, we do so on collections that form a partition of $\ones_j$ and have a useful structure.
The sets in $\ones_j$ are also in $\daisy_j$, so that $\ones_j$ is also a daisy with kernel $K_j$. By~\cref{clm:hbound},
for every set $S\in \ones_j$, there exist at most $2n^{\max\set{1,j-1}/q}-1$ sets $S' \in \ones_j \setminus \set{S}$ whose petals have a non-empty intersection with the petal of $S$, i.e, such that $(S \cap S') \setminus K_j \neq \varnothing$.
This enables us to apply \cref{lem:simple-daisy} to $\ones_j$, partitioning it into $\set{\sets_i : i \in [t]}$ simple daisies of equal sizes, where
\begin{equation}\label{equation:Ct}
t \leq 2n^{\max\set{1,j-1}/q} \:.
\end{equation}  Thus, for every $i\in [t]$,
\begin{equation}\label{equation:setsj}
\abs{\sets_i} = \frac{\abs{\ones_j}}{t} \:.
\end{equation}

Let $\ones_j'$ be the multi-collection of all sets $S\in \ones_j$ such that $S\setminus K_j \subseteq Q$.
In the same manner, for every $i\in [t]$, let $\sets_i'$ be the multi-collection of all sets $S\in \sets_i$ such that $S\setminus K_j \subseteq Q$.

By construction, the collections $\set{\sets_i'}$ are pairwise disjoint. Also, by the definition of $v_j$, we have  $v_j(x)= \abs{\ones_j'} = \sum_{i = 1}^{t}\abs{\sets_i'}$. Therefore, the event that $v_j(x) \leq \tau_j$ can only occur if there exists $i \in [t]$ such that $\abs{\sets_i'} \leq \tau_j/t$.

Consequently, we obtain
\begin{flalign*}
&&\Pr\left[v_j(x) \leq \tau_j\right] & \leq \Pr\left[\abs{\sets_i'} \leq \frac{\tau_j}{t} \text{ for some } i \in [t]\right]\\
&&&\leq \sum_{i = 1}^{t}\Pr\left[\abs{\sets_i'} \leq \frac{\tau_j}{t} \right] && \text{(union bound)} \\
&&&\leq t \Pr\left[\abs{\sets_1'} \leq \frac{\tau_j}{t} \right]. && \text{(all $\sets_i$ have equal size)}
\end{flalign*}

We show afterwards that the probability of the event $\abs{\sets_1'} \leq \frac{\tau_j}{t}$ is strictly less than $1/10t$, which by the inequality above implies the claim.

We later use the Chernoff bound with $\sets_1,$ and hence we start by bounding $\E[\abs{\sets_1'}]$ from below.
Recall that the petal of every set $S\in \sets_1 \subseteq \daisy_j$ has size $j$ (i.e., $\abs{S \setminus K_j} = j$), and therefore is in $\sets_1'$ with probability exactly $p^j$.
So
\begin{flalign}
&&\E[\abs{\sets_1'}] = \abs{\sets_1}p^j &= \frac{\abs{\ones_j}p^j}{t}&&\left(\text{by \cref{equation:setsj}, }\abs{\sets_1} = \abs{\ones_j}/t\right)\nonumber\\
\label{equation:ES1}&&&\geq \frac{\abs{\mu_x}p^j}{2tq}&&\left(\text{ by \cref{equation:onesj}, } \abs{\ones_j} \geq \frac{\abs{\mu_x}}{2q}\right)\\
&&&= \frac{\tau_j}{t}.&&\left(\text{by \cref{equation:taujvalue}, } \tau_j = \thresholdx{j}\right)\nonumber
\end{flalign}
Thus,
  \begin{equation*}
\Pr\left[\abs{\sets_1'} \leq \frac{1}{2}\E[\abs{\sets_1'}]\right]
\geq \Pr\left[\abs{\sets_1'} \leq \frac{\tau_j}{t} \right] \:.
\end{equation*}
Next we show that the probability of the event $\abs{\sets_1'} \leq \frac{1}{2}\E[\abs{\sets_1'}$ is at most $1/10t$, which concludes the proof.
Since  $\sets_1$ is a simple daisy, the petals of sets in $\sets_1$ are pairwise disjoint and hence the events $S \setminus K_j \subset Q$, for every $S\in \sets_1$, are all independent. This enables us to use the Chernoff bound to get that
\newcommand{\setoneEvent}{\exp\left(-\frac{\abs{\mu_x}p^j}{32tq}\right)}
\begin{flalign*}
&&&\Pr\left[\abs{\sets_1'} \leq \frac{1}{2}\E[\abs{\sets_1'}]\right] &&\\
&&&\qquad \leq \exp\left(-\frac{\E[\abs{\sets_1'}]}{8}\right) &&\left(\text{Chernoff bound}\right)\\
&&&\qquad\leq \exp\left(-\frac{\abs{\mu_x}p^j}{16tq}\right) &&\left(\text{by \cref{equation:ES1}}\right)\\
&&&\qquad\leq \exp\left(-\suppsizefrac \cdot \frac{p^j n}{16tq}\right)&&\left(\text{by \cref{equation:mux}, }\abs{\mu_x} = \suppsizefrac\right)\\
&&&\qquad\leq \exp\left(-\errorrateInv \cdot \frac{3\ln \abs{\Sigma}p^j n}{8tq}\right)&&\left(\text{by \cref{equation:sigmaVal}, } \sigma^{-1} = \errorrateInv\right)\\
&&&\qquad\leq \exp\left(-\frac{3\ln \abs{\Sigma}p^j}{4} \cdot n^{1-\frac{\max\set{1,j-1}}{q}}\right)&&\left(\text{by \cref{equation:Ct}, }t\leq 2n^{\frac{\max\set{1,j-1}}{q}}\right)\\
&&&\qquad\leq \frac{1}{10t}\exp\left(-\gamma^j\cdot\frac{3\ln\abs{\Sigma}}{4} \cdot n^{1 - \frac{\max\set{1, j-1}}{q} - \frac{j}{2q^2}} + \ln (20t)\right)&&\left(p=\sampprob\right)\\
&&&\qquad\leq \frac{1}{10t}\exp\left(-\gamma\cdot\frac{3\ln\abs{\Sigma}}{4} \cdot n^{ \frac{1}{q} - \frac{1}{2q}} + \ln (10t)\right)&&\left(1 \leq j \leq q\right)\\
&&&\qquad\leq \frac{1}{10t} \:,
\end{flalign*}
where the last inequality follows for $n$ sufficiently large because $\ln t\leq \frac{\max\set{1,j-1}}{q}\ln n + 1 = o(n^{1/2q})$ and $\gamma\cdot \ln\abs{\Sigma} = \Omega(1)$.
\end{proof}

Note that, although a success probability of $9/10$ suffices to ensure correctness of a single run of $N$, \cref{clm:completeness} yields a much stronger result: the failure probability is \emph{exponentially small}. This is because \cref{clm:completeness} does not enumerate over kernel assignments. Moreover, the analysis for the case $j = 1$ can be improved significantly (as will be necessary in \cref{clm:soundness}), but this does not yield in an overall improvement in our results.

\subsection{Correctness on robust inputs}
\label{sec:correctness-robust}
In the following claim we note that $\abs{K_1}/n$-robustness suffices for the analysis, since it ensures all kernel assignments $\kappa$ lead $x_\kappa$ to also output $f(x) = 0$.

\begin{claim}
    \label{clm:soundness}
    Suppose the input $x \in \Sigma^n$ is $\abs{K_1}/n$-robust for $M'$ and $f(x) = 0$. Then, for every $j \in [q]$ and every assignment $\kappa$ to the kernel $K_j$, the vote count satisfies $v_j(x_\kappa) < \tau_j$ with probability at least $1 - \abs{\Sigma}^{\abs{K_j}}/(10q)$.
\end{claim}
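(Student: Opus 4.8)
The plan is to fix $j \in [q]$ and an assignment $\kappa$ to $K_j$, bound $\Pr[v_j(x_\kappa) \geq \tau_j]$ by an exponentially small quantity, and then note that summing this over the $|\Sigma|^{|K_j|}$ assignments to $K_j$ (and, downstream, over $j \in [q]$) yields correctness of $N$ on robust inputs with probability $9/10$. The first step is to identify and bound the relevant sets. By \cref{clm:v}, only sets in $\ones^{x_\kappa} \cap \daisy_j$ contribute to $v_j(x_\kappa)$. Since $x_\kappa$ differs from $x$ only on $K_j \subseteq K_1$ and $|K_j| \leq |K_1|$ by the incidence chain of \cref{thm:daisy-partition}, the word $x_\kappa$ is $|K_1|/n$-close to $x$; because $x$ is $|K_1|/n$-robust and $f(x) = 0$, the algorithm $M'$ outputs $0$ on $x_\kappa$ with probability at least $1 - \sigma$, so $\mu_{x_\kappa}(\ones^{x_\kappa}) \leq \sigma$ and hence $|\ones^{x_\kappa} \cap \daisy_j| \leq |\ones^{x_\kappa}| \leq \sigma |\Tilde{\mu}|$ by \cref{equation:mux,equation:onesxMux}. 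This plays the role that the volume lemma played in \cref{clm:completeness}; here, robustness of $x$ itself suffices and the volume lemma is not needed.

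The heart of the argument, for $j > 1$, is a concentration bound via the simple-daisy decomposition. The collection $\ones^{x_\kappa} \cap \daisy_j$ is a daisy with kernel $K_j$ and petals of size $j$; by \cref{Corollary:hbound} every petal meets at most $2n^{(j-1)/q}-1$ other petals, so \cref{lem:simple-daisy} partitions it into $2n^{(j-1)/q}$ simple daisies $\sets_1,\dots,\sets_{2n^{(j-1)/q}}$ of equal size at most $\sigma|\Tilde{\mu}|\, n^{-(j-1)/q}/2$. Within each $\sets_i$ the petals are disjoint, so the number $Y_i$ of petals of $\sets_i$ contained in the sample $Q$ is a sum of independent Bernoulli variables with $\E[Y_i] = |\sets_i| p^j \leq \tau_j/(4n^{(j-1)/q})$ (using $\sigma = 1/(8q)$ and $\tau_j = |\Tilde{\mu}| p^j/(4q)$ from \cref{equation:sigmaVal,equation:taujvalue}). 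Since $v_j(x_\kappa) = \sum_i Y_i$, the event $v_j(x_\kappa) \geq \tau_j$ forces some $Y_i \geq \tau_j/(2n^{(j-1)/q}) \geq 2\E[Y_i]$, and the Chernoff bound together with a union bound over the $2n^{(j-1)/q}$ simple daisies gives
\begin{equation*}
  \Pr[v_j(x_\kappa) \geq \tau_j] \leq 2n^{(j-1)/q} \cdot \exp\left(-\Omega\left(\frac{\tau_j}{n^{(j-1)/q}}\right)\right).
\end{equation*}
Substituting $\tau_j = 12\,n\ln|\Sigma|\cdot p^j$, $p = \gamma n^{-1/(2q^2)}$ and $|K_j| \leq \gamma q^2 n^{1-j/q}$ (from \cref{clm:kernels}), the exponent is $\Omega\bigl(\gamma^j \ln|\Sigma| \cdot n^{1-(j-1)/q-j/(2q^2)}\bigr)$, whose power of $n$ is $1 - j/q + (2q-j)/(2q^2) \geq 1 - j/q + 1/(2q)$, strictly larger than the power $1-j/q$ in $|K_j|\ln|\Sigma|$. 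As $q = o(\sqrt{\log n/\log\log n})$ ensures $n^{1/(2q)} \gg q^2$, this failure probability is well below $|\Sigma|^{-|K_j|}/(10q)$ for $n$ large, as required, and a union bound over the $|\Sigma|^{|K_j|}$ assignments to $K_j$ keeps the total at $o(1/q)$.

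The case $j = 1$ is the main obstacle and the reason $N$ caps $\daisy_1$. Here \cref{Corollary:hbound} only gives an intersection bound of $2n^{1/q}$, so the estimate above would have exponent with power of $n$ equal to $1 - 1/q - 1/(2q^2) < 1 - 1/q$, which does not survive the union bound over the $|\Sigma|^{|K_1|}$ assignments to $K_1$. The capping in \ref{step:votecount} (reflected in the definition of $v_1$) removes from the count every set whose size-$1$ petal is shared by $\alpha$ or more sets of $\ones^{x_\kappa} \cap \daisy_1$; after this removal each petal meets at most $\alpha-1$ others, so \cref{lem:simple-daisy} decomposes the capped collection into only $\alpha = \Theta(q\ln|\Sigma|)$ simple daisies, each of size at most $\sigma|\Tilde{\mu}|/\alpha$. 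Repeating the Chernoff-plus-union argument yields $\Pr[v_1(x_\kappa) \geq \tau_1] \leq \alpha\exp(-\Omega(\tau_1/\alpha))$, and now $\tau_1/\alpha = \Theta(\rho\, np/q) = \Theta(\rho\gamma \cdot n^{1-1/(2q^2)}/q)$ has power of $n$ equal to $1 - 1/(2q^2)$, strictly larger than the power $1 - 1/q$ in $|K_1|\ln|\Sigma|$; since $n^{1/q} \gg q^3\ln|\Sigma|/\rho$ under the standing assumptions on $q$ and $|\Sigma|$, the union bound over all $|\Sigma|^{|K_1|}$ assignments to $K_1$ closes the case. Discarding sets during capping only decreases $v_1(x_\kappa)$, so it can only help establish $v_1(x_\kappa) < \tau_1$.

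I expect the substantive content to be exactly the two points highlighted above: the capping of $\daisy_1$ (without which the union bound over kernel assignments fails), and the fact that Chernoff applies only within simple daisies — hence the Hajnal–Szemer\'edi decomposition. The remaining work is bookkeeping: checking that for every $j$ the power of $n$ in the tail exponent strictly dominates the power $1 - \max\{1,j\}/q$ of the kernel bound from \cref{clm:kernels}, and that the leftover $\poly(q)$ and $\ln|\Sigma|$ factors are absorbed by the gap $n^{\Theta(1/q)}$ — which is precisely where the hypotheses $\ell = o(\sqrt{\log n/\log\log n})$ and $|\Sigma| = n^{1/\omega(\ell^3\log^3\ell)}$ enter.
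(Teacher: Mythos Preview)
Your proposal is correct and follows essentially the same approach as the paper's proof: both bound $|\ones^{x_\kappa}|$ via robustness (not the volume lemma), decompose $\ones^{x_\kappa}\cap\daisy_j$ into equal-size simple daisies via \cref{lem:simple-daisy} (with $\beta_j = 2n^{(j-1)/q}$ for $j>1$ and $\beta_1=\alpha$ after capping), apply Chernoff within each simple daisy, union-bound over the simple daisies, and then verify that the resulting exponent dominates $|K_j|\ln|\Sigma|$ by comparing powers of $n$. Your identification of the $j=1$ capping as the crux and of the exponent bookkeeping as the remaining work matches the paper exactly.
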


\begin{proof}
	For ease of notation, fix $j\in [q]$, an assignment $\kappa$ to $K_j$ and $x$ as in the statement, and let $\ones \coloneqq  \ones^{x_\kappa}$.
	If $j > 1$, define the subcollection of $\ones$ in $\daisy_j$ by $\ones_j \coloneqq \ones \cap \daisy_j$; if $j = 1$, define $\ones_1 \coloneqq (\ones \cap \daisy_1) \setminus \mathcal{C}$, where $\mathcal{C} \subseteq \ones \cap \daisy_1$ contains every $S\in \ones \cap \daisy_1$ for which there exist at least $\capa-1$ other sets $S' \in \ones \cap \daisy_1$ that have the same petal as $S$, i.e., such that $S\setminus K_1 = S'\setminus K_1$.
  We also take $n$ to be sufficiently large when necessary for an inequality to hold.

  Note that $x_\kappa$ may not be in the domain of $f$, but the robustness of $x$ allows us to bound the size of $\ones = \ones^{x_\kappa}$ regardless. Moreover, since $f(x) =0$, we know that $\mu_x(\ones) \leq \sigma$.
  As $\mu_x$ is uniform, each element of the multi-collection has $\ones$ weight exactly $1/\abs{\mu_x}$. Therefore, for every $i\in [q]$,
  \begin{equation}\label{equation:Sonesj}
   \abs{\ones_i} \leq \sigma\abs{\mu_x} \:.
  \end{equation}

 	Our goal now is, for every $j\in [q]$, to upper bound the probability that there are at least $\tau_j$ sets $S\in \ones_j$ whose petal is in $Q$, i.e., such that $S\setminus K_j \subseteq Q$.

   	For every $j\in [q]$, let $\beta_j$ be such that for every set $S\in \ones_j$ there exist at most $\beta_j - 1$ other distinct sets $S'\in \ones_j$ whose petal intersects the petal of $S$, i.e., $(S \setminus K_1) \cap (S' \setminus K_1) \neq \varnothing$.

   	For the time being let us fix $j\in [q]$.
   	By applying \cref{lem:simple-daisy}, we partition $\ones_j$ into $\set{\sets_i : i \in [\beta_j]}$, such that for every $i\in [q]$,
   	\begin{equation}\label{equation:SSetsj}
   	\abs{\sets_i} = \frac{\abs{\ones_i}}{\beta_j} \leq \frac{\sigma\abs{\mu_x}}{\beta_j} \:,
   	\end{equation}
   	where the inequality follows from \cref{equation:Sonesj}, and each $\sets_i$ is a simple daisy of size $\abs{\ones_1}/\beta_j$.

   	Let $\ones_j'$ be the multi-collection of all sets $S\in \ones_j$ such that $S\setminus K_j \subseteq Q$.
 	In the same manner, for every $i\in [\beta_j]$, let $\sets_i'$ be the multi-collection of all sets $S\in \sets_i$ such that $S\setminus K_j \subseteq Q$. By the definition of $v_j$ and the fact that $\set{\sets_i}$ is a partition
 	\begin{equation*}
 	  v_j(x_\kappa) = \abs{\ones_j'} = \sum_{i = 1}^{\beta_j} \abs{\sets_i'}.
 	\end{equation*}
 	Since the event $v_1(x_\kappa) \geq \tau_j$ can only occur if $\abs{\sets_i'} \geq \frac{\tau_j}{\beta_j}$ for some $i\in[\beta_j]$, we obtain
 	\begin{flalign*}
 	&&\Pr\left[v_j(x) \geq \tau_j\right] & \leq \Pr\left[\abs{\sets_i'} \geq \frac{\tau_j}{\beta_j} \text{ for some } i \in [\beta_j]\right]\\
 	&&&\leq \sum_{i = 1}^{\beta_j}\Pr\left[\abs{\sets_i'} \geq \frac{\tau_j}{\beta_j} \right] && \text{(union bound)} \\
 	&&&\leq \beta_j\cdot\Pr\left[\abs{\sets_1'} \geq \frac{\tau_j}{\beta_j} \right] \:. && \text{(all $\sets_i$ have equal size)}
 	\end{flalign*}

   	Now our goal is to show that the event that $\abs{\sets_1'} \geq \frac{\tau_j}{\beta_j}$ happens with probability at most $\frac{\abs{\Sigma}^{-\abs{K_1}}}{10q\beta_j}$. Note that this is sufficient for proving the claim because plugging this into the previous equation gives $\Pr\left[v_j(x) \geq \tau_j\right] \leq \abs{\Sigma}^{-\abs{K_j}}/(10q)$.

   	Since the sets in $\sets_1$ are pairwise disjoint, we can and do use the Chernoff bound. In order to do so we first bound the value of $\E[\abs{\sets_1'}]$ from above.
   	Recall that the petal of every set $S\in \sets_j \subseteq \daisy_j$ has size $j$ (i.e., $\abs{S \setminus K_j} = j$), and therefore $S$ is in $\sets_j'$ with probability exactly $p^j$. So,
   	\begin{flalign*}
   	&&\E[\abs{\sets_1'}] & =  \abs{\sets_1}\cdot p^j&& \\
   	&&&\leq \frac{\sigma\cdot\abs{\mu_x}\cdot p^j}{\beta_j} &&\left(\text{by \cref{equation:SSetsj}}\right)\\
   	&&&= \frac{\tau_j}{2\beta_j} \:. &&\left( \text{by \cref{equation:sigmaVal} and \cref{equation:taujvalue}, }\sigma = \errorratefrac \text{ and } \tau_j = \thresholdx{j}\right)
   	\end{flalign*}

   	We now use the Chernoff bound, stopping at a partial result and providing separate analyses for the cases $j = 1$ and $j>1$.

   	\begin{flalign*}
   	&&\Pr\left[\abs{\sets_1'} \geq \frac{\tau_j}{\beta_j} \right] & = \Pr\left[\abs{\sets_1'} \geq \frac{\tau_j}{\beta_j\E[\abs{\sets_1'}]}\E[\abs{\sets_1'}] \right]&&\\
   	&&& \leq \exp\left(-\left(\frac{\tau_j}{\beta_j\E[\abs{\sets_1'}]}-1\right)^2\cdot\frac{\E[\abs{\sets_1'}]}{3}\right)  &&\left(\text{Chernoff bound}\right) \\
   	&&& \leq \exp\left(-\frac{\tau_j}{3\beta_j}\right)&&\left(\text{explained aferwards}\right)\\
   	&&& = \exp\left(-\frac{\abs{\mu_x}\cdot p^j}{6q\beta_j}\right)  &&\left(\text{by \cref{equation:taujvalue}, }\tau_j = \thresholdx{j} \right)\\
   	&&& = \exp\left(-\frac{\ln\abs{\Sigma} n p^j}{q\beta_j\cdot\sigma}\right) \:,  &&\left(\text{by \cref{equation:mux}, }\abs{\mu_x} = \suppsizefrac\right)
   	\end{flalign*}
   	where the second inequality follows from $\left(\frac{\tau_j}{\beta_j\E[\abs{\sets_1'}]}-1\right)^2\cdot\frac{\E[\abs{\sets_1'}]}{3}$ being minimal when $\E[\abs{\sets_1'}]$ is at its upper bound of $\frac{\tau_j}{2\beta_j}$.
   	We next proceed to the first of the two cases.

	Take $j = 1$.
	In this case, by the construction of the daisy partition (\cref{thm:daisy-partition}), every set $S\in \ones_1$ has a petal $S\setminus K_1$ of cardinality exactly $1$.
	By the definition of $\ones_1$, each set $S \in \ones_1$ has at most $\capa - 1$ other sets $S' \in \ones_1$ whose petal intersects the petal of $S$, i.e., $(S \setminus K_1) \cap (S' \setminus K_1) \neq \varnothing$ (and thus $S \setminus K_1 = S' \setminus K_1$, since both petals have size 1).
	Therefore, at most $\beta_1 - 1 = \alpha - 1$ distinct sets of $\ones_1$ intersect each $S \in \ones_1$, which follows from \cref{equation:capparam}.
	Now,
\begin{flalign*}
   	&&&\exp\left(-\frac{\ln\abs{\Sigma} n p}{q\capa\cdot\sigma}\right) &&\\
   	&&&\qquad = \exp\left(-\frac{n\cdot p\cdot \rho}{12q}\right)  && \left(\mbox{by \cref{equation:capparam}, }\alpha = \capparamfrac\right)\\
   	&&&\qquad = \exp\left(-\gamma\cdot\frac{\rho}{12q} \cdot n^{1-1/2q^2}\right)&&\left(p=\sampprob\right)\\
   	&&&\qquad = \frac{1}{10q}\exp\left(-\gamma\cdot n^{1-1/q}\cdot\frac{\rho\cdot n^{\frac{1}{q}-\frac{1}{2q^2}}}{12q}  +\ln(10q)\right)&&\\
   	&&&\qquad \leq \frac{1}{10q}\exp\left(-\ln \abs{\Sigma} \cdot \kernelZeroSizeUB\right)&&\left(\text{large enough } n\right)\\
   	&&&\qquad \leq \frac{\abs{\Sigma}^{-\abs{K_1}}}{10q} \:,&&
\end{flalign*}
where the last inequality follows because $\abs{K_1} \leq \kernelZeroSizeUB$ by \cref{clm:kernels} (and $\ln \abs{\Sigma} \leq \log n$).

	Now, take $j > 1$.
	By \cref{clm:hbound}, $\beta_j = 2h({j-1}) = 2n^{(j-1)/q}$, which implies the first equality in the following.
\begin{flalign*}
&&\exp\left(-\frac{\ln\abs{\Sigma} \cdot n p^j}{q\beta_j\cdot\sigma}\right) & = \exp\left(-\frac{\ln\abs{\Sigma} \cdot n p^j}{2q\cdot\sigma}\cdot n^{-\frac{j-1}{q}}\right)  && \\
&&&=\exp\left(-2 \ln\abs{\Sigma} \cdot p^j\cdot n^{1-\frac{j-1}{q}}\right)&&\left(\text{by \cref{equation:sigmaVal}, } \sigma = \errorratefrac\right)\\
&&&= \exp\left(-2 \ln\abs{\Sigma}\cdot\gamma^j \cdot n^{1-\frac{j-1}{q}-\frac{j}{2q^2}}\right)&&\left(p=\sampprob\right)\\
&&&= \exp\left(-2 \ln\abs{\Sigma}\cdot\gamma^j \cdot n^{1-\frac{j}{q}+\frac{2q - j}{2q^2}}\right)&&\\
&&&\leq \frac{1}{10q}\exp\left(-\ln\abs{\Sigma}\cdot\gamma \cdot n^{1-\frac{j}{q}} \cdot 2 n^{\frac{1}{2q}}+\ln (10q) \right)&&\left(1 < j \leq q \right)\\
&&&\leq \frac{1}{10q}\exp\left(-\ln\abs{\Sigma}\cdot\kernelNonzeroSizeUB{j}\right)&&\left(\text{large enough } n\right)\\
&&&\leq \frac{\abs{\Sigma}^{-\abs{K_j}}}{10q} \:,
\end{flalign*}
where the last inequality follows because $\abs{K_j} \leq \kernelNonzeroSizeUB{j}$ by \cref{clm:kernels}.
\end{proof}

\subsection{Concluding the proof}
\label{sec:conclude}
We conclude the proof \cref{thm:main} by applying the two previous claims. Recall that we transformed a $\rho$-robust local algorithm $M$ for a function $f$, with query complexity $\ell$, into a $\rho$-robust local algorithm $M'$ with query complexity $q = O(\query)$ and suitable error rate. Then we transformed $M'$ into a sample-based algorithm $N$ with sample complexity $n^{1-1/O(q^2)} = n^{1-1/O(\querysq)}$, an upper bound guaranteed by the sampling step (\ref{step:sample}) in the construction of $N$. It remains to show correctness of the algorithm on every input in the domain of $f$.

We first consider errors that may arise in the sampling step. By the Chernoff bound, it chooses more than $2pn = 2n^{1-j/(2q^2)}$ points to query and thus outputs arbitrarily with probability at most $1/10$. Otherwise, it proceeds to the next steps.

In the next part of the proof we analyse $v_j(x)$ instead of analyzing $v$ (of \ref{step:votecount}) in algorithm $N$; this is sufficient, since by \cref{clm:v}, they are distributed identically over $Q$.

Suppose the input $x \in \Sigma^n$ is such that $f(x) = 0$.
Since $x$ is $\rho$-robust, it is in particular $\abs{K_1}/n$-robust (because $\abs{K_1} = o(n)$). Then \cref{clm:soundness} ensures that, for every $j \in [q]$ and kernel assignment $\kappa$ to $K_j$, the vote counter satisfies $v_j(x_\kappa)\geq \tau_j$ with probability at most $\abs{\Sigma}^{-\abs{K_j}}/(10q)$. A union bound over all $j \in [q]$ and $\abs{\Sigma}^{\abs{K_j}}$ assignments to the kernel $K_j$ ensures the probability this happens, causing $N$ to output $1$ in the threshold check step (\ref{step:threshold}), is at most $1/10$; otherwise, $N$ will enumerate over every assignment and then (correctly) output $0$ in \ref{step:outzero}.

Now suppose $x \in \Sigma^n$ is such that $f(x) = 1$. Then \cref{clm:completeness} ensures that, for some $j \in [q]$, the kernel assignment $\kappa = x_{|K_j}$ will make the vote count satisfy $v_j(x) \geq \tau_j$ with probability at least $9/10$, in which case $N$ (correctly) outputs 1 in the threshold check step (\ref{step:threshold}).

Therefore, $N$ proceeds beyond the sampling step with probability $9/10$ and outputs correctly (due to \cref{clm:soundness} and \cref{clm:completeness}) with probability at least $9/10 - 1/10 \geq 2/3$. This concludes the proof of \cref{thm:main}.

\begin{remark}
  \label{rem:tight-error-prob}
  Notice that the claims actually prove a stronger statement: the failure probability is not merely $1/3$, but \emph{exponentially small}. For each $j \in [q]$, the error probability is
  \begin{equation*}
  \exp\left(-\Omega\left(n^{1-\frac{j}{q}+\frac{2q-j}{2q^2}}\right)\right),
  \end{equation*}
  but it must withstand a union bound over $\exp\left(O\left(n^{1- j/q}\right)\right)$ events (corresponding to the assignments to the kernel $K_j$). The smallest slackness is in the case $j = q$, where the success probability is still $\exp\left(-\Omega\left(n^{1/2q}\right)\right)$; this implies that correctness holds for $\exp\left(c \cdot n^{1/2q}\right)$ many executions, if the constant $c$ is sufficiently small. Therefore, \emph{the same samples can be reused for exponentially many runs of possibly different algorithms.}
\end{remark}

\section{Applications}
\label{sec:applications}

In this section, we derive applications from \cref{thm:main} which range over three fields of study: property testing, coding theory, and probabilistic proof systems. We first give a brief overview of the applications in the following paragraphs, then proceed to the proofs in \cref{sec:test,sec:rldc,sec:map}.

\paragraph{Query-to-sample tradeoffs for adaptive testers\ifsicomp\else.\fi} The application to property testing is an immediate corollary of \cref{thm:main}: since an $\eps/2$-tester is a $(\eps/2,0)$-robust algorithm for the problem of testing with proximity parameter $\eps/2$, \cref{cor:test} shows that any $\eps/2$-tester making $q$ adaptive queries can be transformed into a sample-based $\eps$-tester with sample complexity $n^{1-1/O(q^2\log^2 q)}$. In addition, we also show an application to multi-testing (\cref{cor:multi-test}).

\paragraph{Relaxed LDC lower bound\ifsicomp\else.\fi} By a straightforward extension of our definition of robust local algorithms to allow for outputting a special failure symbol $\bot$, our framework captures \emph{relaxed} LDCs (see \cref{sec:rldc}). We remark that, although standard LDCs have \emph{two-sided} robustness, the treatment of relaxed LDCs is analogous to one-sided robust algorithms.

By applying \cref{thm:main} to a relaxed local decoder \emph{once for each bit to be decoded}, we obtain a \emph{global} decoder that decodes uncorrupted codewords with $n^{1-1/O(q^2 \log^2 q)}$ queries; by a simple information-theoretic argument, we obtain a rate lower bound of $n = k^{1 + 1/O(q^2 \log^2 q)}$ for relaxed LDCs (see \cref{cor:rldcsample,cor:rldcbound}).

\paragraph{Tightness of the separation between MAPs and testers\ifsicomp\else.\fi} \cref{thm:main} applies to the setting of \emph{Merlin-Arthur proofs of proximity} (MAPs) via a description of MAPs as coverings by partial testers (\cref{clm:mapequiv}). In \cref{sec:map}, we show that the existence of an adaptive MAP for a property $\Pi$ with query complexity $q$ and proof length $m$ implies the existence of a sample-based tester for $\Pi$ with sample complexity $m \cdot n^{1-1/O(q^2 \log^2q)}$ (\cref{thm:mappt}).

This implies that there exists no property admitting a MAP with query complexity $q=O(1)$ and logarithmic proof length (in fact, much longer proof length) that requires at least $n^{1-1/\omega(q^2 \log^2 q)}$ queries for testers, showing the (near) tightness of the separation from \cite{GR18}.

\paragraph{Optimality of \cref{thm:main}\ifsicomp\else.\fi} We conclude \cref{sec:map} with a direct corollary of the tightness of the aforementioned separation between MAPs and testers of \cite{GR18}, we obtain that the general transformation in \cref{thm:main} is optimal, up to a quadratic gap in the dependency on the sample complexity. This follows simply because a transformation with smaller sample complexity could have been used to improve \cref{thm:mappt}, yielding a tester with query complexity that contradicts the lower bound (see \cref{thm:optimality}).

\subsection{Query-to-sample tradeoffs for adaptive testers}
\label{sec:test}

Recall that a property tester $T$ for property $\Pi \subseteq \Sigma^n$ is an algorithm that receives explicit access to a proximity parameter $\eps > 0$, query access to $x \in \Sigma^n$ and \emph{approximately decides} membership in $\Pi$: it accepts if $x \in \Pi$ and rejects if $x$ is $\eps$-far from $\Pi$, with high probability.

By \cref{clm:testrla}, an $\eps$-tester with $\eps \in (0,1)$ is an $\eps$-robust local algorithm that computes the function $f\colon \Pi \cup \overline{B_{2\eps}(\Pi)} \rightarrow \bitset$ defined as follows.
\begin{equation*}
	f(x) = \left\{\begin{array}{ll}1, & \text{if } x \in \Pi\\0, & \text{if } x \in \overline{B_{2\eps}(\Pi)}.\end{array}\right.
\end{equation*}

Note, moreover, that a local algorithm that solves $f$ is by definition a $2\eps$-tester, accepting elements of $\Pi$ and rejecting points that are $2\eps$-far from it with high probability. A direct application of \cref{thm:main} thus yields the following corollary, which improves upon the main result of \cite{FLV15}, by extending it to the two-sided adaptive setting.

\begin{corollary}
  \label{cor:test}
  For every fixed $\eps > 0, q \in \N$, any $\eps$-testable property of strings in $\Sigma^n$ with $q$ queries admits a sample-based $2\eps$-tester with sample complexity $n^{1-1/O(q^2 \log^2 q)}$.
\end{corollary}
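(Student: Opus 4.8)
The plan is to observe that \cref{cor:test} follows directly from \cref{thm:main} once we cast a tester as a robust local algorithm, so the work is entirely bookkeeping of parameters.

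First, I would apply \cref{clm:testrla}: an $\eps/2$-tester $T$ for $\Pi \subseteq \Sigma^n$ making $q$ queries is an $(\eps/2, 0)$-robust local algorithm, with the same query complexity and error rate, for the partial function $f:\Sigma^n\to\bitset$ given by $f(x)=1$ for $x\in\Pi$ and $f(x)=0$ for $x\in\overline{B_\eps(\Pi)}$. Since $\eps=\Omega(1)$ is a fixed constant, $\max\set{\rho_0,\rho_1}=\eps/2=\Omega(1)$, so $T$ satisfies the hypothesis of \cref{thm:main} with $\rho_0=\eps/2$ and $\rho_1=0$; note that this is precisely the \emph{one-sided} robustness regime the theorem is designed to handle, so the asymmetry between accepting $\Pi$ and rejecting $\eps$-far inputs causes no difficulty. (As usual, one first applies \cref{clm:err-red} to reduce the error rate to $1/3$, which preserves robustness.)

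Second, I would invoke \cref{thm:main} on $T$, obtaining a sample-based algorithm $N$ that computes $f$ with error $1/3$ and sample complexity $\gamma\cdot n^{1-1/(2q'^2)}$, where $q'=O(q\log q)$ and $\gamma=O(|\Sigma|^{q'}\ln|\Sigma|)$. The key observation is that an algorithm that computes $f$ is, by definition, a sample-based $\eps$-tester for $\Pi$: with probability at least $2/3$ it outputs $1$ on every $x\in\Pi$ and $0$ on every $x$ that is $\eps$-far from $\Pi$. Hence $N$ is the desired tester. It then remains to simplify the sample complexity: since $q'=O(q\log q)$ we have $q'^2=O(q^2\log^2 q)$, so $n^{1-1/(2q'^2)}=n^{1-1/O(q^2\log^2 q)}$, and under the standing assumption that $\Sigma$ is fixed and $q=o(\sqrt{\log n/\log\log n})$ (equivalently $|\Sigma|^{q'}=n^{o(1/q'^2)}$, as in the discussion following \cref{thm:main}) the prefactor $\gamma$ is absorbed into the $O(\cdot)$ in the exponent, yielding sample complexity $n^{1-1/O(q^2\log^2 q)}$.

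There is essentially no substantive obstacle: the statement is a corollary, and the only points requiring a moment's care are (i) confirming that $\rho_1=0$ is admissible for \cref{thm:main} (it is, by design), and (ii) checking the parameter regime in which the multiplicative factor $\gamma=O(|\Sigma|^{q'}\ln|\Sigma|)$ does not swamp the exponent — which holds exactly in the already-assumed regime of constant alphabet and sub-$\sqrt{\log n/\log\log n}$ query complexity. Both are routine verifications rather than genuine difficulties.
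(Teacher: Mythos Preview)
Your proposal is correct and matches the paper's own approach: cast the $\eps/2$-tester as an $(\eps/2,0)$-robust local algorithm via \cref{clm:testrla}, apply \cref{thm:main}, and observe that a sample-based algorithm for the resulting partial function $f$ is by definition a sample-based $\eps$-tester. The paper treats this as a direct application without spelling out the parameter bookkeeping you included, but your additional details are accurate and consistent with the discussion surrounding \cref{thm:main}.
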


This also immediately extends an application to multitesters in \cite{FLV15}. By standard error reduction, for any $k \in \N$, an increase of the sample complexity by a factor of $O(\log k)$ ensures each member of a collection of $k$ sample-based testers errs with probability $1/(3k)$. A union bound allows us to \emph{reuse the same samples for all testers}, so that all will output correctly with probability $2/3$. Taking $k = \exp\left(n^{1/\omega(q^2 \log^2 q)}\right)$, the sample complexity becomes $n^{1-1/O(q^2 \log^2 q)} \cdot n^{1/\omega(q^2 \log^2 q)} = o(n)$, which yields the following corollary.

\begin{corollary}
 \label{cor:multi-test}
  If a property $\Pi \subseteq \Sigma^n$ is the union of $k = \exp\left(n^{1/\omega(q^2 \log^2q)}\right)$ properties $\Pi_1, \ldots, \Pi_k$, each $\eps$-testable with $q$ queries, then $\Pi$ is $2\eps$-testable via a sample-based tester with sublinear sample complexity.
\end{corollary}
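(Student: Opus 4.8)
The plan is to derive this corollary directly from \cref{cor:test} together with a standard error-reduction-and-reuse argument, essentially spelling out the paragraph that precedes the statement. First I would observe that if $\Pi = \bigcup_{i=1}^k \Pi_i$ where each $\Pi_i$ is $\eps/2$-testable with $q$ queries, then by \cref{cor:test} each $\Pi_i$ admits a sample-based $\eps$-tester $T_i$ with sample complexity $s_0 = n^{1-1/O(q^2\log^2 q)}$; moreover (recalling that sample-based algorithms probe each coordinate independently with a fixed probability), the sampling probability $p_0 = s_0/n$ is the \emph{same} for all $T_i$, so all of them can be run on a single shared batch of samples drawn by binomial sampling with probability $p_0$.

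Next I would apply standard error reduction to boost the success probability of each $T_i$ from $2/3$ to $1-1/(3k)$: repeating $\Theta(\log k)$ independent runs and taking a majority vote yields a sample-based $\eps$-tester $T_i'$ with sample complexity $s = O(s_0 \log k)$, hence sampling probability $p = O(p_0 \log k)$. Again these runs can all share the same sampled coordinates, so a single binomial sample of $x$ with probability $p$ suffices to evaluate \emph{every} $T_i'$ simultaneously. Now define the sample-based tester $T$ for $\Pi$ that draws one such sample, evaluates $T_1', \ldots, T_k'$ on it, and accepts iff at least one $T_i'$ accepts. By a union bound, with probability at least $2/3$ every $T_i'$ outputs correctly; conditioned on this event, if $x \in \Pi$ then $x \in \Pi_i$ for some $i$, so $T_i'$ accepts and hence $T$ accepts, while if $x$ is $\eps$-far from $\Pi$ then it is $\eps$-far from every $\Pi_i$, so every $T_i'$ rejects and hence $T$ rejects. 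Thus $T$ is a sample-based $\eps$-tester for $\Pi$.

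Finally I would check the sample complexity is sublinear for the stated range of $k$. We have $s = O(s_0 \log k) = n^{1-1/O(q^2\log^2 q)} \cdot O(\log k)$; plugging in $k = \exp\!\big(n^{1/\omega(q^2\log^2 q)}\big)$ gives $\log k = n^{1/\omega(q^2\log^2 q)}$, so $s = n^{1-1/O(q^2\log^2 q)} \cdot n^{1/\omega(q^2\log^2 q)} = o(n)$, since the second exponent is asymptotically smaller than the reciprocal of the first. This completes the argument.

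The proof is essentially routine; the only point requiring a little care is the claim that the \emph{same} random samples can be reused across all $k$ testers (and across all the error-reduction repetitions of each). This is exactly the obliviousness property emphasised after \cref{infthm:main}: a sample-based algorithm's queries depend only on its internal randomness (the sampling), not on which property it is testing, so one batch of binomially-sampled coordinates is a valid input to all of them at once, and the only cost of combining $k$ of them is the union bound over correctness — absorbed by the $O(\log k)$ error-reduction overhead. Hence the ``main obstacle'' is merely to state this sharing cleanly rather than any substantive difficulty.
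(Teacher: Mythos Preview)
Your proposal is correct and follows essentially the same approach as the paper: apply \cref{cor:test} to each $\Pi_i$, boost each resulting sample-based tester's error to $1/(3k)$ via $O(\log k)$ repetitions, reuse the same samples across all $k$ testers (union bound), and accept iff some sub-tester accepts, with the sample-complexity calculation $n^{1-1/O(q^2\log^2 q)} \cdot n^{1/\omega(q^2\log^2 q)} = o(n)$ matching verbatim. The paper also notes that a generalisation to partial testers is proved in \cref{sec:map}, but for the corollary as stated your argument is exactly what is intended.
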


A tester for the union simply runs all (sub-)testers, accepting if and only if at least one of them accepts. A proof for a generalisation of this corollary, which holds for \emph{partial testers}, is given in the \cref{sec:map}.

\subsection{Stronger relaxed LDC lower bounds}
\label{sec:rldc}

Relaxed LDCs are codes that relax the notion of LDCs by allowing the local decoder to abort on a small fraction of the indices, yet crucially still avoid errors. This seemingly modest relaxation turns out to allow for dramatically better parameters (an exponential improvement on the rate of the best known $O(1)$-query LDCs). However, since these algorithms are much stronger, obtaining lower bounds on relaxed LDCs is significantly harder than on standard LDCs. Indeed, the first lower bound on relaxed LDCs \cite{GL21} was only shown more than a decade after the notion was introduced; this bound shows that to obtain query complexity $q$, a relaxed LDC $C\colon\bitset^k \to \bitset^n$ must have blocklength
\begin{equation*}
    n \geq k^{1 + \frac{1}{O\left(2^{2q} \cdot \log^2 q\right)}} \:,
\end{equation*}

In this section, we use \cref{thm:main} to obtain an improved lower bound with an exponentially better dependency on the query complexity. We begin by recalling the definition of relaxed LDCs.

\begin{definition}[\cref{def:rldc}, restated]
  A code $C\colon\bitset^k \to \bitset^n$ whose distance is $\delta_C$ is a $q$-local relaxed LDC with success rate $\rho$, decoding radius $\delta \in (0,\delta_C/2)$ and error rate $\sigma \in (0, 1/3]$ if there exists a randomised algorithm $D$, known as a \emph{relaxed decoder}, that, on input $i\in [k]$, makes at most $q$ queries to an oracle $w$ and satisfies the following conditions.

  \begin{enumerate}
  \item \textsf{Completeness}: For any $i\in [k]$ and $w = C(x)$, where $x\in\bitset^k$,
  \begin{equation*}
    \Pr[D^{w}(i) = x_i] \ge 1 - \sigma\;.
  \end{equation*}

  \item \textsf{Relaxed Decoding}: For any $i\in [k]$ and $w \in\bitset^{n}$ that is $\delta$-close to a (unique) codeword $C(x)$,
    \begin{equation*}
      \Pr[D^{w}(i) \in \{x_i,\bot\}] \ge 1 - \sigma\;.
    \end{equation*}

    \item \textsf{Success Rate}: There exists a constant $\rho>0$ such that, for any $w\in\bitset^n$ that is $\delta$-close to a codeword $C(x)$, there exists a set $I_w\subseteq [k]$ of size at least $\rho k$ such that for every $i\in I_w$,
    \begin{equation*}
        \Pr[D^w(i)=x_i]\geq 2/3 \;.
    \end{equation*}
  \end{enumerate}
\end{definition}

\begin{remark}
  The first two conditions imply the latter, as shown by \cite{BGHSV06}. Therefore, it is not necessary to show the success rate condition when verifying that an algorithm $D$ is a relaxed local decoder.
\end{remark}

Note that, whenever $D^w$ outputs $\bot$, it detected that the input is \emph{not valid}, since it is inconsistent with \emph{any} codeword $C(x)$. We slightly generalise local algorithms (\cref{def:localalg}) to capture this behaviour, by allowing them to output $\bot$ as well as the correct function evaluation $f(z, x)$ (except for a prescribed set of valid inputs). Formally,

\begin{definition}[Relaxed local algorithm]
  Let $\Sigma$ be a finite alphabet, $Z$ a finite set and $\set{\mathcal{P}_z : z \in Z}$ a family of sets $\mathcal{P}_z \subseteq \Sigma^n$ indexed by $Z$. With $\mathcal{P} \coloneqq \set{(z, x) : z \in Z, x \in \mathcal{P}_z}$, let $f\colon \mathcal{P} \to \bitset$ be a partial function.
  
  A \emph{relaxed $q$-local algorithm $M$ for computing $f$} with \emph{valid input set} $V \subseteq \Sigma^n$ and \emph{error rate} $\sigma$ receives explicit access to $z \in Z$, query access to $x \in \mathcal{P}_z$, makes at most $q$ queries to $x$ and satisfies
  \begin{equation*}
		\Pr\big[M^x(z) \in \set{f(z,x), \bot}\big] \geq 1 - \sigma.
  \end{equation*}
  Moreover, if $x \in V$, then $M$ satisfies
  \begin{equation*}
    \Pr[M^x(z) = f(z,x)] \geq 1 - \sigma.
  \end{equation*}
\end{definition}

We shall also need to generalise the notion of robustness (\cref{def:robust}) accordingly.
\begin{definition}[Robustness]
  Let $\rho > 0$. A local algorithm $M$ for computing $f \colon \mathcal{P} \to \bitset$ is \emph{$\rho$-robust} at the point $(z, x) \in \mathcal{P}$ if $\Pr[M^w(z) \in \set{f(z,x), \bot}] \geq 1 -\sigma$ for all $w \in B_\rho(x)$. We say that $M$ is \emph{$(\rho_0, \rho_1)$-robust} if, for all $z \in Z$ and $b \in \bitset$, $M$ is $\rho_b$-robust at every $x$ such that $f(z,x) = b$.
\end{definition}
We remark that robustness for algorithms that allow aborting allows the correct value to change to $\bot$ (but, crucially, not to the wrong value) even if only one bit is changed. This makes the argument more involved than an argument for LDCs, and indeed, our theorem for \emph{relaxed} LDCs relies on the full machinery of \cref{thm:main}.

Note that an algorithm that ignores its input and always outputs $\bot$ fits both definitions above, but has no valid inputs and clearly does not display any interesting behaviour. We also remark that the set of valid inputs captures completeness (but \emph{not} the success rate) in the case of relaxed LDCs.

With these extensions, a relaxed local decoder $D$ with decoding radius $\delta$ fits the definition of a (relaxed) local algorithm that receives $i \in [k]$ as explicit input, where the code $C$ comprises the valid inputs and every $x \in C$ is $\delta$-robust for $D$.

While a relaxed local algorithm is very similar in flavour to a standard local algorithm, it may not be entirely clear whether a transformation analogous to \cref{thm:main} holds in this case as well. We next show that one indeed does: with small modifications to the algorithm constructed in \cref{sec:construction}, we leverage the same analysis of \cref{sec:analysis} to prove the following variant of \cref{thm:main}.

\begin{theorem}
\label{thm:rldcmain}
  Suppose there exists a $(\rho_0, \rho_1)$-robust relaxed local algorithm $M$ for computing the function $f\colon \mathcal{P} \rightarrow \bitset$ (where $Z \times \Sigma^n$) with query complexity $\ell = O(1)$ and $\rho_0, \rho_1 = \Omega(1)$. Let $V \subseteq \Sigma^n$ be the valid inputs of $M$.
  Then, there exists a \emph{sample-based} relaxed local algorithm $N$ for $f$ with sample complexity $n^{1-1/O(\ell^2 \log^2 \ell)}$ with the same set $V$ of valid inputs.
\end{theorem}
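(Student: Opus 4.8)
The plan is to reuse the construction of \cref{sec:construction} and the analysis of \cref{sec:analysis} almost verbatim, with two modifications that account for the abort symbol $\bot$. First I would apply the relaxed versions of \cref{clm:err-red,clm:rand-red} to $M$, obtaining a relaxed local algorithm $M'$ for $f$ with the same valid input set $V$, query complexity $q = O(\ell\log\ell)$, error rate $\sigma = 1/(8q)$, and distribution uniform over $O(n\ln|\Sigma|/\sigma)$ decision trees. These generic transformations go through with no essential change: in error reduction one outputs a value $b\in\bitset$ only when it is a strict majority among the independent runs (and $\bot$ otherwise), so that on a valid input the majority is $f(x)$ whp while on \emph{every} input in the domain only a minority of runs produce the wrong value $1-f(x)$ whp; randomness reduction is a union bound over inputs in which the ``bad event'' for $x$ is ``output $\notin\set{f(x),\bot}$'' (which coincides with ``output $\neq f(x)$'' when $x$ is valid). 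Robustness, in the relaxed sense, is preserved by both.

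Second I would modify $N$ so that it tracks \emph{both} output values. Rather than extracting only the description tuples with output $1$, extract for each $b\in\bitset$ the multi-collection $\tuples_b$ of description tuples of $M'$ whose leaf value is $b$; let $\sets$ be the multi-collection of $q$-sets occurring in these tuples, and apply \cref{thm:daisy-partition} once to obtain a single partition $\set{\daisy_j : 0\le j\le q}$ (the kernel bounds of \cref{clm:kernels} are unchanged, since $|\sets|\le|\Sigma|^q|\Tilde{\mu}|$ still holds). The sampling of \cref{step:sample} and the thresholds $\tau_j$ are left unchanged. In \cref{step:enumeration}, for \emph{each} $b\in\bitset$, each $j\in[q]$ and each assignment $\kappa$ to $K_j$, the algorithm runs the vote count of \cref{step:votecount} restricted to tuples of $\tuples_b$ (with the same capping when $j=1$), and records the event $A_b$ that some such count crosses its threshold $\tau_j$. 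Finally $N$ outputs $1$ if $A_1\wedge\neg A_0$, outputs $0$ if $A_0\wedge\neg A_1$, and outputs $\bot$ otherwise; as before it disregards $\daisy_0$. The intuition is that on a valid input there is strong evidence for $f(x)$ and negligible evidence for $1-f(x)$, whereas on a non-valid input there is still only negligible evidence for $1-f(x)$ but possibly no strong evidence for $f(x)$ (if $M'$ mostly aborts), in which case emitting $\bot$ is exactly what is allowed.

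For the analysis, the only genuinely new ingredient is a relaxed form of the volume lemma: for a \emph{valid} $x$ and a $\rho$-robust $y$ with $f(y)\neq f(x)$, every $\sets\subseteq\supp(\mu_x)$ with $|\cup\sets|<\rho n$ satisfies $\mu_x(\sets)\le 2\sigma$. This is proved exactly as \cref{lem:volume}: the word $w$ agreeing with $x$ on $\cup\sets$ and with $y$ elsewhere lies in $B_\rho(y)$, so the relaxed robustness of $y$ bounds the probability that $M'$ returns $f(x)$ on input $w$ by $\sigma$ (as $f(x)\notin\set{f(y),\bot}$), while that same probability is at least $\mu_x(\sets)-\sigma$ because $x$ is valid (so $M'$ errs on $x$ with probability at most $\sigma$). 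Granting this, \cref{clm:completeness} — with the roles of $0$ and $1$ interchanged when $b=0$ — applies to the sub-collection $\ones^x_b$ of sets appearing in output-$b$ tuples matching $x$, for every \emph{valid} $x$ with $f(x)=b$: its use of the volume lemma only requires a robust point of value $1-b$, which exists since $\rho_{1-b}=\Omega(1)$, and the starting inequality $\mu_x(\ones^x_b)\ge 1-\sigma$ holds because $x$ is valid. Hence $A_b$ holds with exponentially small failure probability. Symmetrically, for \emph{any} $x$ in the domain with $f(x)=b$, the point $x$ is $\rho_b$-robust in the relaxed sense, hence $|K_1|/n$-robust, so the probability that $M'$ returns $1-b$ on input $x_\kappa$ is at most $\sigma$ for every $j$ and assignment $\kappa$ to $K_j$; thus \cref{clm:soundness} — again with $0$ and $1$ interchanged when $b=1$ — applies to the output-$(1-b)$ tuples and, after a union bound over $j$ and over the $|\Sigma|^{|K_j|}$ assignments to $K_j$, shows that $A_{1-b}$ fails whp. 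Combining as in \cref{sec:conclude}: a valid $x$ satisfies $A_{f(x)}$ and $\neg A_{1-f(x)}$, so $N$ outputs $f(x)$; a non-valid $x$ satisfies $\neg A_{1-f(x)}$, so $N$ outputs a value in $\set{f(x),\bot}$; the sample complexity is still $n^{1-1/O(q^2)} = n^{1-1/O(\ell^2\log^2\ell)}$, guaranteed by the sampling step; and $V$ is still the valid input set of $N$.

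The main obstacle is exactly this asymmetry between valid and non-valid inputs: the ``completeness'' half — that the count for the correct value crosses a threshold — is only available when $M'$ cannot abort, i.e.\ on valid inputs, whereas the ``soundness'' half — that the count for the incorrect value crosses no threshold — must, and does, hold for all inputs in the domain, and it is precisely the relaxed robustness condition (which forbids drifting to the wrong value while permitting drift to $\bot$) that supplies it. Arranging $N$'s three-way output rule so that these two halves interlock correctly is the crux; once that is in place, the combinatorial core — the daisy partition, the Hajnal--Szemer\'edi sampling step, and the threshold calculations — is identical to \cref{sec:analysis} and needs no change.
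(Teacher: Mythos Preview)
Your proposal is correct and follows essentially the same approach as the paper: run the construction of \cref{sec:construction} twice, once targeting each output value $b\in\bitset$, appeal to \cref{clm:completeness} on valid inputs and to \cref{clm:soundness} on all inputs (via relaxed robustness), and combine with a three-way output rule. The only cosmetic differences are that the paper builds two separate daisy partitions (one per $\tuples_b$) rather than your single partition of the union, and that you spell out the relaxed volume lemma and the tie-breaking rule more explicitly than the paper's terse sketch; neither affects the argument.
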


\begin{proof}
  Throughout the proof, we assume the explicit input to be fixed and omit it from the notation.
  First, note that error reduction (\cref{clm:err-red}) and randomness reduction (\cref{clm:rand-red}) apply in the relaxed setting: the analysis is identical on valid inputs, and holds likewise for the remainder of the domain of $f$ (with correctness of $M$ relaxed to be $M^x \in \set{f(x), \bot}$). Thus \cref{lem:prep-alg} enables the transformation of $M$ into another robust algorithm $M'$ with small error rate that uniformly samples a decision tree from a multi-collection of small size.

  Recall that the construction of the sample-based algorithm in \cref{sec:construction} uses a collection of triplets obtained from the behaviour of $M'$ when it outputs 1. A corresponding collection can be obtained for the case where $M'$ outputs 0. Denote by $\tuples_b$ the collection that corresponds to output $b \in \bitset$, and let $N_b$ be the sample-based algorithm that
  \begin{itemize}[noitemsep]
    \item uses the triplets $\tuples_b$ to construct its daisy partition in the preprocessing step;
    \item outputs $b$ if the counter crosses the threshold in \ref{step:threshold}; and
    \item outputs $\bot$ in \ref{step:outzero} if the threshold is never reached;
  \end{itemize}
  but is otherwise the same as the construction of \cref{sec:construction}.

  The analysis of \cref{sec:analysis} applies to $N_b^x$: if $x \in V$, the analysis of \cref{clm:completeness} is identical; while if $x$ is robust and $f(x) = \neg b$, \cref{clm:soundness} requires a lower bound on the probability that $M'$ outputs $b$ when its input is $x$ (and enables an application of the volume lemma), which holds by the definition of error rate of a relaxed local algorithm. Therefore, the probability of each the following events is bounded by $1/10$:
  \begin{enumerate}[label=(\roman*),noitemsep]
  \item $N_b^x$ outputs arbitrarily in the sampling step;
  \item $N_b^x$ outputs $\bot$ when $f(x) = b$; and
  \item $N_b^x$ outputs $b$ when $x$ is robust and $f(x) = \neg b$.
  \end{enumerate}

  Finally, the relaxed sample-based algorithm $N$ simply executes the sampling step of $N_0$ then the enumeration steps of $N_0$ and $N_1$ on these samples, outputs $b$ if exactly one of $N_b$ outputs $b$, and outputs $\bot$ otherwise. Then, $N^x = f(x)$ if $x \in V$ and $N^x = \bot$ if $x \notin V$, with probability $7/10 \geq 2/3$.
\end{proof}

By casting a relaxed decoder as a robust relaxed local algorithm and applying \cref{thm:rldcmain}, we obtain the following corollary.

\begin{corollary}
  \label{cor:rldcsample}
  Any binary code $C\colon \bitset^k \rightarrow \bitset^n$ that admits a relaxed local decoder $D$ with decoding radius $\delta$ and query complexity $q = O(1)$ also admits a sample-based relaxed local decoder $D'$ with decoding radius $\delta/2$ and sample complexity $n^{1-1/O(q^2 \log^2 q)}$.
\end{corollary}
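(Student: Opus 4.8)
The plan is to obtain the statement as an essentially black-box consequence of \cref{thm:rldcmain}: exhibit the relaxed decoder $D$ as a robust relaxed local algorithm, invoke the theorem, and read the resulting sample-based relaxed local algorithm back as a sample-based relaxed local decoder.

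First I would fix the binary code $C\colon\bitset^k\to\bitset^n$ together with its relaxed local decoder $D$ of decoding radius $\delta/2$ and query complexity $q=O(1)$, and recall the generalisations of a \emph{relaxed local algorithm} and of \emph{robustness} (allowing the abort symbol $\bot$) introduced just before \cref{thm:rldcmain}. Treating the index $i\in[k]$ as the explicit input, I would define the partial function $f(i,w)=x_i$ on the decoding-radius neighbourhood of $C$, i.e.\ on every $w$ that is $(\delta/2)$-close to a codeword $C(x)$; this is well defined since $\delta/2<\delta_C/2$ forces such a codeword to be unique. The valid-input set is the code itself, $V=\set{C(x):x\in\bitset^k}$, on which completeness guarantees that $D$ outputs $f(i,\cdot)$ without aborting (any residual probability of $D$ aborting on a codeword is charged to the error rate, which is harmless since error reduction, \cref{clm:err-red}, still applies in the relaxed setting). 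The crucial point — exactly as in the discussion preceding \cref{thm:rldcmain}, and the analogue of \cref{clm:LDC} for standard LDCs — is that the relaxed-decoding condition \emph{already} supplies robustness: every codeword $C(x)$ is a robust point, because for every $w$ within the decoding radius of $C(x)$ the decoder returns $x_i$ or $\bot$ with high probability. Hence $D$ is a $(\rho_0,\rho_1)$-robust relaxed local algorithm for $f$ with $\rho_0,\rho_1=\Omega(1)$ (a constant fraction of the decoding radius, keeping a margin for inputs near the boundary of the ball at the cost of only a constant factor in the radius), and with valid-input set $V=C$.

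Next I would apply \cref{thm:rldcmain} with $\ell=q=O(1)$ and $\rho_0,\rho_1=\Omega(1)$. It produces a sample-based relaxed local algorithm $N$ computing the same $f$, with the same valid-input set $V=C$ and sample complexity $n^{1-1/O(\ell^2\log^2\ell)}=n^{1-1/O(q^2\log^2 q)}$, and with the constant success probability required by \cref{def:rldc}. Finally I would translate back: since $N$ computes $f(i,\cdot)=x_i$ on the decoding-radius neighbourhood of $C$ and never aborts on codewords, it is by definition a sample-based relaxed local decoder $D'$ with the stated decoding radius and sample complexity. The \textsf{Success Rate} clause of \cref{def:rldc} need not be verified separately, as it is implied by completeness and relaxed decoding per the remark following that definition.

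I expect the only real work to lie in the casting step, the application of \cref{thm:rldcmain} being used as a black box. Concretely, one must line up three things with the hypotheses of \cref{thm:rldcmain}: that the robustness obtained is genuinely two-sided ($\rho_0,\rho_1=\Omega(1)$, matching the theorem) yet ``morally one-sided'' because of the $\bot$ output, which is exactly what \cref{thm:rldcmain} is built to accommodate; that the valid-input set is preserved under the transformation; and that the constant-factor slack in the decoding radius together with the negligible probability that $D$ aborts on a codeword does not affect the asymptotic sample complexity. A secondary point worth stating cleanly is why $q=O(1)$ is imposed here, unlike in the testing corollary: the $|\Sigma|^{q}$-type prefactor $\gamma$ appearing in \cref{thm:main}/\cref{thm:rldcmain} is a constant only when $q$ is, whereas a tester can afford a mildly superconstant query budget.
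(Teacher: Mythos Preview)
Your proposal is correct and follows exactly the approach the paper takes: the paper's entire ``proof'' is the single sentence ``By casting a relaxed decoder as a robust relaxed local algorithm and applying \cref{thm:rldcmain}, we obtain the following corollary,'' and you have simply spelled out that casting in detail. Your handling of the valid-input set $V=C$, of the two-sided robustness coming from the relaxed-decoding guarantee at codewords, and of the constant-factor margin in the radius all match the paper's framework (cf.\ the analogue \cref{clm:LDC} for LDCs); the only inessential aside is your speculation about why $q=O(1)$ is imposed, which is not quite the reason---for the binary alphabet the prefactor $\gamma=O(2^q)$ would still be absorbed for mildly superconstant $q$, so the restriction here is stylistic rather than forced.
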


We are now ready to state the following corollary, which improves on the previous best rate lower bound for relaxed LDCs \cite{GL21} by an application of the theorem above to the setting of relaxed local decoding. This follows from the construction of a global decoder (which is able to decode the entire message) that is only guaranteed to succeed with high probability when its input is \emph{a perfectly valid codeword}.

\begin{corollary}
  \label{cor:rldcbound}
  Any code $C\colon \bitset^k \rightarrow \bitset^n$ that is relaxed locally decodable with $q = O(1)$ queries satisfies
  \begin{equation*}
		n = k^{1 + \frac{1}{O(q^2 \log^2 q)}}.
	\end{equation*}
\end{corollary}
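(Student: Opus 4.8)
\textbf{Proof proposal for \cref{cor:rldcbound}.}
The plan is to promote the sample-based \emph{relaxed local decoder} furnished by \cref{cor:rldcsample} to a \emph{global} decoder that recovers the entire message from an uncorrupted codeword, and then to run a standard information-theoretic counting argument against it. Concretely, \cref{cor:rldcsample} provides, for a relaxed LDC $C \colon \bitset^k \to \bitset^n$ with $q = O(1)$ queries, a sample-based relaxed local decoder $D'$ with decoding radius $\delta$ and sample complexity $s \coloneqq n^{1-1/O(q^2 \log^2 q)}$. We will only use its completeness guarantee: on a perfectly valid codeword $C(x)$, $D'^{C(x)}(i) = x_i$ with high probability for each $i \in [k]$. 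Since $D'$ samples its input \emph{obliviously to $i$}, we may draw a single sample set $Q$ once (binomial sampling with the probability fixed by the construction, capped so that $|Q| \le 2s$ always) and feed $C(x)|_Q$ to all $k$ bit-decoders $D'(1), \ldots, D'(k)$; the resulting algorithm $G$ outputs a string $\hat{x} \in \bitset^k$ and reads at most $2s$ coordinates of its input.

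First I would argue that reusing the samples across all $k$ invocations costs essentially nothing. By \cref{rem:tight-error-prob}, whose analysis carries over to the relaxed setting via \cref{thm:rldcmain}, the failure probability of $D'$ on a valid input is $\exp(-\Omega(n^{1/2q^2}))$, and the same samples remain good for $\exp(O(n^{1/2q^2}))$ runs; since a code satisfies $k \le n \le \exp(O(n^{1/2q^2}))$ for $q = O(1)$ and $n$ large, a union bound over the $k$ bit-decoders (together with the $\le 1/10$ failure probability of the sampling cap) shows that, for every $x \in \bitset^k$, we have $G^{C(x)} = x$ with probability at least $4/5$, while $|Q| \le 2s$ always. (Alternatively, one reduces the error of $D'$ to $1/(3k)$ at the cost of an $O(\log k)$ factor in the sample complexity, which is absorbed into the exponent, avoiding the stronger form of the remark.)

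Next comes the counting argument. Averaging the inequality $\Pr_{\mathrm{coins}}[\,G^{C(x)} = x\,] \ge 4/5$ over a uniformly random message $x \in \bitset^k$, there is a fixing of $G$'s internal randomness for which the sampled set $Q$ is a \emph{fixed} subset of $[n]$ with $|Q| \le 2s$ and $G^{C(x)} = x$ for at least half of all messages $x$. With the randomness fixed, $G^{C(x)}$ is a deterministic function of $C(x)|_Q$, so the map $x \mapsto C(x)|_Q$ is injective on this set of $\ge 2^{k-1}$ messages; hence $2^{|Q|} \ge 2^{k-1}$, i.e. $2s \ge |Q| \ge k - 1$, which gives $n^{1-1/O(q^2 \log^2 q)} = \Omega(k)$. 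Taking logarithms, $(1 - 1/O(q^2\log^2 q))\log n \ge \log k - O(1)$; rearranging (absorbing the additive $O(1)$ for all sufficiently large $k$, the claim being vacuous for bounded $k$, at the price of a constant factor in the hidden $O(\cdot)$) yields $\log n \ge (1 + 1/O(q^2\log^2 q))\log k$, that is, $n \ge k^{1 + 1/O(q^2\log^2 q)}$.

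I do not expect a genuine obstacle: the heavy lifting is done by \cref{cor:rldcsample} and \cref{thm:rldcmain}, and what remains is bookkeeping. The two points requiring care are (i) ensuring that sharing one sample set among all $k$ bit-decoders does not inflate the sample complexity — handled by the exponentially small error probability, or, conservatively, by an $O(\log k)$ overhead swallowed by the exponent — and (ii) converting the additive-in-the-logarithm bound $2s = \Omega(k)$ into the multiplicative exponent bound $n \ge k^{1+1/O(q^2\log^2 q)}$ while correctly absorbing lower-order terms, which is where the constant hidden in the $O(\cdot)$ may degrade.
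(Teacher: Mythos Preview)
Your proposal is correct and follows essentially the same approach as the paper: the paper also builds a global decoder by running the sampling stage of $D'$ once and reusing the samples for all $k$ indices, then applies an information-theoretic bound to conclude. The paper takes exactly your ``alternative'' route for step~(i), reducing the error to $1/(3k)$ via $O(\log k)$ repetitions (absorbed into the exponent), and its information-theoretic step is the same counting argument you spell out, stated more tersely.
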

\begin{proof}
  Let $D'$ be the sample-based relaxed LDC with sample complexity $q'$ obtained by \cref{cor:rldcsample} from a relaxed LDC with query complexity $q$ for the code $C$. Reduce the error rate of $D'$ to $1/3k$ by repeating the algorithm $O(\log k)$ times and taking the majority output, thus increasing the sample complexity to $O(q' \cdot \log k) = n^{1-1/t}$ with $t = O(q^2 \log^2 q)$.

  Now, consider the \emph{global decoder} $G$ defined as follows: on input $w$, execute the sampling stage once and the enumeration stages of $D^w(1), \ldots, D^w(k)$ on the same samples. A union bound ensures that, with probability at least $2/3$, the outputs satisfy $D^w(i) = x_i$ for all $i$ if $w = C(x)$.

  The global decoder $G$ obtains $k$ bits of information from $n^{1-1/t}$ bits with probability above $1/2$. Information-theoretically, we must have
  \begin{equation*}
		k \leq \frac{n^{1-1/t}}{2} = \frac{n^{\frac{t-1}{t}}}{2},
	\end{equation*}
  so that $n \geq (2k)^{1 + \frac{1}{t-1}} \geq 2 k^{1 + \frac{1}{t-1}}$. Since $t = O(q^2 \log^2 q)$, it follows that $n = k^{1 + 1/O(q^2 \log^2 q)}$.
\end{proof}

\subsection{A maximal separation between testers and proofs of proximity}
\label{sec:map}

Recall that a Merlin-Arthur proof of proximity (MAP, for short) for property $\Pi$ is a local algorithm that receives explicit access to a proximity parameter $\eps > 0$ and a purported proof string $\pi$, as well as query access to a string $x \in \Sigma^n$. It uses the information encoded in $\pi$ to decide which coordinates of $x$ to query, accepting if $x \in \Pi$ and $\pi$ is a valid proof for $x$, and rejecting if $x$ is $\eps$-far from $\Pi$. In particular, a MAP with proof length $0$ is simply a tester. The complexity of a MAP is defined as the sum of its proof length and query complexity. For simplicity, we consider the proximity parameter $\eps$ to be a fixed constant in the following discussion.

A \emph{partial tester} $T$ is a relaxation of the standard definition of a tester, that accepts inputs inside a property $\Pi_1$ and rejects inputs that are far from a \emph{larger} property $\Pi_2$ that contains $\Pi_1$ (standard testing is the case where $\Pi_2 = \Pi_1$). We first formalise an observation made in \cite{FGL14}, which shows an equivalence between MAPs and coverings by partial testers.
\begin{claim}
	\label{clm:mapequiv}
	A MAP $T$ for property $\Pi \subseteq \Sigma^n$ with proof complexity $m$, error rate $\sigma$ and query complexity $q = q(n, \eps)$ is equivalent to a collection of \emph{partial testers} $\set{T_i : i \in [\abs{\Sigma}^m]}$. Each $T_i(\eps)$ accepts inputs in the property $\Pi_i$ and rejects inputs that are $\eps$-far from $\Pi$, with the same query complexity $q$ and error rate $\sigma$ as $T$. The properties $\Pi_i$ satisfy $\Pi_i \subseteq B_\eps(\Pi)$ and $\Pi \subseteq \cup_i \Pi_i$.
\end{claim}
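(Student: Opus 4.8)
The plan is to prove the two directions of the equivalence separately, since \cref{clm:mapequiv} asserts that a MAP is ``equivalent'' to a covering by partial testers. For the forward direction, suppose we have a MAP $T$ for $\Pi$ with proof length $m$, query complexity $q = q(\eps)$ and error rate $\sigma$. For each possible proof string $\pi \in \Sigma^m$, define $T_\pi$ to be the algorithm that runs $T$ with $\pi$ hardwired as the proof; this is an ordinary (proof-free) algorithm making $q$ queries. I would then define $\Pi_\pi \coloneqq \set{x \in \Sigma^n : \Pr[T^{(x,\pi)} = 1] \geq 1 - \sigma}$ to be the set of inputs that $T_\pi$ accepts with high probability. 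By construction $T_\pi$ accepts every $x \in \Pi_\pi$ with probability $\geq 1-\sigma$. For soundness, if $x$ is $\eps$-far from $\Pi$, then the MAP soundness guarantee says $T$ rejects $(x,\pi)$ with probability $\geq 1-\sigma$ for \emph{every} $\pi$, so in particular $T_\pi$ rejects such $x$ with probability $\geq 1-\sigma$; hence each $T_\pi$ is a partial tester for $(\Pi_\pi, B_\eps(\Pi))$, i.e.\ it accepts inputs in $\Pi_\pi$ and rejects inputs $\eps$-far from $\Pi$. It remains to verify the two containment conditions. For $\Pi_\pi \subseteq B_\eps(\Pi)$: if some $x \in \Pi_\pi$ were $\eps$-far from $\Pi$, it would be both accepted and rejected by $T_\pi$ with probability $\geq 1-\sigma > 1/2$, a contradiction (for $\sigma < 1/2$). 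For $\Pi \subseteq \cup_\pi \Pi_\pi$: given $x \in \Pi$, MAP completeness provides some proof $\pi_x$ with $\Pr[T^{(x,\pi_x)} = 1] \geq 1-\sigma$, so $x \in \Pi_{\pi_x}$. Reindexing the proofs by $[|\Sigma|^m]$ finishes this direction.

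For the reverse direction, suppose we are given a collection of partial testers $\set{T_i : i \in [|\Sigma|^m]}$, each with query complexity $q$ and error rate $\sigma$, where $T_i$ accepts $\Pi_i$ and rejects inputs $\eps$-far from $\Pi$, and $\Pi_i \subseteq B_\eps(\Pi)$ and $\Pi \subseteq \cup_i \Pi_i$. Build a MAP $T$ as follows: interpret the proof string $\pi \in \Sigma^m$ as an index $i = i(\pi) \in [|\Sigma|^m]$ and simply run $T_{i}$ on the input, ignoring $\pi$ beyond selecting the index. Completeness: if $x \in \Pi \subseteq \cup_i \Pi_i$, pick $i$ with $x \in \Pi_i$ and let the honest prover send the corresponding proof string; then $T$ accepts with probability $\geq 1-\sigma$. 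Soundness: if $x$ is $\eps$-far from $\Pi$, then for \emph{every} index $i$, the tester $T_i$ rejects $x$ with probability $\geq 1-\sigma$, so $T$ rejects regardless of the proof. The proof length is $m$, the query complexity is $q$, and the error rate is $\sigma$, matching the claimed parameters.

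The main subtlety — and the step I would be most careful about — is handling the error-rate bookkeeping consistently so that the two containments $\Pi_i \subseteq B_\eps(\Pi)$ and $\Pi \subseteq \cup_i \Pi_i$ genuinely hold rather than hold only ``approximately''. The cleanest route is to \emph{define} the properties $\Pi_i$ in terms of the acceptance behaviour of the hardwired algorithms (as above) rather than trying to read them off from the MAP's intended semantics; this makes the partial-tester conditions tautological and pushes all the content into the two containments, which then follow from MAP completeness and soundness together with the observation that no input can be both accepted and rejected with probability exceeding $1/2$. One should also note that the claim is stated for a fixed constant $\eps$, so there is no need to track uniformity in $\eps$; the partial testers may depend on $\eps$ exactly as the MAP does. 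Finally, I would remark that this equivalence is essentially the one observed in \cite{FGL14}, and that combined with \cref{clm:testrla} it lets us view each partial tester $T_i$ (which is in particular an $\eps$-tester for the pair $(\Pi_i, \Pi)$, hence satisfies one-sided robustness after doubling the proximity parameter) as a robust local algorithm, which is exactly what is needed to feed MAPs into \cref{thm:main} in the subsequent application.
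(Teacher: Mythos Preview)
Your proof is correct and follows essentially the same approach as the paper: hardwire each proof string to obtain the partial testers, define $\Pi_i$ as the set of inputs accepted with probability at least $1-\sigma$, and use MAP completeness and soundness for the two containments (and conversely, index the testers by the proof). You are in fact slightly more explicit than the paper in verifying $\Pi_i \subseteq B_\eps(\Pi)$ via the $\sigma < 1/2$ contradiction, which the paper leaves implicit.
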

\begin{proof}
	Consider a MAP $T$ with parameters as in the statement, and define $T_i(\eps) \coloneqq T(\eps, i)$ for each purported proof $i \in [\abs{\Sigma}^m]$. Clearly the query complexity and error rate of $T_i$ match those of $T$, and these testers reject points that are $\eps$-far from $\Pi$. The property $\Pi_i$ is, by definition, the set of inputs that $T_i$ accepts (with probability at least $1 - \sigma$), which is contained in $B_\eps(\Pi)$ (since elements of $\overline{B_\eps(\Pi)}$ are rejected), and may possibly be empty. But since the definition of a MAP ensures that, for each $x \in \Pi$, the tester $T_i^x$ accepts for some proof $i$ (with probability $1 - \sigma$), we have $\Pi \subseteq \cup_i \Pi_i$.

	Consider, now, a collection of testers $\set{T_i : i \in [\abs{\Sigma}^m]}$ as in the statement, and define a MAP $T$ that simply selects the tester indexed by the received proof string; i.e., $T(\eps, i) \coloneqq T_i(\eps)$. Then, with probability at least $1 - \sigma$, the MAP $T$ rejects inputs that are $\eps$-far from $\Pi$ and accepts $x \in \Pi$ when its proof string is $i \in [\abs{\Sigma}^m]$ such that $x \in \Pi_i$.
\end{proof}

As discussed in the introduction, one of the most fundamental questions regarding proofs of proximity is their relative strength in comparison to testers; that is, whether verifying a proof for an approximate decision problem can be done significantly more efficiently than solving it. This can be cast as an analogue of the P versus NP question for property testing.

Fortunately, in the setting of property testing, the problem of verification versus decision is very much tractable: one of the main results in \cite{GR18} shows the existence of a property $\Pi$ which: (1) admits a MAP with proof length $O(\log n)$ and query complexity $q=O(1)$; and (2) requires at least $n^{1-1/\Omega(q)}$ queries to be tested without access to a proof. (The lower bound of \cite{GR18} is stated in a slightly weaker form. However, it is straightforward to see that the stronger form holds; see discussion at the end of this section.)

While this implies a nearly exponential separation between the power of testers and MAPs, it remained open whether the aforementioned sublinear lower bound on testing is an artefact of the techniques, or whether it is possible to obtain a stronger separation, where the property is harder for testers.

\cref{clm:mapequiv} and \cref{thm:main} allow us to prove the following corollary, which shows that the foregoing separation is nearly tight.

\begin{theorem}
	\label{thm:mappt}
	If a property $\Pi \subseteq \Sigma^n$ admits a MAP with query complexity $q$, proof length $m$ and proximity parameter $\eps = \Omega(1)$, then it admits a sample-based $2\eps$-tester with sample complexity $m \cdot n^{1-1/O(q^2 \log^2q)}$.
\end{theorem}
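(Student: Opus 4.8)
The plan is to combine the covering-by-partial-testers characterisation of MAPs (\cref{clm:mapequiv}) with the structural theorem (\cref{thm:main}), exploiting the fact that the sample-based algorithm produced by \cref{thm:main} probes its input \emph{obliviously} to the function it computes, so that a single sample can be reused across all the partial testers in the cover.

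First I would apply \cref{clm:mapequiv} to replace the given MAP, of proof length $m$ and query complexity $q$, by a collection of partial testers $\set{T_i : i \in [|\Sigma|^m]}$, each of query complexity $q$ and error rate at most $1/3$, where $T_i(\eps/2)$ accepts every input in a property $\Pi_i \subseteq B_{\eps/2}(\Pi)$, rejects every input that is $\eps/2$-far from $\Pi$, and $\Pi \subseteq \bigcup_i \Pi_i$. Then, exactly as in \cref{clm:testrla}, I would observe that each $T_i$ is an $(\eps/2,0)$-robust local algorithm for the partial function $f_i : \Sigma^n \to \bitset$ defined by $f_i(x) = 1$ for $x \in \Pi_i$ and $f_i(x) = 0$ for $x$ that is $\eps$-far from $\Pi$: doubling the proximity parameter from $\eps/2$ to $\eps$ guarantees that the $\eps/2$-neighbourhood of every $0$-input of $f_i$ remains $\eps/2$-far from $\Pi$ and is hence rejected, and $\eps/2 = \Omega(1)$ by hypothesis. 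Applying \cref{thm:main} to each $T_i$ then produces a sample-based algorithm $N_i$ for $f_i$ with sample complexity $n^{1-1/O(q^2\log^2 q)}$.

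The crucial observation is that, in the construction of \cref{sec:construction}, the sampling step depends only on the sampling probability $p$, which depends only on $q$ (the query complexity shared by all the $T_i$), and the cap $|Q| < 2pn$ is independent of both the input and the index $i$; hence every $N_i$ performs the \emph{same} sampling step. I would therefore build a single sample-based tester $N$ for $\Pi$ that samples each coordinate independently with probability $p$ once, runs the enumeration stage of each $N_i$ on this common sample, and accepts if and only if at least one $N_i$ accepts. Completeness is immediate: if $x \in \Pi$ then $x \in \Pi_i$ for some $i$, so $f_i(x) = 1$ and $N_i$ accepts with high probability. For soundness I need the harder statement that, when $x$ is $\eps$-far from $\Pi$ (so $f_i(x) = 0$ for \emph{every} $i$), all $|\Sigma|^m$ algorithms $N_i$ reject simultaneously; I would obtain this by invoking error reduction (\cref{clm:err-red}) to drive the error of each $N_i$ below $1/(3|\Sigma|^m)$ at the cost of $O(m \log |\Sigma|)$ repetitions — reusing each batch of samples across all the $N_i$ and taking a union bound — which is exactly what introduces the factor $m$ into the final sample complexity $m \cdot n^{1-1/O(q^2\log^2 q)}$. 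Alternatively, in the regime where the conclusion is non-trivial, the exponentially small per-run failure probability recorded in \cref{rem:tight-error-prob} already survives a union bound over $|\Sigma|^m$ runs, so no amplification would be needed.

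I expect the main obstacle to be precisely this soundness union bound over exponentially many proof strings, together with the (routine but necessary) verification that the sampling procedure of \cref{sec:construction} is genuinely oblivious to both the input and the computed function, so that the analysis of \cref{clm:soundness} — in particular its quantification over all kernel assignments — applies verbatim to each $N_i$ run on the shared sample. The remaining ingredients (the reduction to partial testers, the robustness of each $T_i$, and completeness) are immediate given \cref{clm:mapequiv} and \cref{thm:main}. Finally, I would remark that instantiating this theorem with $m = O(\log n)$ and $q = O(1)$ and comparing with the $n^{1 - 1/\Omega(q)}$ testing lower bound of \cite{GR18} shows that separation to be tight up to the gap between $q^2 \log^2 q$ and $q$ in the exponent, which is the application this theorem is used for.
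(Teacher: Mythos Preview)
Your proposal is correct and follows essentially the same approach as the paper: decompose the MAP into $|\Sigma|^m$ partial testers via \cref{clm:mapequiv}, apply the main structural theorem to each to obtain sample-based testers with a common (oblivious) sampling step, amplify each to error $1/(3|\Sigma|^m)$ by $O(m\log|\Sigma|)$ independent repetitions, and take a union bound so that accepting iff some $N_i$ accepts yields an $\eps$-tester with sample complexity $m\cdot n^{1-1/O(q^2\log^2 q)}$. The paper's proof is organised slightly differently (it phrases the per-tester step via \cref{cor:test} and interleaves the repetitions with the shared sampling), but the ideas and the source of the factor $m$ are identical.
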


Applying \cref{thm:mappt} to the special case of MAPs with \emph{logarithmic} proof length, we obtain a sample-based tester with sample complexity $n^{1-1/O(q^2 \log^2q)}$, showing that the separation in \cite{GR18} is nearly optimal, and in particular that there cannot be a fully exponential separation between MAPs and testers.

\begin{proof}[Proof (of \cref{thm:mappt})]
	Let $\Pi$ be a property and $T$ be a MAP with proof length $m$ as in the statement. By \cref{clm:mapequiv}, there exists a collection of partial testers $\set{T_i : i \leq \abs{\Sigma}^m}$ with query complexity $q$ that satisfy the following. Each $T_i$ accepts inputs in a property $\Pi_i$ and rejects inputs that are $\eps$-far from $\Pi$, with $\Pi \subseteq \cup_i \Pi_i$. By applying \cref{cor:test} to each of these testers, we obtain a collection of sample-based testers $\set{S_i}$ with sample complexity $q' = n^{1-1/O(q^2 \log^2q)}$ for the same partial properties, but which only reject inputs that are $2\eps$-far from $\Pi$.

  The execution of each of the $S_i$ proceeds in two steps, as defined in \cref{sec:construction}: \emph{sampling} (\ref{step:sample}) and \emph{enumeration} (\ref{step:enumeration}). Note that the sampling step is exactly the same for every $S_i$.

  Let $k = O(m \log \abs{\Sigma})$ such that taking the majority output from $k$ repetitions of $S_i$ yields an error rate of $1/(3\abs{\Sigma}^m)$. We define a new sample-based algorithm $S$ that repeats the following steps $k$ times:
  \begin{enumerate}
    \item\label{item:partial-sample} Execute both steps of $S_1$ (sampling and enumeration), recording the output.
    \item For all $1 < i \leq \abs{\Sigma}^m$, \emph{only execute the enumeration step} of $S_i$ on the samples obtained in \cref{item:partial-sample}, and record the output.
  \end{enumerate}
  After all $k$ iterations have finished, check if at least $k/2$ outputs of $S_i$ were 1 for some $i$. If so, output 1, and output 0 otherwise.

   First suppose $S$ receives an input $x \in \Pi$, and let $i \leq \abs{\Sigma}^m$ such that $x \in \Pi_i$. Then the majority output of the enumerations steps of $S_i$ is 1 with probability $1 - 1/(3\abs{\Sigma}^m) \geq 2/3$. Now suppose $S$ receives an input $x$ that is $2\eps$-far from $\Pi$. Then, for each $i$, the majority output of the enumeration step of $S_i$ is 1 with probability at most $1/(3\abs{\Sigma}^m)$. A union bound over all $i \leq \abs{\Sigma}^m$ ensures this happens with probability at least $1/3$, in which case $S$ correctly outputs 0.

   $S$ is therefore a $2\eps$-tester for the property $\Pi$ with sample complexity $k \cdot q' = m \cdot n^{1-1/O(q^2 \log^2q)}$ (recall that $\log \abs{\Sigma} \leq \log n$), and the theorem follows.
\end{proof}

Interestingly, as a direct corollary of \cref{thm:mappt}, we obtain that the general transformation in \cref{thm:main} is optimal, up to a quadratic gap in the dependency on the sample complexity, as a transformation with a smaller sample complexity could have been used to transform the MAP construction in the MAPs-vs-testers separation of \cite{GR18}, yielding a tester with query complexity that contradicts the lower bound in that result.

\begin{theorem}
  \label{thm:optimality}
   There does \emph{not} exist a transformation that takes a robust local algorithm with query complexity $q$ and transforms it into a \emph{sample-based} local algorithm with sample complexity at most $ n^{1- 1/o(q)}$.
\end{theorem}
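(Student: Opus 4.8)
The plan is to prove \cref{thm:optimality} by contradiction, bootstrapping the near-optimality of the MAPs-versus-testers separation that \cref{thm:mappt} established. Recall that \cite{GR18} exhibits, for every $q$, a property $\Pi = \Pi_q \subseteq \Sigma^n$ that (1) admits an adaptive MAP with constant proximity parameter, proof length $m = O(\log n)$ and query complexity $q$, yet (2) cannot be $\eps$-tested (for the relevant constant $\eps$) with fewer than $n^{1 - 1/\Omega(q)}$ queries. The key observation is that the proof of \cref{thm:mappt} invokes the query-to-sample transformation (there, \cref{cor:test}, itself a corollary of \cref{thm:main}) \emph{only as a black box}, applied to each of the partial testers that a MAP decomposes into; hence any hypothetical transformation with better sample complexity would immediately yield a correspondingly stronger version of \cref{thm:mappt}, which we will show is impossible.

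Concretely, suppose towards a contradiction that there is a transformation taking every robust local algorithm of query complexity $q$ to a sample-based local algorithm of sample complexity at most $n^{1 - 1/g(q)}$, for some function $g(q) = o(q)$. Fix $q$ and apply the following steps to the property $\Pi_q$ above. First, by \cref{clm:mapequiv}, its MAP is equivalent to a collection of partial testers $\set{T_i : i \in [|\Sigma|^m]}$ with $m = O(\log n)$, each making $q$ queries, accepting a partial property $\Pi_i$ with $\Pi \subseteq \bigcup_i \Pi_i$, and rejecting inputs $\eps/2$-far from $\Pi$; exactly as in \cref{clm:testrla}, each $T_i$ is a robust local algorithm (after doubling the proximity parameter, which remains a positive constant). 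Second, apply the hypothetical transformation to each $T_i$, obtaining sample-based partial testers $\set{S_i}$ of sample complexity $n^{1 - 1/g(q)}$ that reject inputs $\eps$-far from $\Pi$. Third, combine them exactly as in the proof of \cref{thm:mappt}: amplify each $S_i$ by $k = O(m \log|\Sigma|) = O(\log n)$ repetitions that reuse a single round of samples, take majority votes, and output $1$ iff some $S_i$ accepts; a union bound over the $|\Sigma|^m = n^{O(1)}$ proofs shows this is a sample-based $\eps$-tester for $\Pi$ with sample complexity $k \cdot n^{1 - 1/g(q)} = n^{1 - 1/g(q) + o(1)}$.

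It remains to derive the contradiction. A sample-based tester is in particular a tester whose query complexity equals its sample complexity, so $\Pi_q$ is $\eps$-testable with $n^{1 - 1/g(q) + o(1)}$ queries. Writing the lower bound of \cite{GR18} as $n^{1 - C/q}$ for the constant $C$ hidden in the $\Omega(q)$, we must therefore have $n^{1 - 1/g(q) + o(1)} \ge n^{1 - C/q}$ for all large $n$. But $g(q) = o(q)$ means $q/g(q) \to \infty$, so $1/g(q) > C/q$ for all sufficiently large $q$; fixing any such $q$, the quantity $1/g(q) - C/q$ is a positive constant, while the $o(1)$ term (of order $\log\log n/\log n$) tends to $0$, so $1 - 1/g(q) + o(1) < 1 - C/q$ for all sufficiently large $n$ — a contradiction.

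The main obstacle in carrying this out is purely one of careful asymptotic bookkeeping rather than a new idea. One must (i) check that the $\Theta(\log n)$-factor overhead from error reduction, contributing only an $n^{o(1)}$ slack, is genuinely dominated by the polynomial gap $n^{1/g(q) - C/q}$, which requires fixing the order of quantifiers ($q$ first, then $n \to \infty$); (ii) confirm that the partial testers extracted from the MAP satisfy the one-sided robustness hypothesis of \cref{thm:main} with a \emph{constant} robustness parameter, which follows from \cref{clm:testrla} together with the fact that the MAP of \cite{GR18} has constant proximity parameter; and (iii) verify that the combination argument in the proof of \cref{thm:mappt} never uses any property of the transformation beyond its sample-complexity bound, so that it applies verbatim to the hypothetical transformation. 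The argument uses the form of the \cite{GR18} lower bound stated in \cref{sec:map}, but would go through equally with any bound of the shape $n^{1 - C/q}$.
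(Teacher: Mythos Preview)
Your proposal is correct and takes essentially the same approach as the paper: assume a transformation with sample complexity $n^{1-1/g(q)}$ for $g(q)=o(q)$, plug it into the MAP-to-tester reduction of \cref{thm:mappt} applied to the encoded-intersecting-messages property of \cite{GR18}, and contradict the $n^{1-1/\Omega(q)}$ testing lower bound. The paper's proof is a one-paragraph sketch invoking \cref{thm:mappt} as a black box, while you spell out the decomposition into partial testers, the error amplification, and the order-of-quantifiers issue; this extra care is fine and changes nothing substantive. One small caveat on your point (iii): the proof of \cref{thm:mappt} as written does use the specific feature that all $S_i$ share an identical sampling step, so ``applies verbatim'' is slightly too strong --- but this is harmless, since arbitrary sample-based algorithms of the same sample complexity can be run on a common sample set (or via subsampling if their sampling probabilities differ), which is all the combination needs.
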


\begin{proof}
  Let $\Pi$ be the \emph{encoded intersecting messages} property considered in \cite[Section 3.1]{GR18}, for which it was shown that $\Pi$ has a MAP with query complexity $q$ and logarithmic proof complexity, but every tester for $\Pi$ requires at least $n^{1-1/\Omega(q)}$ queries. Suppose towards contradiction that a transformation as in the hypothesis exists. Then, applying the transformation to the aforementioned MAP (as in \cref{thm:mappt}) yields a tester for $\Pi$ with query complexity $n^{1-1/o(q)}$, in contradiction to the lower bound.
  \end{proof}

\paragraph{On the lower bound in \cite{GR18}\ifsicomp\else.\fi}
The separation between MAPs and testers in \cite{GR18} is proved with respect to a property of strings that are encoded by relaxed LDCs; namely, the \emph{encoded intersecting messages property}, defined as
\begin{equation*}
  \textrm{EIM}_C = \left\{ \big( C(x),C(y) \big) \;:\; x,y \in \bitset^k,\; k
    \in \N \text{ and } \exists i\in[k] \text{ s.t. } x_i \neq 0
    \text{ and } y_i \neq 0 \right\},
\end{equation*}
where $C \colon \bitset^k \to \bitset^n$ is a code with linear distance, which is both a relaxed LDC and an LTC. In \cite{GR18} it is shown that there exists a MAP with proof length $O(\log n)$ and query complexity $q=O(1)$, and crucially for us, that any tester requires $\Omega(k)$ queries to be tested without access to a proof. The best constructions of codes that satisfy the aforementioned conditions \cite{BGHSV06,CGS22,AS21} achieve blocklength $n = O(k^{1 + 1/q}) = k^{1 + 1/\Omega(q)}$, and hence the stated lower bound follows.

\cleardoublepage
\ifsicomp
  \bibliographystyle{siamplain}
  \bibliography{references}
\else
\phantomsection
  \addcontentsline{toc}{section}{References}
  \bibliographystyle{alpha}
  \bibliography{references}

\newcommand{\etalchar}[1]{$^{#1}$}
\begin{thebibliography}{DGRMT22}

\bibitem[ALM{\etalchar{+}}98]{ALMSS98}
Sanjeev Arora, Carsten Lund, Rajeev Motwani, Madhu Sudan, and Mario Szegedy.
\newblock Proof verification and the hardness of approximation problems.
\newblock {\em J. ACM}, 45(3):501--555, 1998.

\bibitem[AS98]{AS98}
Sanjeev Arora and Shmuel Safra.
\newblock Probabilistic checking of proofs: A new characterization of {{NP}}.
\newblock {\em J. ACM}, 45(1):70--122, 1998.

\bibitem[AS21]{AS21}
Vahid~R. Asadi and Igor Shinkar.
\newblock Relaxed locally correctable codes with improved parameters.
\newblock In Nikhil Bansal, Emanuela Merelli, and James Worrell, editors, {\em 48th {{International Colloquium}} on {{Automata}}, {{Languages}}, and {{Programming}}, {{ICALP}} 2021, {{July}} 12-16, 2021, {{Glasgow}}, {{Scotland}} ({{Virtual Conference}})}, volume 198 of {\em {{LIPIcs}}}, pages 18:1--18:12. {Schloss Dagstuhl - Leibniz-Zentrum f\"ur Informatik}, 2021.

\bibitem[BGH{\etalchar{+}}06]{BGHSV06}
Eli {Ben-Sasson}, Oded Goldreich, Prahladh Harsha, Madhu Sudan, and Salil~P. Vadhan.
\newblock Robust {{PCPs}} of proximity, shorter {{PCPs}}, and applications to coding.
\newblock {\em SIAM J. Comput.}, 36(4):889--974, 2006.

\bibitem[BGS15]{BGS15}
Arnab Bhattacharyya, Elena Grigorescu, and Asaf Shapira.
\newblock A unified framework for testing linear-invariant properties.
\newblock {\em Random Structures \& Algorithms}, 46(2):232--260, 2015.

\bibitem[BHLM09]{BHLM09}
Eli {Ben-Sasson}, Prahladh Harsha, Oded Lachish, and Arie Matsliah.
\newblock Sound 3-query {{PCPPs}} are long.
\newblock {\em ACM Transactions on Computation Theory (TOCT)}, 1(2):1--49, 2009.

\bibitem[BMR19a]{BMR19a}
Piotr Berman, Meiram Murzabulatov, and Sofya Raskhodnikova.
\newblock The power and limitations of uniform samples in testing properties of figures.
\newblock {\em Algorithmica}, 81(3):1247--1266, 2019.

\bibitem[BMR19b]{BMR19}
Piotr Berman, Meiram Murzabulatov, and Sofya Raskhodnikova.
\newblock Testing convexity of figures under the uniform distribution.
\newblock {\em Random Structures \& Algorithms}, 54(3):413--443, 2019.

\bibitem[BRV18]{BRV18}
Itay Berman, Ron~D. Rothblum, and Vinod Vaikuntanathan.
\newblock Zero-knowledge proofs of proximity.
\newblock In Anna~R. Karlin, editor, {\em 9th {{Innovations}} in {{Theoretical Computer Science Conference}}, {{ITCS}} 2018, {{January}} 11-14, 2018, {{Cambridge}}, {{MA}}, {{USA}}}, volume~94 of {\em {{LIPIcs}}}, pages 19:1--19:20. {Schloss Dagstuhl - Leibniz-Zentrum f\"ur Informatik}, 2018.

\bibitem[CFSS17]{CFSS17}
Xi~Chen, Adam Freilich, Rocco~A Servedio, and Timothy Sun.
\newblock Sample-based high-dimensional convexity testing.
\newblock In {\em Approximation, {{Randomization}}, and {{Combinatorial Optimization}}. {{Algorithms}} and {{Techniques}} ({{APPROX}}/{{RANDOM}} 2017)}. {Schloss Dagstuhl-Leibniz-Zentrum fuer Informatik}, 2017.

\bibitem[CG18]{CG18}
Cl{\'e}ment~L. Canonne and Tom Gur.
\newblock An adaptivity hierarchy theorem for property testing.
\newblock {\em Computational Complexity}, 27(4):671--716, 2018.

\bibitem[CGdW13]{CGW13}
Victor Chen, Elena Grigorescu, and Ronald de~Wolf.
\newblock Error-correcting data structures.
\newblock {\em SIAM J. Comput.}, 42(1):84--111, 2013.

\bibitem[CGS22]{CGS22}
Alessandro Chiesa, Tom Gur, and Igor Shinkar.
\newblock Relaxed locally correctable codes with nearly-linear block length and constant query complexity.
\newblock {\em SIAM J. Comput.}, 51(6):1839--1865, 2022.

\bibitem[DGG19]{DGG19}
Irit Dinur, Oded Goldreich, and Tom Gur.
\newblock Every set in {{P}} is strongly testable under a suitable encoding.
\newblock In {\em 10th {{Innovations}} in {{Theoretical Computer Science Conference}} ({{ITCS}} 2019)}, 2019.

\bibitem[DGL21]{DGL21}
Marcel Dall'Agnol, Tom Gur, and Oded Lachish.
\newblock A structural theorem for local algorithms with applications to coding, testing, and privacy.
\newblock In D{\'a}niel Marx, editor, {\em Proceedings of the 2021 {{ACM-SIAM Symposium}} on {{Discrete Algorithms}}, {{SODA}} 2021, {{Virtual Conference}}, {{January}} 10 - 13, 2021}, pages 1651--1665. {SIAM}, 2021.

\bibitem[DGRMT22]{DGRT22}
Marcel Dall'Agnol, Tom Gur, Subhayan Roy~Moulik, and Justin Thaler.
\newblock Quantum proofs of proximity.
\newblock {\em Quantum}, 6:834, October 2022.

\bibitem[DH13]{DH13}
Irit Dinur and Prahladh Harsha.
\newblock Composition of low-error 2-query {{PCPs}} using decodable pcps.
\newblock {\em SIAM J. Comput.}, 42(6):2452--2486, 2013.

\bibitem[Efr12]{E12}
Klim Efremenko.
\newblock 3-query locally decodable codes of subexponential length.
\newblock {\em SIAM J. Comput.}, 41(6):1694--1703, 2012.

\bibitem[FGL{\etalchar{+}}91]{FGLSS91}
Uriel Feige, Shafi Goldwasser, L{\'a}szl{\'o} Lov{\'a}sz, Shmuel Safra, and Mario Szegedy.
\newblock Approximating clique is almost {{NP-complete}}.
\newblock In {\em Proc. 32nd {{IEEE Symp}}. on {{Foundations}} of {{Computer Science}}}, pages 2--12, 1991.

\bibitem[FGL14]{FGL14}
Eldar Fischer, Yonatan Goldhirsh, and Oded Lachish.
\newblock Partial tests, universal tests and decomposability.
\newblock In Moni Naor, editor, {\em Innovations in {{Theoretical Computer Science}}, {{ITCS}}'14, {{Princeton}}, {{NJ}}, {{USA}}, {{January}} 12-14, 2014}, pages 483--500. {ACM}, 2014.

\bibitem[FLV15]{FLV15}
Eldar Fischer, Oded Lachish, and Yadu Vasudev.
\newblock Trading query complexity for sample-based testing and multi-testing scalability.
\newblock In {\em Proceedings of the {{IEEE}} 56th {{Annual Symposium}} on {{Foundations}} of {{Computer Science}} ({{FOCS}})}, 2015.

\bibitem[GG18]{GG18}
Oded Goldreich and Tom Gur.
\newblock Universal locally testable codes.
\newblock {\em Chicago J. Theor. Comput. Sci.}, 2018.

\bibitem[GGR98]{GGR98}
Oded Goldreich, Shari Goldwasser, and Dana Ron.
\newblock Property testing and its connection to learning and approximation.
\newblock {\em J. ACM}, 45(4):653--750, 1998.

\bibitem[GL21]{GL21}
Tom Gur and Oded Lachish.
\newblock On the power of relaxed local decoding algorithms.
\newblock {\em SIAM J. Comput.}, 50(2):788--813, 2021.

\bibitem[Gol11]{G11}
Oded Goldreich.
\newblock Short locally testable codes and proofs.
\newblock In Oded Goldreich, editor, {\em Studies in {{Complexity}} and {{Cryptography}}. {{Miscellanea}} on the {{Interplay}} between {{Randomness}} and {{Computation}} - {{In Collaboration}} with {{Lidor Avigad}}, {{Mihir Bellare}}, {{Zvika Brakerski}}, {{Shafi Goldwasser}}, {{Shai Halevi}}, {{Tali Kaufman}}, {{Leonid Levin}}, {{Noam Nisan}}, {{Dana Ron}}, {{Madhu Sudan}}, {{Luca Trevisan}}, {{Salil Vadhan}}, {{Avi Wigderson}}, {{David Zuckerman}}}, volume 6650 of {\em Lecture {{Notes}} in {{Computer Science}}}, pages 333--372. {Springer}, 2011.

\bibitem[Gol17]{G17a}
Oded Goldreich.
\newblock {\em Introduction to Property Testing}.
\newblock {Cambridge University Press}, 2017.

\bibitem[GR16]{GR16}
Oded Goldreich and Dana Ron.
\newblock On sample-based testers.
\newblock {\em ACM Transactions on Computation Theory (TOCT)}, 8(2):1--54, 2016.

\bibitem[GR17]{GR17}
Tom Gur and Ron~D. Rothblum.
\newblock A hierarchy theorem for interactive proofs of proximity.
\newblock In Christos~H. Papadimitriou, editor, {\em 8th {{Innovations}} in {{Theoretical Computer Science Conference}}, {{ITCS}} 2017, {{January}} 9-11, 2017, {{Berkeley}}, {{CA}}, {{USA}}}, volume~67 of {\em {{LIPIcs}}}, pages 39:1--39:43. {Schloss Dagstuhl - Leibniz-Zentrum f\"ur Informatik}, 2017.

\bibitem[GR18]{GR18}
Tom Gur and Ron~D. Rothblum.
\newblock Non-interactive proofs of proximity.
\newblock {\em Comput. Complex.}, 27(1):99--207, 2018.

\bibitem[GS06]{GS06}
Oded Goldreich and Madhu Sudan.
\newblock Locally testable codes and {{PCPs}} of almost-linear length.
\newblock {\em Journal of the ACM (JACM)}, 53(4):558--655, 2006.

\bibitem[GS10]{GS10}
Oded Goldreich and Or~Sheffet.
\newblock On the randomness complexity of property testing.
\newblock {\em Computational Complexity}, 19(1), 2010.

\bibitem[GT03]{GT03}
Oded Goldreich and Luca Trevisan.
\newblock Three theorems regarding testing graph properties.
\newblock {\em Random Structures \& Algorithms}, 23(1):23--57, 2003.

\bibitem[HS70]{HS70}
Andr{\'a}s Hajnal and E.~Szemer{\'e}di.
\newblock Proof of a conjecture of {{P}}. {{Erd\H{o}s}}.
\newblock {\em Colloq Math Soc J\'anos Bolyai}, 4, 1970.

\bibitem[HSX{\etalchar{+}}12]{HSXOCGLY12}
Cheng Huang, Huseyin Simitci, Yikang Xu, Aaron Ogus, Brad Calder, Parikshit Gopalan, Jin Li, and Sergey Yekhanin.
\newblock Erasure coding in {{Windows Azure}} storage.
\newblock In {\em Presented as Part of the 2012 {{USENIX Annual Technical Conference}}}, pages 15--26, 2012.

\bibitem[KR15]{KR15}
Yael~Tauman Kalai and Ron~D. Rothblum.
\newblock Arguments of proximity - [extended abstract].
\newblock In Rosario Gennaro and Matthew Robshaw, editors, {\em Advances in {{Cryptology}} - {{CRYPTO}} 2015 - 35th {{Annual Cryptology Conference}}, {{Santa Barbara}}, {{CA}}, {{USA}}, {{August}} 16-20, 2015, {{Proceedings}}, {{Part II}}}, volume 9216 of {\em Lecture {{Notes}} in {{Computer Science}}}, pages 422--442. {Springer}, 2015.

\bibitem[KS17]{KS17}
Swastik Kopparty and Shubhangi Saraf.
\newblock Local testing and decoding of high-rate error-correcting codes.
\newblock {\em Electronic Colloquium on Computational Complexity (ECCC)}, 24:126, 2017.

\bibitem[KT00]{KT00}
Jonathan Katz and Luca Trevisan.
\newblock On the efficiency of local decoding procedures for error-correcting codes.
\newblock In {\em Proceedings of the 32nd {{Annual ACM Symposium}} on {{Theory}} of {{Computing}} ({{STOC}})}, 2000.

\bibitem[MR10]{MR10}
Dana Moshkovitz and Ran Raz.
\newblock Two-query {{PCP}} with subconstant error.
\newblock {\em J. ACM}, 57(5):29:1--29:29, 2010.

\bibitem[New91]{N91a}
Ilan Newman.
\newblock Private vs. common random bits in communication complexity.
\newblock {\em Information processing letters}, 39(2):67--71, 1991.

\bibitem[RR20]{RR20a}
Noga {Ron-Zewi} and Ron~D. Rothblum.
\newblock Local proofs approaching the witness length [extended abstract].
\newblock In Sandy Irani, editor, {\em 61st {{IEEE}} Annual Symposium on Foundations of Computer Science, {{FOCS}} 2020, Durham, {{NC}}, {{USA}}, November 16-19, 2020}, pages 846--857. {IEEE}, 2020.

\bibitem[RRR21]{RRR21}
Omer Reingold, Guy~N. Rothblum, and Ron~D. Rothblum.
\newblock Constant-round interactive proofs for delegating computation.
\newblock {\em SIAM J. Comput.}, 50(3), 2021.

\bibitem[RS96]{RS96}
Ronitt Rubinfeld and Madhu Sudan.
\newblock Robust characterizations of polynomials with applications to program testing.
\newblock {\em SIAM J. Comput.}, 25(2):252--271, 1996.

\bibitem[RVW13]{RVW13}
Guy~N. Rothblum, Salil~P. Vadhan, and Avi Wigderson.
\newblock Interactive proofs of proximity: Delegating computation in sublinear time.
\newblock In Dan Boneh, Tim Roughgarden, and Joan Feigenbaum, editors, {\em Symposium on {{Theory}} of {{Computing Conference}}, {{STOC}}'13, {{Palo Alto}}, {{CA}}, {{USA}}, {{June}} 1-4, 2013}, pages 793--802. {ACM}, 2013.

\bibitem[Tre04]{T04}
Luca Trevisan.
\newblock Some applications of coding theory in computational complexity.
\newblock {\em Electronic Colloquium on Computational Complexity (ECCC)}, 2004.

\bibitem[Yek08]{Y08}
Sergey Yekhanin.
\newblock Towards 3-query locally decodable codes of subexponential length.
\newblock {\em J. ACM}, 55(1):1:1--1:16, 2008.

\bibitem[Yek12]{Y12}
Sergey Yekhanin.
\newblock Locally decodable codes.
\newblock {\em Foundations and Trends in Theoretical Computer Science}, 6(3):139--255, 2012.

\end{thebibliography}
\fi

\clearpage
\appendix
\section{Deferred proofs}
\label{sec:deferred}

In this appendix we provide the proofs of two claims and a lemma obtained from their combination, which were deferred in \cref{sec:lemmas}: \cref{clm:err-red} provides an amplification procedure and \cref{clm:rand-red} a randomness reduction procedure for local algorithms, while \cref{lem:prep-alg} obtains both simultaneously. We remark that both claims follow from a straightforward adaptation of standard techniques, and include these proofs for completeness. We begin with the amplification procedure.

\begin{claim}[\cref{clm:err-red} restated]
Let $M$ be a $(\rho_0, \rho_1)$-robust algorithm for computing $f\colon \mathcal{P} \rightarrow \bitset$ (where $\mathcal{P} \subset Z \times \Sigma^n$) with error rate $\sigma \leq 1/3$, query complexity $q$ and randomness complexity $r$.

For any $\sigma' > 0$, there exists a $(\rho_0, \rho_1)$-robust algorithm $N$ for computing the same function with error rate $\sigma'$, query complexity $108 q \log(1/\sigma')/\sigma$ and randomness complexity $108 r \log(1/\sigma')/\sigma$.
\end{claim}
\begin{proof}
  Define $N$ as the algorithm that makes $t = 108 \log(1/\sigma')/\sigma$ independent runs of $M$ and outputs the most frequent symbol, resolving ties arbitrarily. The query and randomness complexities of $N$ clearly match the statement, and we must now prove that the error rate is indeed $\sigma'$ and that $N$ is $(\rho_0,\rho_1)$-robust.

  Fix $z \in Z$ and $x \in \Sigma^n$ in the domain of $f$ and let $b \coloneqq f(z,x)$. As $M$ is $\rho_b$-robust at $x$, the algorithm satisfies $\Pr[M^y(z) = b] \geq 1 - \sigma$ for all $y \in B_{\rho_b}(x)$. By the Chernoff bound,
  \begin{equation*}
		\Pr\left[M^y(z) \neq b \text{ for at least } (\sigma + 1/6) t \text{ runs}\right] \leq e^{-\frac{\sigma t}{3 \cdot 36}} = e^{-\log \frac{1}{\sigma'}} < 2^{-\log \frac{1}{\sigma'}} = \sigma'.
	\end{equation*}
  The majority rule will thus yield outcome $b$ with probability at least $1 - \sigma'$, since at least $1 - (\sigma + 1/6) t \geq t/2$ runs output $b$ (except with probability at most $\sigma'$). As $x, z$ and $y \in B_{\rho_b}(x)$ are arbitrary, the result follows.
\end{proof}

We proceed to the randomness reduction transformation.

\begin{claim}[\cref{clm:rand-red}, restated]
    Let $M$ be a $(\rho_0, \rho_1)$-robust algorithm for computing $f\colon \mathcal{P} \rightarrow \bitset$ (where $\mathcal{P} \subset Z \times \Sigma^n$) with error rate $\sigma$, query complexity $q$ and randomness complexity $r$.

    There exists a $(\rho_0, \rho_1)$-robust algorithm $N$ for computing the same function with error rate $2\sigma$ and query complexity $q$, whose distribution $\Tilde{\mu}^N$ has support size $3n \ln \abs{\Sigma}/\sigma$. In particular, the randomness complexity of $N$ is bounded by $\log (n/\sigma) + \log \log \abs{\Sigma} + 2$.
\end{claim}

\begin{proof}
  Fix any explicit input $z \in Z$. Let $\set{x_j}$ be an enumeration of the inputs in $\Sigma^n$ such that $\Pr[M^{x_j}(z) = b_j] \geq 1 - \sigma$ for some $b_j \in \bitset$. Note that this includes points in the neighbourhood of a point at which $M$ is robust which are not necessarily in the domain of $f$, so it suffices to show $\Pr[N^{x_j}(z) = b_j] \geq 1 - 2\sigma$ to prove the claim for $N$ with the required query complexity and distribution.

  Define the $2^r \times \abs{\set{x_j}}$ matrix $E$ with entries in $\bitset$ as follows. Denote by $b_{ij} \in \bitset$ the output of $M^{x_j}(z)$ when it executes according to the decision tree indexed by (the binary representation of) $i \in [2^r]$. Then,
  \begin{equation*}
		E_{i,j} = \left\{\begin{array}{cl}1, &\text{if } b_{ij} \neq b_j\\0, &\text{otherwise.}\end{array}\right.
	\end{equation*}
  Note that $E_{i,j}$ simply indicates whether $M^{x_j}(z)$ outputs incorrectly on input when the outcome of the algorithm's coin flips is (the binary representation of) $i$. By construction, for each fixed $j$, a fraction of at most $\sigma$ indices $i \in [2^r]$ are such that $E_{i,j} = 1$.

  Let $t = 3 n \ln \abs{\Sigma}/\sigma$ and $I_1, \ldots, I_t$ be independent random variables uniformly distributed in $[2^r]$. For each fixed $j \leq \abs{\set{x_j}} \leq \abs{\Sigma}^n$ and $k \leq t$, we have $\E[E_{I_k,j}] \leq \sigma$. By the Chernoff bound,
  \begin{equation*}
		\Pr\left[\sum_{k = 1}^t E_{I_k,j} \geq 2\sigma t\right] \leq e^{-\frac{\sigma t}{3}} = e^{-n \ln \abs{\Sigma}} < \abs{\Sigma}^{-n}.
	\end{equation*}

  Applying the union bound over all $j \leq \abs{\Sigma}^n$, we obtain
  \begin{equation*}
		\Pr\left[\sum_{k = 1}^t E_{I_k,j} \ge 2\sigma t \text{ for some } j\right] < 1.
	\end{equation*}

  We have thus shown, via the probabilistic method, the existence of a multi-set $R_z$ of size $3n \ln \abs{\Sigma}/\sigma$ such that
  \begin{equation*}
		\Pr[N^{x_j}(z) \neq b_j] \leq 2\sigma,
	\end{equation*}
  where $N$ samples its random strings uniformly from $R_z$ (rather than from $\bitset^r$), using the corresponding decision trees of $M$. The size of $S_z$ is thus $\abs{\Tilde{\mu}^N} = 3n \ln \abs{\Sigma}/\sigma$, and this sampling can be performed with $\log (n/\sigma) + \log \log \abs{\Sigma} + 2$ random coins.

  Since the decision trees of $N$ are simply a subcollection of those of $M$, the query complexity of $N$ is $q$ and the claim follows.
\end{proof}

We now conclude with the following lemma, obtained by suitably combining the foregoing claims in sequence.

\begin{lemma}[\cref{lem:prep-alg}, restated]
  Assume there exists a $\rho$-robust algorithm $M$ for computing $f$ with query complexity $\ell$, error rate $1/3$ and arbitrary randomness complexity.
  Then there exists a $\rho$-robust $q$-local algorithm $M'$ for $f$ with error rate
	\begin{equation*}
		\sigma = \errorratefrac
	\end{equation*}
	such that $\frac{q}{\log 8q} = O(\ell)$, or, equivalently,
  \begin{equation*}
    q = O(\ell \log \ell).
  \end{equation*}
  Moreover, the distribution of $M'$ is uniform over a multi-collection of decision trees of size $\suppsize$.
\end{lemma}
\begin{proof}
  We apply both transformations in order, omitting mention of parameters that are left unchanged. Recall that $M$ may have arbitrarily large randomness complexity.

  \begin{enumerate}
	  \item Apply \cref{clm:err-red} (error reduction) to $M$, obtaining $M''$ with error rate $\sigma'' = 1/8q$ and query complexity $q = O(\ell \log(1/\sigma'')) = O(\ell \log(8q))$ (as well as larger randomness complexity).
    \item Apply \cref{clm:rand-red} (randomness reduction) to $M''$, thereby obtaining a new algorithm $M'$ with error rate $\sigma = 2 \sigma'' = \errorratefrac$ and support size $3n \ln \abs{\Sigma}/\sigma'' = \suppsize$ on its distribution over decision trees. \ifsicomp\else\qedhere\fi
  \end{enumerate}
\end{proof}

\end{document}